\renewcommand{\bar}{\overline}
\newcommand*{\Scale}[2][4]{\scalebox{#1}{$#2$}}
\renewcommand\onecolumngrid{% <<<<<<
	\do@columngrid{one}{\@ne}%
	\def\set@footnotewidth{\onecolumngrid}% <<<<<<<<<<<<<<<<
	\def\footnoterule{\kern-6pt\hrule width 1.5in\kern6pt}%
}
\renewcommand\twocolumngrid{% <<<<<<
	\def\footnoterule{% restore rule
		\dimen@\skip\footins\divide\dimen@\thr@@
		\kern-\dimen@\hrule width.5in\kern\dimen@}
	\do@columngrid{mlt}{\tw@}
}%
\newcommand{\Tr}[0]{{\text{Tr}}}
\theoremstyle{plain}
\newtheorem{theorem}{Theorem}%[section] % reset theorem numbering for each chapter
\theoremstyle{definition}
\newtheorem{exmp}[theorem]{Example} % same for example numbers
\newtheorem*{remark}{Remark}
\theoremstyle{plain}
\theoremstyle{plain} 
\newtheorem{lemma}[theorem]{Lemma}
\theoremstyle{plain}
\theoremstyle{plain} 
\newtheorem{Proposition}[theorem]{Proposition}
\begin{document}

	\title{Modular Commutators in Conformal Field Theory}
	
	\author{Yijian Zou}
	\thanks{These authors contributed equally.}
	\affiliation{Stanford Institute for Theoretical Physics, Stanford University, Palo Alto, CA 94305, USA}
    \author{Bowen Shi}
    \thanks{These authors contributed equally.}
    \affiliation{Department of Physics, University of California at San Diego, La Jolla, CA 92093, USA}
    \author{Jonathan Sorce}
	\affiliation{Stanford Institute for Theoretical Physics, Stanford University, Palo Alto, CA 94305, USA}
	\author{Ian T. Lim}
	\affiliation{Department of Physics and Astronomy, University of California, Davis, CA 95616, USA}
	\author{Isaac H. Kim}
	\affiliation{Department of Computer Science, University of California, Davis, CA 95616, USA}
	\date{\today}
	
	\begin{abstract} 
		The modular commutator is a recently discovered multipartite entanglement measure that quantifies the chirality of the underlying many-body quantum state. In this Letter, we derive a universal expression for the modular commutator in conformal field theories in $1+1$ dimensions and discuss its salient features. We show that the modular commutator depends only on the chiral central charge and the conformal cross ratio. We test this formula for a gapped $(2+1)$-dimensional system with a chiral edge, i.e., the quantum Hall state, and observe excellent agreement with numerical simulations. Furthermore, we propose a geometric dual for the modular commutator in certain preferred states of the AdS/CFT correspondence. For these states, we argue that the modular commutator can be obtained from a set of crossing angles between intersecting Ryu-Takayanagi surfaces.
	\end{abstract}
	\maketitle
	
	One of the overarching themes of research in theoretical physics over the past few decades has been the study of entanglement in interacting quantum many-body systems. Calculation of the canonical measure of entanglement --- entanglement entropy --- has played a crucial role in elucidating the physics of topological order~\cite{Kitaev2006,Levin2006}, conformal field theory~\cite{Calabrese2004}, and holographic duality~\cite{Ryu2006}. 
	
	Recently, a new entanglement measure known as the \emph{modular commutator}  was introduced~\cite{Kim2021,Kim2021a}. The modular commutator is defined as $J(A,B,C)_{\rho} := i\text{Tr}(\rho_{ABC}[\ln \rho_{AB}, \ln \rho_{BC}])$ for a tripartite quantum state $\rho_{ABC}$, and unlike other known entanglement measures, it is odd under time reversal. In the context of topologically ordered systems in $2+1$D, the modular commutator was used to extract the chiral central charge of the edge theory~\cite{Kim2021,Kim2021a}.

In this Letter, we derive a universal expression for the modular commutator in conformal field theories in $1+1$D and discuss its physical implications. Let $A, B,$ and $C$ be three contiguous spacetime intervals; see Fig.~\ref{fig:setup}(a). In this setup, we derive a general expression for $J(A,B,C)$ in the vacuum. If the subsystems lie in a single time-slice, the expression simplifies to
	\begin{equation}\label{eq:main_result}
		J(A,B,C)_{|\Omega\rangle}= \frac{\pi c_-}{6}(2\eta - 1), 		
	\end{equation}
where $\eta = \frac{(x_2-x_1) (x_4-x_3)}{(x_3 - x_1)(x_4-x_2)}$ is the cross ratio, $c_- = c_L - c_R$ is the chiral central charge of the CFT, and $|\Omega \rangle$ is the vacuum state. Using a standard conformal mapping from the complex plane to the cylinder,
expressions for the modular commutator for finite systems in the vacuum and infinite systems at finite temperature are also derived. 
	
We primarily discuss two applications. First, we argue that Eq.~\eqref{eq:main_result} can be a useful tool to study the entanglement structure of 2+1D chiral gapped systems at their edges.  Specifically, consider three contiguous intervals $A, B,$ and $C$ at the edge of a disk; see Fig.~\ref{fig:edge}(a). We propose the following formula --- based on an argument utilizing Eq.~\eqref{eq:main_result} --- for the modular commutator:
	\begin{equation}
		J(A,B,C)_{|\psi_{2D}\rangle} = \frac{\pi \mathfrak{c_-}}{3}\eta, \label{eq:jabc_edge}
	\end{equation}
where $\mathfrak{c_-}$ is the chiral central charge of the 2+1D system (defined as a coefficient appearing in the edge energy current~\cite{Kane1997,Read2000,Kitaev2006a}) and $| \psi_{\rm 2D}\rangle$ is the ground state. We test Eq.~\eqref{eq:jabc_edge} numerically for the Chern insulator and $p+ip$ topological superconductor, demonstrating excellent agreement.

When $A, B$ and $C$ cover the entire edge (see Fig.~\ref{fig:edge}(b)), i.e., $\eta=1$, we provide an independent information-theoretic argument for a stronger result:
	\begin{equation}
		J(A,B,C)_{|\widetilde{\psi}_{2D}\rangle} = \frac{\pi}{3} \mathfrak{c_-},  \label{eq:jabc_edge_cover}
	\end{equation}
where  $|\widetilde{\psi}_{2D}\rangle$ is any state which is indistinguishable from $|\psi_{2D}\rangle$ in the bulk region. We emphasize the generality of Eq.~\eqref{eq:jabc_edge_cover} in two directions. First, this equation holds even if there is an excitation localized at the edge. Second, the argument continues to hold even if the shape of the edge is deformed continuously. The underlying argument --- based on the properties of modular commutator~\cite{Kim2021,Kim2021a} and techniques from the entanglement bootstrap~\cite{Shi2020} --- reveals that the robustness of this result originates from the entanglement area law of the bulk. 

Second, we propose a holographic interpretation of Eq.~\eqref{eq:main_result}. Our interpretation rests on an observation that Eq.~\eqref{eq:main_result} can be recast as 
	\begin{equation}
		J(A,B,C)_{|\Omega\rangle} = \frac{\pi c_-}{6} \cos \theta,
		\label{eq:Jabc_holography}
	\end{equation}
where $\theta$ is the crossing angle of the two geodesics (i.e., two Ryu-Takayanagi surfaces~\cite{Ryu2006}) in $\mathrm{AdS}_3$, each anchored at the boundaries of $AB$ and $BC$, respectively. We verify this correspondence at both zero and finite temperature and propose a generalization to any state whose bulk geometry has a moment of time symmetry~\cite{engelhardt2014extremal}.\footnote{A spacelike slice is a \textit{moment of time symmetry} if its extrinsic curvature tensor vanishes; this is like an infinitesimal version of time-reflection symmetry. In this case, Ryu-Takayanagi surfaces of boundary regions on the slice lie entirely within the slice.}

	\begin{figure}[h]
		\centering
		\includegraphics[scale=0.93]{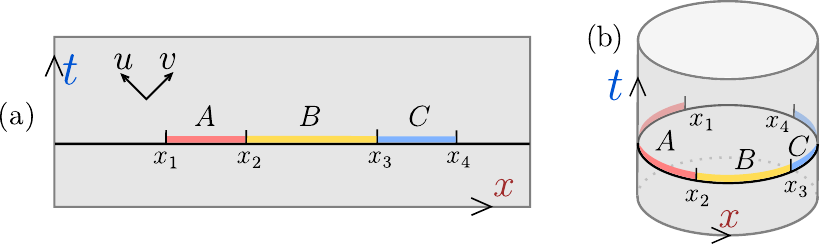}
		\caption{(a) Three contiguous intervals $A, B,$ and $C$, on a single-time slice.
			(b)  Contiguous intervals on a circle $S^1$ with circumference $L$.  }
		\label{fig:setup}
	\end{figure}

Our approach to derive Eq.~\eqref{eq:main_result} will be geometric in nature. The main advantage of this derivation is that it makes the generalization of Eq.~\eqref{eq:main_result} to arbitrary spacetime intervals straightforward. Alternative derivations shall be discussed in the Supplementary Material (SM) as well.
	
\emph{Geometric derivation---} Our derivation of Eq.~\eqref{eq:main_result} is based on the following two observations. First, the modular commutator $J(A,B,C)$ can be viewed as the linear response of the $BC$ entanglement entropy under the $AB$ modular flow~\cite{Kim2021a,chen2018modular,faulkner2019modular}. Second, for a 1+1D CFT, the modular flow for a finite interval generates a special conformal transformation that keeps the two ends of the interval fixed~\cite{Casini2011,BW1975,BW1976}. Thus, we will compute the modular commutator $J(A,B,C)$ by the change of the entropy $S_{BC}$ from the infinitesimal conformal transformation generated by the modular flow corresponding to $AB$.
	
The \emph{modular flow} of an operator $O$ with respect to a state $\rho$ and a subsystem $A$ is defined as $O(s) := \rho_A^{is} O \rho_A^{-is}$ for $s \in \mathbb{R}$. We consider the action of the modular flow associated with the interval $AB$ in the vacuum. Define the following one-parameter family of density matrices: $\rho_{ABC}(s) := \rho^{is}_{AB} \rho_{ABC} \rho^{-is}_{AB}$. The response of the von Neumann entropy of $\rho_{BC}(s) = \Tr_{A}(\rho_{ABC}(s))$ under this flow is related to the modular commutator by~\cite{Kim2021a}: 
	\begin{equation}
		\left.\frac{dS(\rho_{BC}(s))}{ds}\right|_{s=0}= -J(A,B,C)_{\rho},
	\end{equation}
with $S(\rho) := -\Tr(\rho\ln\rho).$

In quantum field theory, the observables restricted to the interval $AB$ completely determine the observables in the full \textit{causal diamond} $D(AB),$ i.e., the domain of dependence of $AB$. In 1+1D CFT, the modular flow associated to a spacelike interval in the vacuum is a local transformation of observables lying within its causal diamond \cite{Casini2011}. The relevant vector fields are illustrated in Fig.~\ref{fig:causal-diamond}. 
	\begin{figure}[h]
	\centering
	
	\begin{tikzpicture}
	\node [] at (0,0) {\includegraphics[scale=1.21]{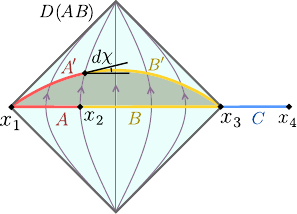}};
	
	\begin{scope}[xshift=4.35 cm,yshift=0.0 cm]
		\draw [fill=purple!60!blue!15!white] (-2.2,-0.86) rectangle (2.2,0.97);
	\end{scope}
	\begin{scope}[yshift=-0.17 cm]
		\begin{scope}[xshift=4.35 cm,yshift=0.85 cm]
			\node[]  at (0,0) {\footnotesize{Within the causal diamond:}};
		\end{scope}	
		
		\begin{scope}[xshift=4.35 cm,yshift=0.0 cm]
			\node[]  at (0,0) {$\Scale[0.98]{\left(\begin{array}{l}
					\Scale[1.03]{\frac{du}{ds}}\\[3.7pt]
					\Scale[1.03]{\frac{dv}{ds}}
				\end{array}
				\right)\Scale[0.9]{=2\pi} \left(\begin{array}{l}
					\Scale[1.03]{\frac{(u-u_1)(u_3-u)}{u_{1}-u_{3}}}\\[3.7pt]
					\Scale[1.03]{\frac{(v-v_1)(v_3-v)}{v_{3}-v_{1}}}
				\end{array}\right)}$};
		\end{scope}
	\end{scope}
	\end{tikzpicture}
	\caption{Modular flow in the interior of the causal diamond $D(AB)$ and the associated vector field. Under an infinitesimal flow by a parameter $ds$, interval $AB$ becomes $A'B'$ and a boost angle $d\chi$ develops at the left end of $B'$. }
	\label{fig:causal-diamond}
\end{figure}

Now we can use the following regulated form of the single-interval entanglement entropy for chiral CFTs in 1+1D~\cite{Iqbal2016}:
	\begin{equation}
		\label{eq:SBC}
		S_{BC} = \frac{c_L}{12}\ln\frac{(v_4-v_2)^2}{\epsilon_{v2}\epsilon_{v4}} + \frac{c_R}{12}\ln\frac{(u_4-u_2)^2}{\epsilon_{u2}\epsilon_{u4}},
	\end{equation}
where $u=t-x$ and $v=t+x$ are light-cone coordinates, and $\epsilon_{u(v)2(4)}$ denotes the UV cutoffs in the $u$ and $v$ directions at the endpoints $x_2$ and $x_4$.
	
Note that the point $x_4$ is unaffected by the modular flow with respect to $AB$, because it is outside $D(AB)$. Thus, $u_4,v_4$ and $\epsilon_{u4},\epsilon_{v4}$ remain unchanged; the change only occurs at $x_2$. Importantly, the cutoffs $\epsilon_{u(v)2(4)}$ transform nontrivially under local diffeomorphisms. They are rescaled by the local boost angle (see Fig.~\ref{fig:causal-diamond}),
	\begin{equation}\label{eq:diff-cutoffs}
		d \ln \epsilon_{v2} = -d\ln \epsilon_{u2} = d \chi, 
	\end{equation}
where $d \chi = \frac{2\pi (x_{23}-x_{12})}{x_{13}} ds$ is the boost angle at $x_2$. Here we use the convention $x_{ij} = x_j - x_i$. Differentiating Eq.~\eqref{eq:SBC} and using Eq.~\eqref{eq:diff-cutoffs} we obtain
	\begin{equation}
		\label{eq:mainJ}
		J(A,B,C)_{\vert \Omega \rangle}=  \frac{\pi c_{-}}{6}(2\eta-1),
	\end{equation}
where the chiral central charge is $c_{-}=c_L-c_R$ and the cross ratio is $\eta = \frac{x_{12}x_{34}}{x_{13}x_{24}}$.  Generalization of Eq.~\eqref{eq:mainJ} to general Cauchy surfaces is straightforward, and can be used to determine $c_L$ and $c_R$ individually in terms of the modular commutator; see the SM for details.
	
Eq.~\eqref{eq:mainJ} for $J(A,B,C)_{|\Omega\rangle}$ possesses a set of important properties, summarized below. Firstly, $J$ is odd under time reversal, which exchanges $c_L$ and $c_R$. This is in contrast with other entanglement measures such as the entanglement entropy, which are even under time reversal. Secondly, $J$ is odd under the map $\eta \rightarrow 1-\eta$. In particular, $J=0$ at $\eta = 1/2$ where the modular commutator changes sign. Thirdly, as the length of one interval gets small, $J$ does not vanish but takes on universal values. As $x_1\rightarrow x_2$ or $x_3\rightarrow x_4$, $\eta\rightarrow 0$ and $J\rightarrow -\pi c_{-}/6$, and similarly, as $x_2\rightarrow x_3$, $\eta\rightarrow 1$ and $J\rightarrow \pi c_{-}/6$. In fact, we shall later see that the universal difference $J(\eta=1)-J(\eta=0) = \pi c_{-}/3$ is exactly the modular commutator for 2D chiral topological order. Lastly, if $c_-\ne 0$, we have $J=\pi c_{-}/6 \ne 0$ when $ABC$ is the entire circle. This distinguishes $|\Omega\rangle$ from any pure state on a Hilbert space factorized into a tensor product on spatial regions, as the latter necessarily has $J=0$. Thus, $c_{-}\neq 0$ is incompatible with any lattice regularization (see also \cite{Hellerman2021} for an alternative argument). 

More generally, one can consider a thermal state at inverse temperature $\beta$ on a circle of circumference $L$, denoted as $\rho^{(\beta;L)}$. Through standard conformal mappings from planes to cylinders \cite{cardy_entanglement_2016}, one can show that the modular commutator $J(A,B,C)$ remains to be in the form in Eq.~(\ref{eq:main_result}) in two limits $\beta/L \rightarrow 0, \infty,$ with
	%given that 
	the cross ratio $\eta$ replaced by $\eta_{\rm eff}^{(\beta;L)}$:
	\begin{equation}\label{eq:eta-eff}
		\eta_{\rm eff}^{(\beta; L)}= \left\{ \begin{array}{ll}
			\Scale[1.22]{\frac{\sin(\pi x_{12}/L)\sin(\pi x_{34}/L)}{\sin(\pi x_{13}/L)\sin(\pi x_{24}/L)}}, & \Scale[0.9]{\beta/L \rightarrow \infty,} \\[6.7pt]
			\Scale[1.2]{\frac{\sinh(\pi x_{12}/\beta)\sinh(\pi x_{34}/\beta)}{\sinh(\pi x_{13}/\beta)\sinh(\pi x_{24}/\beta)}}, & \Scale[0.9]{L/\beta \rightarrow \infty}.
		\end{array}\right.
	\end{equation}

\emph{Chiral thermal state---} The modular commutator can be nonzero even for non-chiral CFTs, provided that the temperatures for the left- and the right-moving modes are unequal. We refer to such states as \emph{chiral thermal states} \cite{Bernard2012,Tu2013,Bhaseen2015}: 
	\begin{equation}
		\rho^{(\beta_L,\beta_R;L)}  = \frac{1}{Z} e^{-\beta_L H_L -\beta_R H_R}.
	\end{equation}
Here $H_L$ and $H_R$ are the Hamiltonians of the left- and right-moving sectors, respectively. Similarly, $(\beta_L, \beta_R)$ represent inverse temperatures for the respective modes.

There are a few reasons to study chiral thermal states. First, a chiral thermal state can be obtained by applying the Lorentzian boost to a thermal state. Second, there are concrete lattice models whose underlying state at low temperature can be well-described by a chiral thermal state. For instance, it was noted that the reduced density matrix near the edge of a chiral topological order in $2+1$D can be represented by a chiral thermal state with $(\beta_L,\beta_R)=(\infty,{\textrm{finite}})$~\cite{Tu2013}. Third, as we show in the SM, one can sometimes explicitly construct chiral thermal states in lattice models, making the numerical verification tractable.

From Eq.~\eqref{eq:eta-eff}, for a general chiral thermal state $\rho^{(\beta_L,\beta_R;L)}$ we have
	\begin{equation}
		\label{eq:J_chiral_thermal}
		J(A,B,C)_{\rho^{(\beta_L,\beta_R;L)}} = \frac{\pi}{3}c( \eta_{\rm eff}^{(\beta_L;L)} - \eta_{\rm eff}^{(\beta_R;L)}),
	\end{equation}
where $c=c_R=c_L$. We test Eq.~\eqref{eq:J_chiral_thermal} for the lattice chiral thermal states and find excellent agreement (see SM for details). 

\emph{Edge of 2+1D chiral topological order ---}
The chiral thermal state can provide insights into the edges of 2+1D gapped systems with non-zero chiral central charge, denoted as $\mathfrak{c_-}$~\cite{Kane1997,Read2000,Kitaev2006a,Tu2013}. (We choose a different font to distinguish two concepts: the chiral central charge $\mathfrak{c_-}$ of a 2+1D gapped phase versus $c_-$ for a 1+1D chiral CFT.) 

\begin{figure}[h]
	\centering
	\includegraphics[width=0.94\linewidth]{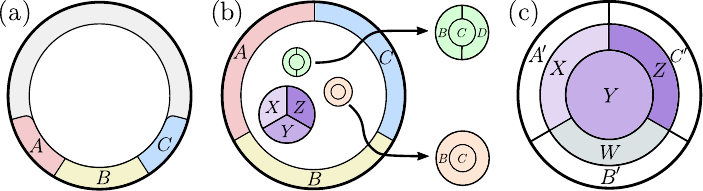}
	\caption{A 2+1D gapped chiral system on a disk and various choices of subsystems. The sizes (width) for subsystems within the bulk (adjacent to the edge) are large compared to the bulk correlation length. }\label{fig:edge}
\end{figure}

Consider a ground state $ |\psi_{\rm 2D} \rangle$ on a disk for concreteness; see Fig.~\ref{fig:edge}. For an annulus which covers the entire edge, e.g., the annulus in Fig.~\ref{fig:edge}(a), the reduced density matrix of  $|\psi_{\rm 2D} \rangle$, can be viewed as a 1+1D system. If the edge is completely chiral (that is when, e.g., it only has left-moving modes but not right-moving modes), it is expected to be described by a chiral thermal state whose $c$ equals $\mathfrak{c_-}$~\cite{Tu2013}.  

Then by applying Eq.~\eqref{eq:J_chiral_thermal}
to the interval choice in Fig.~\ref{fig:edge}(a) and taking $\beta_L=\infty$, $\beta_R \ll L_A,L_B,L_C$ (the lengths of the regions), 
we arrive at a prediction 
\begin{equation}\label{eq:eta-2D}
	J(A,B,C)_{\vert \psi_{\rm 2D}\rangle } = \frac{\pi}{3} \mathfrak{c_-} \eta.
\end{equation}
We have tested this formula numerically for a Chern insulator and observed excellent agreement; see Fig.~\ref{fig:Chern-insulator-2}. We propose this formula to hold for general translation invariant topologically ordered systems in 2+1D. 

\begin{figure}
		\centering
		\begin{tikzpicture}
			\node[] (A) at (0,0) {\includegraphics[width=0.48\linewidth]{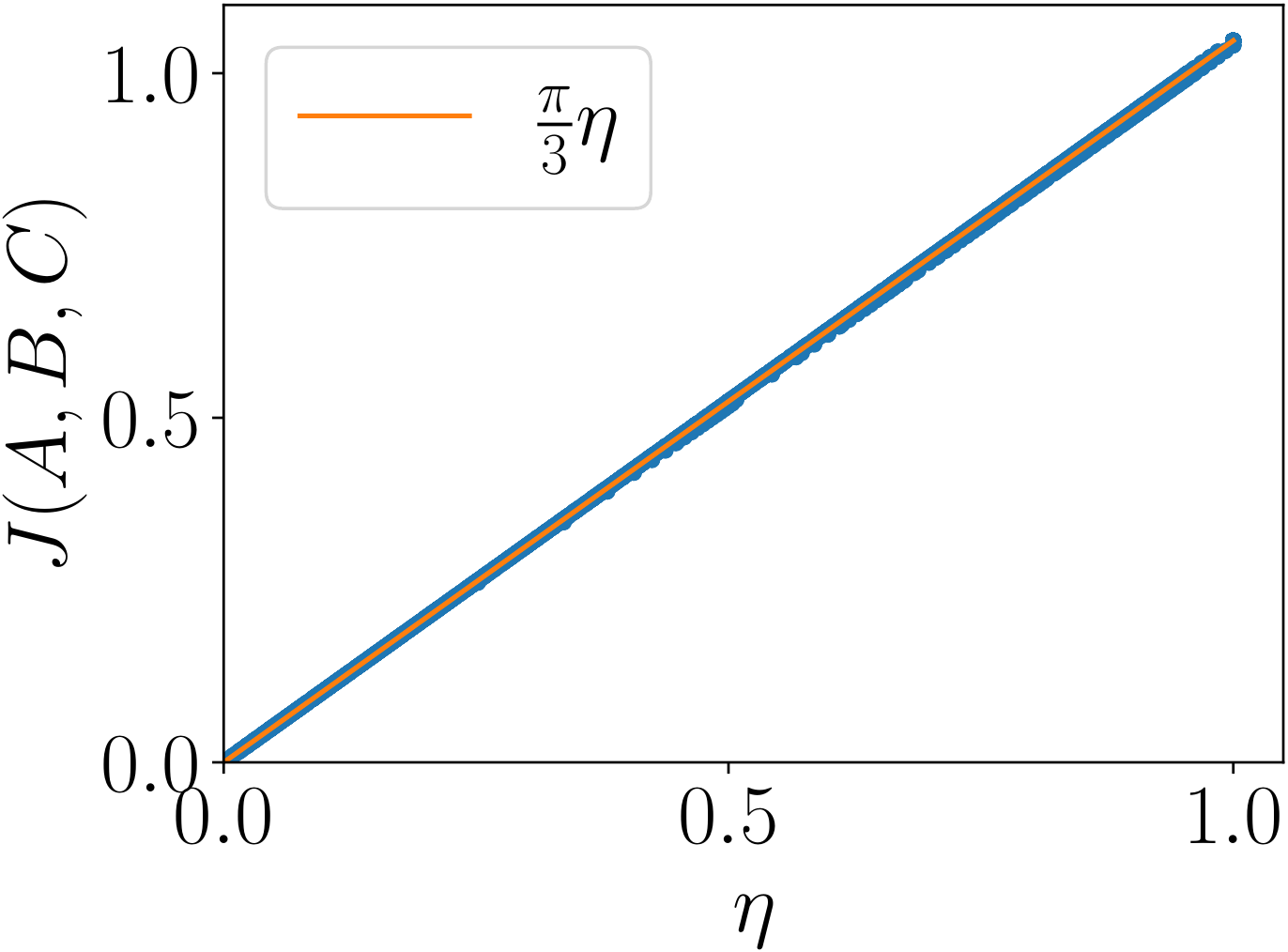}};
			\node[] (B) at (4.3,0) {\includegraphics[width=0.48\linewidth]{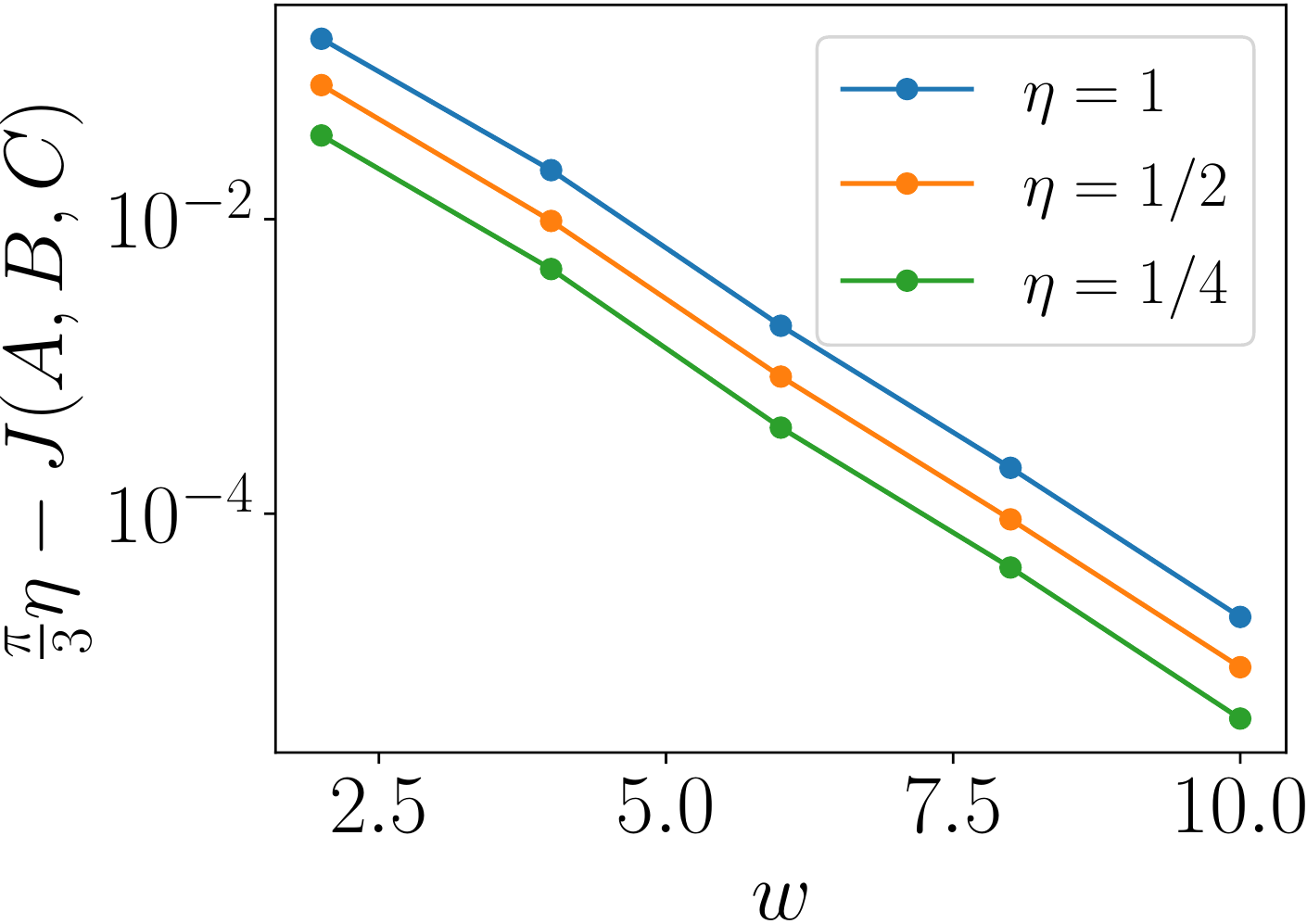}};
		\end{tikzpicture}
		
		\caption{$J(A,B,C)$ versus $\eta$ for the Chern insulator, which is realized by filling the lowest band of the Hofstadter model with flux $\pi/2$. We use a square lattice on a cylinder with circumference $L=144$ and height $W=32$. $A,B,C$ are rectangular strips on the boundary with length $L_A,L_B,L_C$ and width $w$. Left: We fix $w=10$ and vary the lengths $L_A,L_B,L_C$. Right: We choose several $(L_A,L_B,L_C)$ and vary $w$. The three choices $(L_A,L_B,L_C) = (48,48,48), (36,36,36), (24,48,24)$ correspond to $\eta=1,1/2,1/4$, respectively.}
	\label{fig:Chern-insulator-2}
\end{figure}

\emph{Topological argument---} 
When the union of intervals $A$, $B$, and $C$ is the entire annulus, as shown in Fig.~\ref{fig:edge}(b), Eq.~\eqref{eq:eta-2D} becomes $J=\frac{\pi }{3} \mathfrak{c_-}$.
Here we present an entirely different argument for this formula, based on the entanglement area law of the 2+1D bulk~\cite{Kitaev2006,Levin2006}. Our argument reveals an extra degree of robustness of this expression:
\begin{equation}\label{eq:eta-2D-entire}
	J(A,B,C)_{\vert \widetilde{\psi}_{\rm 2D}\rangle } = \frac{\pi}{3} \mathfrak{c_-} \quad \textrm{for Fig.~\ref{fig:edge}(b)}.
\end{equation}
We show that Eq.~\eqref{eq:eta-2D-entire} holds for any state $|\widetilde{\psi}_{\rm 2D}\rangle$ locally indistinguishable from the ground state within the bulk. Note that we need not assume $|\widetilde{\psi}_{\rm 2D}\rangle$ to be the ground state; our argument applies even if there are edge excitations, as long as the global state is pure. 

The key observation that leads to Eq.~\eqref{eq:eta-2D-entire} is an equivalence we will establish between the edge and the bulk modular commutator for the set of subsystems shown in  Fig.~\ref{fig:edge}(b):
\begin{equation}\label{eq:2Js=}
	J(A,B,C)_{|\widetilde{\psi}_{\rm 2D}\rangle}=-J(X,Y,Z)_{|\widetilde{\psi}_{\rm 2D}\rangle }.
\end{equation}
Note that the regions $A, B,$ and $C$ lie at the edge while the regions $X, Y,$ and $Z$ lie entirely in the bulk. 
%Thus, $J(A,B,C)_{|\widetilde{\psi}_{\rm 2D}\rangle}$ is the modular commutator at the edge and $J(X,Y,Z)_{|\widetilde{\psi}_{\rm 2D}\rangle}$ is the modular commutator in the bulk. 
Once this relation is established, one can use the formula for the bulk modular commutator~\cite{Kim2021}, i.e., $J(X,Y,Z)_{|\widetilde{\psi}_{\rm 2D}\rangle} = -\frac{\pi}{3}\mathfrak{c_-}$ to complete the derivation. 

The equivalence of the two modular commutators directly follows from Section VI of Ref.~\cite{Kim2021a}, as we explain below. (See SM for a more detailed explanation.) First of all, the state $| \widetilde{\psi}_{\rm 2D}\rangle$, being indistinguishable from the ground state in the bulk, satisfies the axioms of entanglement bootstrap~\cite{Shi2020}. Of particular importance to us is the axiom \textbf{A1} in Ref.~\cite{Shi2020}, which holds for local disk-like regions away from the edge; it says $(S_{BC}+S_{CD}-S_B-S_D)_{| \psi_{\rm 2D}\rangle}=0$ for the green disk $BCD$ shown in Fig.~\ref{fig:edge}(b), where $|\psi_{\rm 2D}\rangle$ is the ground state. This axiom, applied to the bulk disk $XYZW$ of Fig.~\ref{fig:edge}(c), gives $I(A':Y|X)=I(C':Y|Z)=0$. It then follows that, for state $|\widetilde{\psi}_{\rm 2D}\rangle$:
\begin{equation}
	J(X,Y,Z)= J(A'X,Y,C'Z) =-J(A'X,WB',C'Z). \nonumber
\end{equation}
Letting $A=A'X$, $B=B'W$, and $C=C'Z$, we conclude Eq.~\eqref{eq:2Js=}. 

Let us emphasize the generality of the argument above. Note that nowhere in the derivation did we use any symmetry (e.g., translation or rotation symmetry) nor did we use any condition of the state in the vicinity of the edge. For instance, even in the presence of strong disorder, even though the conformal symmetry does not hold --- not even approximately --- formula \eqref{eq:eta-2D-entire} still holds; this is numerically verified for integer quantum Hall states, see the SM. Moreover, the argument holds as long as $\vert \widetilde{\psi}_{\rm 2D}\rangle=U_{\rm edge} \vert \psi_{2D}\rangle$, where $U_{\rm edge}$ is \emph{any} unitary operator along the edge which is thin compared to the width of the subsystems; specifically, $U_{\rm edge}$ should be supported within the annulus $A'B'C'$ for the choice of $ABC$ in Fig.~\ref{fig:edge}(c). (Under a plausible assumption, the unitarity assumption can be dropped. See SM for the detail.)

\emph{Holographic interpretation---} 
In the AdS/CFT correspondence~\cite{Maldacena1999}, entanglement quantities of the boundary CFT are mapped to geometric quantities in the bulk of an asymptotic AdS space. For example, the Ryu-Takayanagi (RT) formula~\cite{Ryu2006} implies that in ordinary non-chiral AdS/CFT, the entanglement entropy of a  boundary region $A$ in a time-symmetric state is given by the minimal length of the bulk geodesic $\gamma_A$ (also known as the RT surface) homologous to the region. Some examples are shown in Fig.~\ref{fig:AdS}.

\begin{figure}[h]
	\centering
	\includegraphics[width=0.98\linewidth]{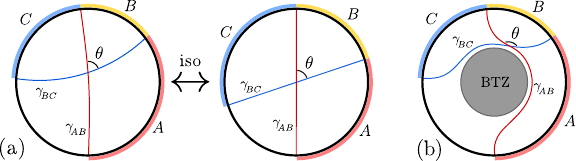}
	\caption{Verified cases of the holographic conjecture: (a) At zero temperature. Each disk is a Poincar\'e disk and the two are related by an isometry. (b) At a finite (high) temperature such that $\beta \ll L$. }
	\label{fig:AdS}
\end{figure}

Here we propose to extend the holographic dictionary to the modular commutator for chiral realizations of AdS/CFT, e.g. \cite{Li2008}. In states whose bulk geometries are locally AdS$_3$\footnote{The restriction of the conjecture to locally AdS$_3$ spacetimes is because, in chiral gravity, these are the spacetimes where bulk geodesics are used to compute boundary modular Hamiltonians~\cite{castro2014holographic}.} with a moment of time symmetry, we propose: 
\begin{equation}\label{eq:general_J_AdS}
	J(A,B,C) = \frac{ \pi c_-}{6} \sum_i \cos\theta_i,
\end{equation}
where $\{\theta_i \}$ is the set of crossing angles of the RT surfaces, i.e., geodesics $\gamma_{AB}$ and $\gamma_{BC}.$ Each $\theta_i$ is chosen such that $\gamma_{AB}$, seen inwardly, lies at the right side of the angle; see Fig.~\ref{fig:AdS} for examples. In general, $AB$ and $BC$ may have multiple connected components; see SM for the relevant discussion.

We can verify the conjecture for a few simple cases shown in Fig.~\ref{fig:AdS}. The vacuum state of chiral AdS$_3$/CFT$_2$ is described by the ordinary vacuum AdS$_3$ spacetime~\cite{Li2008}. On the $t=0$ slice of this spacetime, we can apply a bulk isometry to place the intersection point of any two geodesics at the center of the Poincar\'e  disk. Then the two geodesics become straight lines with a crossing angle $\theta$. Since the cross ratio $\eta$ --- given by $\eta = (x_{12}x_{34})/(x_{13}x_{24})$ --- is preserved under this isometry, the identity $2\eta-1 = \cos\theta$ follows from simple trigonometry. Thus, we arrive at
\begin{equation}
	J(A,B,C)_{\vert \Omega\rangle} = \frac{\pi c_-}{6} \cos \theta.
\end{equation}
At high temperatures $\beta \ll L$, thermal states in CFT are dual to BTZ black holes~\cite{BTZ1992} in the bulk; see Fig.~\ref{fig:AdS}(b). An analogous derivation applies because the BTZ black hole can be viewed as a quotient of global $\mathrm{AdS}_3$. The result confirms our conjecture. (See SM for details.) 

In the semiclassical limit of AdS/CFT, a boundary modular Hamiltonian $K$ is dual to a bulk geometric operator which, in non-chiral AdS/CFT, is proportional to the area of the RT surface~\cite{jafferis2016gravity, Jafferis2016}. In chiral AdS/CFT, the operator has additional terms~\cite{castro2014holographic}; we will call the full operator $F.$
The modular commutator of contiguous intervals can be written in terms of commutators of $F$ operators. This commutator is zero in the vacuum for a single time-slice if the chiral central charge is zero \cite{Kaplan2022}, which matches Eq.~\eqref{eq:main_result}. However, for chiral theories, Eq.~\eqref{eq:general_J_AdS} implies the uncertainty relation
\begin{equation}\label{eq:F-main}
	\Delta F(AB) \cdot \Delta F(BC) \geq \frac{\pi c_- }{12}|\cos \theta|.
\end{equation}
Thus, the uncertainty in the geometric operator $F$ grows parametrically with the chiral central charge.

\emph{Discussion---} 
In this Letter, we computed the modular commutator~\cite{Kim2021,Kim2021a} in 1+1D CFTs, arriving at a simple formula Eq.~\eqref{eq:main_result} and discussing its applications in condensed matter systems and holography. For future work, it will be interesting to verify our conjecture in AdS/CFT to more general setups, e.g., disconnected intervals, states whose bulk geometries have no moment of time symmetry, and states with bulk quantum matter. Another interesting open problem is how our conjecture generalizes to higher dimensions. On the condensed matter side, it would be interesting to understand how Eqs.~\eqref{eq:eta-2D} and \eqref{eq:eta-2D-entire} generalize when the sector of the chiral edge is modified by an anyon in the bulk. 

\emph{Note added---} After posting this manuscript, we notice a related work \cite{fan2022entanglement} which has some overlap with this work. 

\emph{Acknowledgments:---} 
We thank Molly Kaplan, Nima Lashkari, Ruihua Fan, John McGreevy, Don Marolf, Mukund Rangamani, Yiming Chen and Dan Ranard for helpful discussions. B.S. is supported by the University of California Laboratory Fees Research Program, grant LFR-20-653926, and the Simons Collaboration on Ultra-Quantum Matter, grant 652264 from the Simons Foundation. Y.Z. is supported by the Q-FARM fellowship at Stanford University. I.L. is supported by a UC Davis Graduate Program Fellowship. J.S. is supported by AFOSR award FA9550-19-1-0369, CIFAR, DOE award DE-SC0019380 and the Simons Foundation.

\bibliography{bib}

%apsrev4-2.bst 2019-01-14 (MD) hand-edited version of apsrev4-1.bst
%Control: key (0)
%Control: author (8) initials jnrlst
%Control: editor formatted (1) identically to author
%Control: production of article title (0) allowed
%Control: page (0) single
%Control: year (1) truncated
%Control: production of eprint (0) enabled
\begin{thebibliography}{57}%
\makeatletter
\providecommand \@ifxundefined [1]{%
 \@ifx{#1\undefined}
}%
\providecommand \@ifnum [1]{%
 \ifnum #1\expandafter \@firstoftwo
 \else \expandafter \@secondoftwo
 \fi
}%
\providecommand \@ifx [1]{%
 \ifx #1\expandafter \@firstoftwo
 \else \expandafter \@secondoftwo
 \fi
}%
\providecommand \natexlab [1]{#1}%
\providecommand \enquote  [1]{``#1''}%
\providecommand \bibnamefont  [1]{#1}%
\providecommand \bibfnamefont [1]{#1}%
\providecommand \citenamefont [1]{#1}%
\providecommand \href@noop [0]{\@secondoftwo}%
\providecommand \href [0]{\begingroup \@sanitize@url \@href}%
\providecommand \@href[1]{\@@startlink{#1}\@@href}%
\providecommand \@@href[1]{\endgroup#1\@@endlink}%
\providecommand \@sanitize@url [0]{\catcode `\\12\catcode `\$12\catcode
  `\&12\catcode `\#12\catcode `\^12\catcode `\_12\catcode `\%12\relax}%
\providecommand \@@startlink[1]{}%
\providecommand \@@endlink[0]{}%
\providecommand \url  [0]{\begingroup\@sanitize@url \@url }%
\providecommand \@url [1]{\endgroup\@href {#1}{\urlprefix }}%
\providecommand \urlprefix  [0]{URL }%
\providecommand \Eprint [0]{\href }%
\providecommand \doibase [0]{https://doi.org/}%
\providecommand \selectlanguage [0]{\@gobble}%
\providecommand \bibinfo  [0]{\@secondoftwo}%
\providecommand \bibfield  [0]{\@secondoftwo}%
\providecommand \translation [1]{[#1]}%
\providecommand \BibitemOpen [0]{}%
\providecommand \bibitemStop [0]{}%
\providecommand \bibitemNoStop [0]{.\EOS\space}%
\providecommand \EOS [0]{\spacefactor3000\relax}%
\providecommand \BibitemShut  [1]{\csname bibitem#1\endcsname}%
\let\auto@bib@innerbib\@empty
%</preamble>
\bibitem [{\citenamefont {Kitaev}\ and\ \citenamefont
  {Preskill}(2006)}]{Kitaev2006}%
  \BibitemOpen
  \bibfield  {author} {\bibinfo {author} {\bibfnamefont {A.}~\bibnamefont
  {Kitaev}}\ and\ \bibinfo {author} {\bibfnamefont {J.}~\bibnamefont
  {Preskill}},\ }\bibfield  {title} {\bibinfo {title} {Topological entanglement
  entropy},\ }\href {https://doi.org/10.1103/PhysRevLett.96.110404} {\bibfield
  {journal} {\bibinfo  {journal} {Phys. Rev. Lett.}\ }\textbf {\bibinfo
  {volume} {96}},\ \bibinfo {pages} {110404} (\bibinfo {year}
  {2006})}\BibitemShut {NoStop}%
\bibitem [{\citenamefont {Levin}\ and\ \citenamefont {Wen}(2006)}]{Levin2006}%
  \BibitemOpen
  \bibfield  {author} {\bibinfo {author} {\bibfnamefont {M.}~\bibnamefont
  {Levin}}\ and\ \bibinfo {author} {\bibfnamefont {X.-G.}\ \bibnamefont
  {Wen}},\ }\bibfield  {title} {\bibinfo {title} {Detecting topological order
  in a ground state wave function},\ }\href
  {https://doi.org/10.1103/PhysRevLett.96.110405} {\bibfield  {journal}
  {\bibinfo  {journal} {Phys. Rev. Lett.}\ }\textbf {\bibinfo {volume} {96}},\
  \bibinfo {pages} {110405} (\bibinfo {year} {2006})}\BibitemShut {NoStop}%
\bibitem [{\citenamefont {Calabrese}\ and\ \citenamefont
  {Cardy}(2004)}]{Calabrese2004}%
  \BibitemOpen
  \bibfield  {author} {\bibinfo {author} {\bibfnamefont {P.}~\bibnamefont
  {Calabrese}}\ and\ \bibinfo {author} {\bibfnamefont {J.}~\bibnamefont
  {Cardy}},\ }\bibfield  {title} {\bibinfo {title} {Entanglement entropy and
  quantum field theory},\ }\href
  {https://doi.org/10.1088/1742-5468/2004/06/p06002} {\bibfield  {journal}
  {\bibinfo  {journal} {Journal of Statistical Mechanics: Theory and
  Experiment}\ }\textbf {\bibinfo {volume} {2004}},\ \bibinfo {pages} {P06002}
  (\bibinfo {year} {2004})}\BibitemShut {NoStop}%
\bibitem [{\citenamefont {Ryu}\ and\ \citenamefont
  {Takayanagi}(2006)}]{Ryu2006}%
  \BibitemOpen
  \bibfield  {author} {\bibinfo {author} {\bibfnamefont {S.}~\bibnamefont
  {Ryu}}\ and\ \bibinfo {author} {\bibfnamefont {T.}~\bibnamefont
  {Takayanagi}},\ }\bibfield  {title} {\bibinfo {title} {Holographic derivation
  of entanglement entropy from the anti--de sitter space/conformal field theory
  correspondence},\ }\href {https://doi.org/10.1103/PhysRevLett.96.181602}
  {\bibfield  {journal} {\bibinfo  {journal} {Phys. Rev. Lett.}\ }\textbf
  {\bibinfo {volume} {96}},\ \bibinfo {pages} {181602} (\bibinfo {year}
  {2006})}\BibitemShut {NoStop}%
\bibitem [{\citenamefont {{Kim}}\ \emph {et~al.}(2022)\citenamefont {{Kim}},
  \citenamefont {{Shi}}, \citenamefont {{Kato}},\ and\ \citenamefont
  {{Albert}}}]{Kim2021}%
  \BibitemOpen
  \bibfield  {author} {\bibinfo {author} {\bibfnamefont {I.~H.}\ \bibnamefont
  {{Kim}}}, \bibinfo {author} {\bibfnamefont {B.}~\bibnamefont {{Shi}}},
  \bibinfo {author} {\bibfnamefont {K.}~\bibnamefont {{Kato}}},\ and\ \bibinfo
  {author} {\bibfnamefont {V.~V.}\ \bibnamefont {{Albert}}},\ }\bibfield
  {title} {\bibinfo {title} {{Chiral Central Charge from a Single Bulk Wave
  Function}},\ }\href {https://doi.org/10.1103/PhysRevLett.128.176402}
  {\bibfield  {journal} {\bibinfo  {journal} {\prl}\ }\textbf {\bibinfo
  {volume} {128}},\ \bibinfo {eid} {176402} (\bibinfo {year} {2022})},\ \Eprint
  {https://arxiv.org/abs/2110.06932} {arXiv:2110.06932 [quant-ph]} \BibitemShut
  {NoStop}%
\bibitem [{\citenamefont {Kim}\ \emph {et~al.}(2021)\citenamefont {Kim},
  \citenamefont {Shi}, \citenamefont {Kato},\ and\ \citenamefont
  {Albert}}]{Kim2021a}%
  \BibitemOpen
  \bibfield  {author} {\bibinfo {author} {\bibfnamefont {I.~H.}\ \bibnamefont
  {Kim}}, \bibinfo {author} {\bibfnamefont {B.}~\bibnamefont {Shi}}, \bibinfo
  {author} {\bibfnamefont {K.}~\bibnamefont {Kato}},\ and\ \bibinfo {author}
  {\bibfnamefont {V.~V.}\ \bibnamefont {Albert}},\ }\href@noop {} {\bibinfo
  {title} {Modular commutator in gapped quantum many-body systems}} (\bibinfo
  {year} {2021}),\ \Eprint {https://arxiv.org/abs/arXiv:2110.10400}
  {arXiv:2110.10400} \BibitemShut {NoStop}%
\bibitem [{\citenamefont {{Kane}}\ and\ \citenamefont
  {{Fisher}}(1997)}]{Kane1997}%
  \BibitemOpen
  \bibfield  {author} {\bibinfo {author} {\bibfnamefont {C.~L.}\ \bibnamefont
  {{Kane}}}\ and\ \bibinfo {author} {\bibfnamefont {M.~P.~A.}\ \bibnamefont
  {{Fisher}}},\ }\bibfield  {title} {\bibinfo {title} {{Quantized thermal
  transport in the fractional quantum Hall effect}},\ }\href
  {https://doi.org/10.1103/PhysRevB.55.15832} {\bibfield  {journal} {\bibinfo
  {journal} {\prb}\ }\textbf {\bibinfo {volume} {55}},\ \bibinfo {pages}
  {15832} (\bibinfo {year} {1997})},\ \Eprint
  {https://arxiv.org/abs/cond-mat/9603118} {arXiv:cond-mat/9603118 [cond-mat]}
  \BibitemShut {NoStop}%
\bibitem [{\citenamefont {Read}\ and\ \citenamefont {Green}(2000)}]{Read2000}%
  \BibitemOpen
  \bibfield  {author} {\bibinfo {author} {\bibfnamefont {N.}~\bibnamefont
  {Read}}\ and\ \bibinfo {author} {\bibfnamefont {D.}~\bibnamefont {Green}},\
  }\bibfield  {title} {\bibinfo {title} {Paired states of fermions in two
  dimensions with breaking of parity and time-reversal symmetries and the
  fractional quantum hall effect},\ }\href
  {https://doi.org/10.1103/PhysRevB.61.10267} {\bibfield  {journal} {\bibinfo
  {journal} {Phys. Rev. B}\ }\textbf {\bibinfo {volume} {61}},\ \bibinfo
  {pages} {10267} (\bibinfo {year} {2000})}\BibitemShut {NoStop}%
\bibitem [{\citenamefont {Kitaev}(2006)}]{Kitaev2006a}%
  \BibitemOpen
  \bibfield  {author} {\bibinfo {author} {\bibfnamefont {A.}~\bibnamefont
  {Kitaev}},\ }\bibfield  {title} {\bibinfo {title} {Anyons in an exactly
  solved model and beyond},\ }\href
  {https://doi.org/https://doi.org/10.1016/j.aop.2005.10.005} {\bibfield
  {journal} {\bibinfo  {journal} {Ann. Phys.}\ }\textbf {\bibinfo {volume}
  {321}},\ \bibinfo {pages} {2} (\bibinfo {year} {2006})}\BibitemShut {NoStop}%
\bibitem [{\citenamefont {Shi}\ \emph {et~al.}(2020)\citenamefont {Shi},
  \citenamefont {Kato},\ and\ \citenamefont {Kim}}]{Shi2020}%
  \BibitemOpen
  \bibfield  {author} {\bibinfo {author} {\bibfnamefont {B.}~\bibnamefont
  {Shi}}, \bibinfo {author} {\bibfnamefont {K.}~\bibnamefont {Kato}},\ and\
  \bibinfo {author} {\bibfnamefont {I.~H.}\ \bibnamefont {Kim}},\ }\bibfield
  {title} {\bibinfo {title} {Fusion rules from entanglement},\ }\href
  {https://doi.org/https://doi.org/10.1016/j.aop.2020.168164} {\bibfield
  {journal} {\bibinfo  {journal} {Annals of Physics}\ }\textbf {\bibinfo
  {volume} {418}},\ \bibinfo {pages} {168164} (\bibinfo {year}
  {2020})}\BibitemShut {NoStop}%
\bibitem [{\citenamefont {Engelhardt}\ and\ \citenamefont
  {Wall}(2014)}]{engelhardt2014extremal}%
  \BibitemOpen
  \bibfield  {author} {\bibinfo {author} {\bibfnamefont {N.}~\bibnamefont
  {Engelhardt}}\ and\ \bibinfo {author} {\bibfnamefont {A.~C.}\ \bibnamefont
  {Wall}},\ }\bibfield  {title} {\bibinfo {title} {Extremal surface barriers},\
  }\href@noop {} {\bibfield  {journal} {\bibinfo  {journal} {Journal of High
  Energy Physics}\ }\textbf {\bibinfo {volume} {2014}},\ \bibinfo {eid}
  {arXiv:1312.3699} (\bibinfo {year} {2014})},\ \Eprint
  {https://arxiv.org/abs/1312.3699} {arXiv:1312.3699 [hep-th]} \BibitemShut
  {NoStop}%
\bibitem [{\citenamefont {Chen}\ \emph {et~al.}(2018)\citenamefont {Chen},
  \citenamefont {Dong}, \citenamefont {Lewkowycz},\ and\ \citenamefont
  {Qi}}]{chen2018modular}%
  \BibitemOpen
  \bibfield  {author} {\bibinfo {author} {\bibfnamefont {Y.}~\bibnamefont
  {Chen}}, \bibinfo {author} {\bibfnamefont {X.}~\bibnamefont {Dong}}, \bibinfo
  {author} {\bibfnamefont {A.}~\bibnamefont {Lewkowycz}},\ and\ \bibinfo
  {author} {\bibfnamefont {X.-L.}\ \bibnamefont {Qi}},\ }\bibfield  {title}
  {\bibinfo {title} {Modular flow as a disentangler},\ }\bibfield  {journal}
  {\bibinfo  {journal} {Journal of High Energy Physics}\ }\textbf {\bibinfo
  {volume} {2018}},\ \href {https://doi.org/10.1007/jhep12(2018)083}
  {10.1007/jhep12(2018)083} (\bibinfo {year} {2018})\BibitemShut {NoStop}%
\bibitem [{\citenamefont {Faulkner}\ \emph {et~al.}(2019)\citenamefont
  {Faulkner}, \citenamefont {Li},\ and\ \citenamefont
  {Wang}}]{faulkner2019modular}%
  \BibitemOpen
  \bibfield  {author} {\bibinfo {author} {\bibfnamefont {T.}~\bibnamefont
  {Faulkner}}, \bibinfo {author} {\bibfnamefont {M.}~\bibnamefont {Li}},\ and\
  \bibinfo {author} {\bibfnamefont {H.}~\bibnamefont {Wang}},\ }\bibfield
  {title} {\bibinfo {title} {A modular toolkit for bulk reconstruction},\
  }\bibfield  {journal} {\bibinfo  {journal} {Journal of High Energy Physics}\
  }\textbf {\bibinfo {volume} {2019}},\ \href
  {https://doi.org/10.1007/jhep04(2019)119} {10.1007/jhep04(2019)119} (\bibinfo
  {year} {2019})\BibitemShut {NoStop}%
\bibitem [{\citenamefont {{Casini}}\ \emph {et~al.}(2011)\citenamefont
  {{Casini}}, \citenamefont {{Huerta}},\ and\ \citenamefont
  {{Myers}}}]{Casini2011}%
  \BibitemOpen
  \bibfield  {author} {\bibinfo {author} {\bibfnamefont {H.}~\bibnamefont
  {{Casini}}}, \bibinfo {author} {\bibfnamefont {M.}~\bibnamefont {{Huerta}}},\
  and\ \bibinfo {author} {\bibfnamefont {R.~C.}\ \bibnamefont {{Myers}}},\
  }\bibfield  {title} {\bibinfo {title} {{Towards a derivation of holographic
  entanglement entropy}},\ }\href {https://doi.org/10.1007/JHEP05(2011)036}
  {\bibfield  {journal} {\bibinfo  {journal} {Journal of High Energy Physics}\
  }\textbf {\bibinfo {volume} {2011}},\ \bibinfo {eid} {36} (\bibinfo {year}
  {2011})},\ \Eprint {https://arxiv.org/abs/1102.0440} {arXiv:1102.0440
  [hep-th]} \BibitemShut {NoStop}%
\bibitem [{\citenamefont {Bisognano}\ and\ \citenamefont
  {Wichmann}(1975)}]{BW1975}%
  \BibitemOpen
  \bibfield  {author} {\bibinfo {author} {\bibfnamefont {J.~J.}\ \bibnamefont
  {Bisognano}}\ and\ \bibinfo {author} {\bibfnamefont {E.~H.}\ \bibnamefont
  {Wichmann}},\ }\bibfield  {title} {\bibinfo {title} {On the duality condition
  for a hermitian scalar field},\ }\href {https://doi.org/10.1063/1.522605}
  {\bibfield  {journal} {\bibinfo  {journal} {Journal of Mathematical Physics}\
  }\textbf {\bibinfo {volume} {16}},\ \bibinfo {pages} {985} (\bibinfo {year}
  {1975})},\ \Eprint {https://arxiv.org/abs/https://doi.org/10.1063/1.522605}
  {https://doi.org/10.1063/1.522605} \BibitemShut {NoStop}%
\bibitem [{\citenamefont {Bisognano}\ and\ \citenamefont
  {Wichmann}(1976)}]{BW1976}%
  \BibitemOpen
  \bibfield  {author} {\bibinfo {author} {\bibfnamefont {J.~J.}\ \bibnamefont
  {Bisognano}}\ and\ \bibinfo {author} {\bibfnamefont {E.~H.}\ \bibnamefont
  {Wichmann}},\ }\bibfield  {title} {\bibinfo {title} {On the duality condition
  for quantum fields},\ }\href {https://doi.org/10.1063/1.522898} {\bibfield
  {journal} {\bibinfo  {journal} {Journal of Mathematical Physics}\ }\textbf
  {\bibinfo {volume} {17}},\ \bibinfo {pages} {303} (\bibinfo {year} {1976})},\
  \Eprint
  {https://arxiv.org/abs/https://aip.scitation.org/doi/pdf/10.1063/1.522898}
  {https://aip.scitation.org/doi/pdf/10.1063/1.522898} \BibitemShut {NoStop}%
\bibitem [{\citenamefont {Iqbal}\ and\ \citenamefont {Wall}(2016)}]{Iqbal2016}%
  \BibitemOpen
  \bibfield  {author} {\bibinfo {author} {\bibfnamefont {N.}~\bibnamefont
  {Iqbal}}\ and\ \bibinfo {author} {\bibfnamefont {A.~C.}\ \bibnamefont
  {Wall}},\ }\bibfield  {title} {\bibinfo {title} {Anomalies of the
  entanglement entropy in chiral theories},\ }\href
  {https://doi.org/10.1007/JHEP10(2016)111} {\bibfield  {journal} {\bibinfo
  {journal} {Journal of High Energy Physics}\ }\textbf {\bibinfo {volume}
  {2016}},\ \bibinfo {pages} {111} (\bibinfo {year} {2016})}\BibitemShut
  {NoStop}%
\bibitem [{\citenamefont {{Hellerman}}\ \emph {et~al.}(2021)\citenamefont
  {{Hellerman}}, \citenamefont {{Orlando}},\ and\ \citenamefont
  {{Watanabe}}}]{Hellerman2021}%
  \BibitemOpen
  \bibfield  {author} {\bibinfo {author} {\bibfnamefont {S.}~\bibnamefont
  {{Hellerman}}}, \bibinfo {author} {\bibfnamefont {D.}~\bibnamefont
  {{Orlando}}},\ and\ \bibinfo {author} {\bibfnamefont {M.}~\bibnamefont
  {{Watanabe}}},\ }\bibfield  {title} {\bibinfo {title} {{Quantum Information
  Theory of the Gravitational Anomaly}},\ }\href@noop {} {\bibfield  {journal}
  {\bibinfo  {journal} {arXiv e-prints}\ ,\ \bibinfo {eid} {arXiv:2101.03320}}
  (\bibinfo {year} {2021})},\ \Eprint {https://arxiv.org/abs/2101.03320}
  {arXiv:2101.03320 [hep-th]} \BibitemShut {NoStop}%
\bibitem [{\citenamefont {Cardy}\ and\ \citenamefont
  {Tonni}(2016)}]{cardy_entanglement_2016}%
  \BibitemOpen
  \bibfield  {author} {\bibinfo {author} {\bibfnamefont {J.}~\bibnamefont
  {Cardy}}\ and\ \bibinfo {author} {\bibfnamefont {E.}~\bibnamefont {Tonni}},\
  }\bibfield  {title} {\bibinfo {title} {Entanglement {Hamiltonians} in
  two-dimensional conformal field theory},\ }\href
  {https://doi.org/10.1088/1742-5468/2016/12/123103} {\bibfield  {journal}
  {\bibinfo  {journal} {Journal of Statistical Mechanics: Theory and
  Experiment}\ }\textbf {\bibinfo {volume} {2016}},\ \bibinfo {pages} {123103}
  (\bibinfo {year} {2016})}\BibitemShut {NoStop}%
\bibitem [{\citenamefont {{Bernard}}\ and\ \citenamefont
  {{Doyon}}(2012)}]{Bernard2012}%
  \BibitemOpen
  \bibfield  {author} {\bibinfo {author} {\bibfnamefont {D.}~\bibnamefont
  {{Bernard}}}\ and\ \bibinfo {author} {\bibfnamefont {B.}~\bibnamefont
  {{Doyon}}},\ }\bibfield  {title} {\bibinfo {title} {{Energy flow in
  non-equilibrium conformal field theory}},\ }\href
  {https://doi.org/10.1088/1751-8113/45/36/362001} {\bibfield  {journal}
  {\bibinfo  {journal} {Journal of Physics A Mathematical General}\ }\textbf
  {\bibinfo {volume} {45}},\ \bibinfo {eid} {362001} (\bibinfo {year}
  {2012})},\ \Eprint {https://arxiv.org/abs/1202.0239} {arXiv:1202.0239
  [cond-mat.str-el]} \BibitemShut {NoStop}%
\bibitem [{\citenamefont {Tu}\ \emph {et~al.}(2013)\citenamefont {Tu},
  \citenamefont {Zhang},\ and\ \citenamefont {Qi}}]{Tu2013}%
  \BibitemOpen
  \bibfield  {author} {\bibinfo {author} {\bibfnamefont {H.-H.}\ \bibnamefont
  {Tu}}, \bibinfo {author} {\bibfnamefont {Y.}~\bibnamefont {Zhang}},\ and\
  \bibinfo {author} {\bibfnamefont {X.-L.}\ \bibnamefont {Qi}},\ }\bibfield
  {title} {\bibinfo {title} {Momentum polarization: An entanglement measure of
  topological spin and chiral central charge},\ }\href
  {https://doi.org/10.1103/PhysRevB.88.195412} {\bibfield  {journal} {\bibinfo
  {journal} {Phys. Rev. B}\ }\textbf {\bibinfo {volume} {88}},\ \bibinfo
  {pages} {195412} (\bibinfo {year} {2013})}\BibitemShut {NoStop}%
\bibitem [{\citenamefont {{Bhaseen}}\ \emph {et~al.}(2015)\citenamefont
  {{Bhaseen}}, \citenamefont {{Doyon}}, \citenamefont {{Lucas}},\ and\
  \citenamefont {{Schalm}}}]{Bhaseen2015}%
  \BibitemOpen
  \bibfield  {author} {\bibinfo {author} {\bibfnamefont {M.~J.}\ \bibnamefont
  {{Bhaseen}}}, \bibinfo {author} {\bibfnamefont {B.}~\bibnamefont {{Doyon}}},
  \bibinfo {author} {\bibfnamefont {A.}~\bibnamefont {{Lucas}}},\ and\ \bibinfo
  {author} {\bibfnamefont {K.}~\bibnamefont {{Schalm}}},\ }\bibfield  {title}
  {\bibinfo {title} {{Energy flow in quantum critical systems far from
  equilibrium}},\ }\href {https://doi.org/10.1038/nphys3320} {\bibfield
  {journal} {\bibinfo  {journal} {Nature Physics}\ }\textbf {\bibinfo {volume}
  {11}},\ \bibinfo {pages} {509} (\bibinfo {year} {2015})}\BibitemShut
  {NoStop}%
\bibitem [{\citenamefont {Maldacena}(1999)}]{Maldacena1999}%
  \BibitemOpen
  \bibfield  {author} {\bibinfo {author} {\bibfnamefont {J.}~\bibnamefont
  {Maldacena}},\ }\bibfield  {title} {\bibinfo {title} {The large-n limit of
  superconformal field theories and supergravity},\ }\href
  {https://doi.org/10.1023/A:1026654312961} {\bibfield  {journal} {\bibinfo
  {journal} {International Journal of Theoretical Physics}\ }\textbf {\bibinfo
  {volume} {38}},\ \bibinfo {pages} {1113} (\bibinfo {year}
  {1999})}\BibitemShut {NoStop}%
\bibitem [{\citenamefont {{Li}}\ \emph {et~al.}(2008)\citenamefont {{Li}},
  \citenamefont {{Song}},\ and\ \citenamefont {{Strominger}}}]{Li2008}%
  \BibitemOpen
  \bibfield  {author} {\bibinfo {author} {\bibfnamefont {W.}~\bibnamefont
  {{Li}}}, \bibinfo {author} {\bibfnamefont {W.}~\bibnamefont {{Song}}},\ and\
  \bibinfo {author} {\bibfnamefont {A.}~\bibnamefont {{Strominger}}},\
  }\bibfield  {title} {\bibinfo {title} {{Chiral gravity in three
  dimensions}},\ }\href {https://doi.org/10.1088/1126-6708/2008/04/082}
  {\bibfield  {journal} {\bibinfo  {journal} {Journal of High Energy Physics}\
  }\textbf {\bibinfo {volume} {2008}},\ \bibinfo {eid} {082} (\bibinfo {year}
  {2008})},\ \Eprint {https://arxiv.org/abs/0801.4566} {arXiv:0801.4566
  [hep-th]} \BibitemShut {NoStop}%
\bibitem [{\citenamefont {Castro}\ \emph {et~al.}(2014)\citenamefont {Castro},
  \citenamefont {Detournay}, \citenamefont {Iqbal},\ and\ \citenamefont
  {Perlmutter}}]{castro2014holographic}%
  \BibitemOpen
  \bibfield  {author} {\bibinfo {author} {\bibfnamefont {A.}~\bibnamefont
  {Castro}}, \bibinfo {author} {\bibfnamefont {S.}~\bibnamefont {Detournay}},
  \bibinfo {author} {\bibfnamefont {N.}~\bibnamefont {Iqbal}},\ and\ \bibinfo
  {author} {\bibfnamefont {E.}~\bibnamefont {Perlmutter}},\ }\bibfield  {title}
  {\bibinfo {title} {Holographic entanglement entropy and gravitational
  anomalies},\ }\href@noop {} {\bibfield  {journal} {\bibinfo  {journal}
  {Journal of High Energy Physics}\ }\textbf {\bibinfo {volume} {2014}},\
  \bibinfo {eid} {arXiv:1405.2792} (\bibinfo {year} {2014})},\ \Eprint
  {https://arxiv.org/abs/1405.2792} {arXiv:1405.2792 [hep-th]} \BibitemShut
  {NoStop}%
\bibitem [{\citenamefont {{Banados}}\ \emph {et~al.}(1992)\citenamefont
  {{Banados}}, \citenamefont {{Teitelboim}},\ and\ \citenamefont
  {{Zanelli}}}]{BTZ1992}%
  \BibitemOpen
  \bibfield  {author} {\bibinfo {author} {\bibfnamefont {M.}~\bibnamefont
  {{Banados}}}, \bibinfo {author} {\bibfnamefont {C.}~\bibnamefont
  {{Teitelboim}}},\ and\ \bibinfo {author} {\bibfnamefont {J.}~\bibnamefont
  {{Zanelli}}},\ }\bibfield  {title} {\bibinfo {title} {{Black hole in
  three-dimensional spacetime}},\ }\href
  {https://doi.org/10.1103/PhysRevLett.69.1849} {\bibfield  {journal} {\bibinfo
   {journal} {\prl}\ }\textbf {\bibinfo {volume} {69}},\ \bibinfo {pages}
  {1849} (\bibinfo {year} {1992})},\ \Eprint
  {https://arxiv.org/abs/hep-th/9204099} {arXiv:hep-th/9204099 [hep-th]}
  \BibitemShut {NoStop}%
\bibitem [{\citenamefont {Jafferis}\ and\ \citenamefont
  {Suh}(2016)}]{jafferis2016gravity}%
  \BibitemOpen
  \bibfield  {author} {\bibinfo {author} {\bibfnamefont {D.~L.}\ \bibnamefont
  {Jafferis}}\ and\ \bibinfo {author} {\bibfnamefont {S.~J.}\ \bibnamefont
  {Suh}},\ }\bibfield  {title} {\bibinfo {title} {The gravity duals of modular
  hamiltonians},\ }\href@noop {} {\bibfield  {journal} {\bibinfo  {journal}
  {Journal of High Energy Physics}\ }\textbf {\bibinfo {volume} {2016}},\
  \bibinfo {eid} {arXiv:1412.8465} (\bibinfo {year} {2016})},\ \Eprint
  {https://arxiv.org/abs/1412.8465} {arXiv:1412.8465 [hep-th]} \BibitemShut
  {NoStop}%
\bibitem [{\citenamefont {Jafferis}\ \emph {et~al.}(2016)\citenamefont
  {Jafferis}, \citenamefont {Lewkowycz}, \citenamefont {Maldacena},\ and\
  \citenamefont {Suh}}]{Jafferis2016}%
  \BibitemOpen
  \bibfield  {author} {\bibinfo {author} {\bibfnamefont {D.~L.}\ \bibnamefont
  {Jafferis}}, \bibinfo {author} {\bibfnamefont {A.}~\bibnamefont {Lewkowycz}},
  \bibinfo {author} {\bibfnamefont {J.}~\bibnamefont {Maldacena}},\ and\
  \bibinfo {author} {\bibfnamefont {S.~J.}\ \bibnamefont {Suh}},\ }\bibfield
  {title} {\bibinfo {title} {Relative entropy equals bulk relative entropy},\
  }\href {https://doi.org/10.1007/JHEP06(2016)004} {\bibfield  {journal}
  {\bibinfo  {journal} {Journal of High Energy Physics}\ }\textbf {\bibinfo
  {volume} {2016}},\ \bibinfo {pages} {4} (\bibinfo {year} {2016})}\BibitemShut
  {NoStop}%
\bibitem [{\citenamefont {{Kaplan}}\ and\ \citenamefont
  {{Marolf}}(2022)}]{Kaplan2022}%
  \BibitemOpen
  \bibfield  {author} {\bibinfo {author} {\bibfnamefont {M.}~\bibnamefont
  {{Kaplan}}}\ and\ \bibinfo {author} {\bibfnamefont {D.}~\bibnamefont
  {{Marolf}}},\ }\bibfield  {title} {\bibinfo {title} {{The action of HRT-areas
  as operators in semiclassical gravity}},\ }\href@noop {} {\bibfield
  {journal} {\bibinfo  {journal} {arXiv e-prints}\ ,\ \bibinfo {eid}
  {arXiv:2203.04270}} (\bibinfo {year} {2022})},\ \Eprint
  {https://arxiv.org/abs/2203.04270} {arXiv:2203.04270 [hep-th]} \BibitemShut
  {NoStop}%
\bibitem [{\citenamefont {Fan}(2022)}]{fan2022entanglement}%
  \BibitemOpen
  \bibfield  {author} {\bibinfo {author} {\bibfnamefont {R.}~\bibnamefont
  {Fan}},\ }\bibfield  {title} {\bibinfo {title} {From entanglement generated
  dynamics to the gravitational anomaly and chiral central charge},\
  }\href@noop {} {\bibfield  {journal} {\bibinfo  {journal} {arXiv preprint
  arXiv:2206.02823}\ } (\bibinfo {year} {2022})}\BibitemShut {NoStop}%
\bibitem [{\citenamefont {{Witten}}(2020)}]{Witten2019}%
  \BibitemOpen
  \bibfield  {author} {\bibinfo {author} {\bibfnamefont {E.}~\bibnamefont
  {{Witten}}},\ }\bibfield  {title} {\bibinfo {title} {{Light rays,
  singularities, and all that}},\ }\href
  {https://doi.org/10.1103/RevModPhys.92.045004} {\bibfield  {journal}
  {\bibinfo  {journal} {Reviews of Modern Physics}\ }\textbf {\bibinfo {volume}
  {92}},\ \bibinfo {eid} {045004} (\bibinfo {year} {2020})},\ \Eprint
  {https://arxiv.org/abs/1901.03928} {arXiv:1901.03928 [hep-th]} \BibitemShut
  {NoStop}%
\bibitem [{\citenamefont {Witten}(2021)}]{witten2021does}%
  \BibitemOpen
  \bibfield  {author} {\bibinfo {author} {\bibfnamefont {E.}~\bibnamefont
  {Witten}},\ }\bibfield  {title} {\bibinfo {title} {Why does quantum field
  theory in curved spacetime make sense? and what happens to the algebra of
  observables in the thermodynamic limit?},\ }\href@noop {} {\bibfield
  {journal} {\bibinfo  {journal} {arXiv preprint arXiv:2112.11614}\ ,\ \bibinfo
  {eid} {arXiv:2112.11614}} (\bibinfo {year} {2021})},\ \Eprint
  {https://arxiv.org/abs/2112.11614} {arXiv:2112.11614 [hep-th]} \BibitemShut
  {NoStop}%
\bibitem [{\citenamefont {Witten}(2018)}]{witten2018aps}%
  \BibitemOpen
  \bibfield  {author} {\bibinfo {author} {\bibfnamefont {E.}~\bibnamefont
  {Witten}},\ }\bibfield  {title} {\bibinfo {title} {Aps medal for exceptional
  achievement in research: Invited article on entanglement properties of
  quantum field theory},\ }\href@noop {} {\bibfield  {journal} {\bibinfo
  {journal} {Reviews of Modern Physics}\ }\textbf {\bibinfo {volume} {90}},\
  \bibinfo {eid} {arXiv:1803.04993} (\bibinfo {year} {2018})},\ \Eprint
  {https://arxiv.org/abs/1803.04993} {arXiv:1803.04993 [hep-th]} \BibitemShut
  {NoStop}%
\bibitem [{\citenamefont {Unruh}\ and\ \citenamefont
  {Weiss}(1984)}]{unruh1984acceleration}%
  \BibitemOpen
  \bibfield  {author} {\bibinfo {author} {\bibfnamefont {W.~G.}\ \bibnamefont
  {Unruh}}\ and\ \bibinfo {author} {\bibfnamefont {N.}~\bibnamefont {Weiss}},\
  }\bibfield  {title} {\bibinfo {title} {Acceleration radiation in interacting
  field theories},\ }\href@noop {} {\bibfield  {journal} {\bibinfo  {journal}
  {Physical Review D}\ }\textbf {\bibinfo {volume} {29}},\ \bibinfo {pages}
  {1656} (\bibinfo {year} {1984})}\BibitemShut {NoStop}%
\bibitem [{\citenamefont {Wald}(1984)}]{Wald1984}%
  \BibitemOpen
  \bibfield  {author} {\bibinfo {author} {\bibfnamefont {R.~M.}\ \bibnamefont
  {Wald}},\ }\href {https://doi.org/10.7208/chicago/9780226870373.001.0001}
  {\emph {\bibinfo {title} {{General Relativity}}}}\ (\bibinfo  {publisher}
  {Chicago Univ. Pr.},\ \bibinfo {address} {Chicago, USA},\ \bibinfo {year}
  {1984})\BibitemShut {NoStop}%
\bibitem [{\citenamefont {Penrose}(1965)}]{penrose1965zero}%
  \BibitemOpen
  \bibfield  {author} {\bibinfo {author} {\bibfnamefont {R.}~\bibnamefont
  {Penrose}},\ }\bibfield  {title} {\bibinfo {title} {Zero rest-mass fields
  including gravitation: asymptotic behaviour},\ }\href@noop {} {\bibfield
  {journal} {\bibinfo  {journal} {Proceedings of the Royal Society of London.
  Series A. Mathematical and physical sciences}\ }\textbf {\bibinfo {volume}
  {284}},\ \bibinfo {pages} {159} (\bibinfo {year} {1965})}\BibitemShut
  {NoStop}%
\bibitem [{\citenamefont {Sorce}(2022)}]{sorce_2022}%
  \BibitemOpen
  \bibfield  {author} {\bibinfo {author} {\bibfnamefont {J.}~\bibnamefont
  {Sorce}},\ }\href
  {https://sorcenotes.blogspot.com/2022/01/the-stress-energy-tensor-in-field-theory.html}
  {\bibinfo {title} {The stress-energy tensor in field theory}} (\bibinfo
  {year} {2022})\BibitemShut {NoStop}%
\bibitem [{\citenamefont {Besken}\ \emph {et~al.}(2021)\citenamefont {Besken},
  \citenamefont {de~Boer},\ and\ \citenamefont {Mathys}}]{besken2021local}%
  \BibitemOpen
  \bibfield  {author} {\bibinfo {author} {\bibfnamefont {M.}~\bibnamefont
  {Besken}}, \bibinfo {author} {\bibfnamefont {J.}~\bibnamefont {de~Boer}},\
  and\ \bibinfo {author} {\bibfnamefont {G.}~\bibnamefont {Mathys}},\
  }\bibfield  {title} {\bibinfo {title} {On local and integrated stress-tensor
  commutators},\ }\href@noop {} {\bibfield  {journal} {\bibinfo  {journal}
  {Journal of High Energy Physics}\ }\textbf {\bibinfo {volume} {2021}},\
  \bibinfo {eid} {arXiv:2012.15724} (\bibinfo {year} {2021})},\ \Eprint
  {https://arxiv.org/abs/2012.15724} {arXiv:2012.15724 [hep-th]} \BibitemShut
  {NoStop}%
\bibitem [{\citenamefont {Mack}(1988)}]{mack1988introduction}%
  \BibitemOpen
  \bibfield  {author} {\bibinfo {author} {\bibfnamefont {G.}~\bibnamefont
  {Mack}},\ }\bibfield  {title} {\bibinfo {title} {Introduction to conformal
  invariant quantum field theory in two and more dimensions},\ }in\ \href@noop
  {} {\emph {\bibinfo {booktitle} {Nonperturbative quantum field theory}}}\
  (\bibinfo  {publisher} {Springer},\ \bibinfo {year} {1988})\ pp.\ \bibinfo
  {pages} {353--383}\BibitemShut {NoStop}%
\bibitem [{\citenamefont {Lieb}\ and\ \citenamefont {Ruskai}(1973)}]{Lieb1973}%
  \BibitemOpen
  \bibfield  {author} {\bibinfo {author} {\bibfnamefont {E.~H.}\ \bibnamefont
  {Lieb}}\ and\ \bibinfo {author} {\bibfnamefont {M.~B.}\ \bibnamefont
  {Ruskai}},\ }\bibfield  {title} {\bibinfo {title} {Proof of the strong
  subadditivity of quantum‐mechanical entropy},\ }\href
  {https://doi.org/10.1063/1.1666274} {\bibfield  {journal} {\bibinfo
  {journal} {Journal of Mathematical Physics}\ }\textbf {\bibinfo {volume}
  {14}},\ \bibinfo {pages} {1938} (\bibinfo {year} {1973})},\ \Eprint
  {https://arxiv.org/abs/https://doi.org/10.1063/1.1666274}
  {https://doi.org/10.1063/1.1666274} \BibitemShut {NoStop}%
\bibitem [{\citenamefont {Petz}(2003)}]{petz_monotonicity_2003}%
  \BibitemOpen
  \bibfield  {author} {\bibinfo {author} {\bibfnamefont {D.}~\bibnamefont
  {Petz}},\ }\bibfield  {title} {\bibinfo {title} {Monotonicity of quantum
  relative entropy revisited},\ }\href
  {https://doi.org/10.1142/S0129055X03001576} {\bibfield  {journal} {\bibinfo
  {journal} {Reviews in Mathematical Physics}\ }\textbf {\bibinfo {volume}
  {15}},\ \bibinfo {pages} {79} (\bibinfo {year} {2003})},\ \bibinfo {note}
  {arXiv: quant-ph/0209053}\BibitemShut {NoStop}%
\bibitem [{\citenamefont {{Hayden}}\ \emph {et~al.}(2004)\citenamefont
  {{Hayden}}, \citenamefont {{Jozsa}}, \citenamefont {{Petz}},\ and\
  \citenamefont {{Winter}}}]{Hayden2004}%
  \BibitemOpen
  \bibfield  {author} {\bibinfo {author} {\bibfnamefont {P.}~\bibnamefont
  {{Hayden}}}, \bibinfo {author} {\bibfnamefont {R.}~\bibnamefont {{Jozsa}}},
  \bibinfo {author} {\bibfnamefont {D.}~\bibnamefont {{Petz}}},\ and\ \bibinfo
  {author} {\bibfnamefont {A.}~\bibnamefont {{Winter}}},\ }\bibfield  {title}
  {\bibinfo {title} {{Structure of States Which Satisfy Strong Subadditivity of
  Quantum Entropy with Equality}},\ }\href
  {https://doi.org/10.1007/s00220-004-1049-z} {\bibfield  {journal} {\bibinfo
  {journal} {Communications in Mathematical Physics}\ }\textbf {\bibinfo
  {volume} {246}},\ \bibinfo {pages} {359} (\bibinfo {year} {2004})},\ \Eprint
  {https://arxiv.org/abs/quant-ph/0304007} {arXiv:quant-ph/0304007 [quant-ph]}
  \BibitemShut {NoStop}%
\bibitem [{\citenamefont {Hawking}\ and\ \citenamefont
  {Page}(1983)}]{Hawking1982}%
  \BibitemOpen
  \bibfield  {author} {\bibinfo {author} {\bibfnamefont {S.~W.}\ \bibnamefont
  {Hawking}}\ and\ \bibinfo {author} {\bibfnamefont {D.~N.}\ \bibnamefont
  {Page}},\ }\bibfield  {title} {\bibinfo {title} {{Thermodynamics of Black
  Holes in anti-De Sitter Space}},\ }\href {https://doi.org/10.1007/BF01208266}
  {\bibfield  {journal} {\bibinfo  {journal} {Commun. Math. Phys.}\ }\textbf
  {\bibinfo {volume} {87}},\ \bibinfo {pages} {577} (\bibinfo {year}
  {1983})}\BibitemShut {NoStop}%
\bibitem [{\citenamefont {Witten}(1998)}]{Witten1998}%
  \BibitemOpen
  \bibfield  {author} {\bibinfo {author} {\bibfnamefont {E.}~\bibnamefont
  {Witten}},\ }\bibfield  {title} {\bibinfo {title} {{Anti-de Sitter space,
  thermal phase transition, and confinement in gauge theories}},\ }\href
  {https://doi.org/10.4310/ATMP.1998.v2.n3.a3} {\bibfield  {journal} {\bibinfo
  {journal} {Adv. Theor. Math. Phys.}\ }\textbf {\bibinfo {volume} {2}},\
  \bibinfo {pages} {505} (\bibinfo {year} {1998})},\ \Eprint
  {https://arxiv.org/abs/hep-th/9803131} {arXiv:hep-th/9803131} \BibitemShut
  {NoStop}%
\bibitem [{\citenamefont {{Headrick}}(2019)}]{Headrick2019}%
  \BibitemOpen
  \bibfield  {author} {\bibinfo {author} {\bibfnamefont {M.}~\bibnamefont
  {{Headrick}}},\ }\bibfield  {title} {\bibinfo {title} {{Lectures on
  entanglement entropy in field theory and holography}},\ }\href@noop {}
  {\bibfield  {journal} {\bibinfo  {journal} {arXiv e-prints}\ ,\ \bibinfo
  {eid} {arXiv:1907.08126}} (\bibinfo {year} {2019})},\ \Eprint
  {https://arxiv.org/abs/1907.08126} {arXiv:1907.08126 [hep-th]} \BibitemShut
  {NoStop}%
\bibitem [{\citenamefont {Brown}\ and\ \citenamefont
  {Henneaux}(1986)}]{Brown1986}%
  \BibitemOpen
  \bibfield  {author} {\bibinfo {author} {\bibfnamefont {J.~D.}\ \bibnamefont
  {Brown}}\ and\ \bibinfo {author} {\bibfnamefont {M.}~\bibnamefont
  {Henneaux}},\ }\bibfield  {title} {\bibinfo {title} {{Central charges in the
  canonical realization of asymptotic symmetries: an example from
  three-dimensional gravity}},\ }\href {https://doi.org/cmp/1104114999}
  {\bibfield  {journal} {\bibinfo  {journal} {Communications in Mathematical
  Physics}\ }\textbf {\bibinfo {volume} {104}},\ \bibinfo {pages} {207 }
  (\bibinfo {year} {1986})}\BibitemShut {NoStop}%
\bibitem [{\citenamefont {{Ba{\~n}ados}}\ \emph {et~al.}(1993)\citenamefont
  {{Ba{\~n}ados}}, \citenamefont {{Henneaux}}, \citenamefont {{Teitelboim}},\
  and\ \citenamefont {{Zanelli}}}]{BHTZ1993}%
  \BibitemOpen
  \bibfield  {author} {\bibinfo {author} {\bibfnamefont {M.}~\bibnamefont
  {{Ba{\~n}ados}}}, \bibinfo {author} {\bibfnamefont {M.}~\bibnamefont
  {{Henneaux}}}, \bibinfo {author} {\bibfnamefont {C.}~\bibnamefont
  {{Teitelboim}}},\ and\ \bibinfo {author} {\bibfnamefont {J.}~\bibnamefont
  {{Zanelli}}},\ }\bibfield  {title} {\bibinfo {title} {{Geometry of the 2+1
  black hole}},\ }\href {https://doi.org/10.1103/PhysRevD.48.1506} {\bibfield
  {journal} {\bibinfo  {journal} {\prd}\ }\textbf {\bibinfo {volume} {48}},\
  \bibinfo {pages} {1506} (\bibinfo {year} {1993})},\ \Eprint
  {https://arxiv.org/abs/gr-qc/9302012} {arXiv:gr-qc/9302012 [gr-qc]}
  \BibitemShut {NoStop}%
\bibitem [{\citenamefont {{Carlip}}(1995)}]{Carlip1995}%
  \BibitemOpen
  \bibfield  {author} {\bibinfo {author} {\bibfnamefont {S.}~\bibnamefont
  {{Carlip}}},\ }\bibfield  {title} {\bibinfo {title} {{TOPICAL REVIEW: The (2
  + 1)-dimensional black hole}},\ }\href
  {https://doi.org/10.1088/0264-9381/12/12/005} {\bibfield  {journal} {\bibinfo
   {journal} {Classical and Quantum Gravity}\ }\textbf {\bibinfo {volume}
  {12}},\ \bibinfo {pages} {2853} (\bibinfo {year} {1995})},\ \Eprint
  {https://arxiv.org/abs/gr-qc/9506079} {arXiv:gr-qc/9506079 [gr-qc]}
  \BibitemShut {NoStop}%
\bibitem [{\citenamefont {Peschel}\ and\ \citenamefont
  {Eisler}(2009)}]{Peschel2009}%
  \BibitemOpen
  \bibfield  {author} {\bibinfo {author} {\bibfnamefont {I.}~\bibnamefont
  {Peschel}}\ and\ \bibinfo {author} {\bibfnamefont {V.}~\bibnamefont
  {Eisler}},\ }\bibfield  {title} {\bibinfo {title} {Reduced density matrices
  and entanglement entropy in free lattice models},\ }\href
  {https://doi.org/10.1088/1751-8113/42/50/504003} {\bibfield  {journal}
  {\bibinfo  {journal} {Journal of Physics A: Mathematical and Theoretical}\
  }\textbf {\bibinfo {volume} {42}},\ \bibinfo {pages} {504003} (\bibinfo
  {year} {2009})}\BibitemShut {NoStop}%
\bibitem [{\citenamefont {Hofstadter}(1976)}]{Hofstadter1976}%
  \BibitemOpen
  \bibfield  {author} {\bibinfo {author} {\bibfnamefont {D.~R.}\ \bibnamefont
  {Hofstadter}},\ }\bibfield  {title} {\bibinfo {title} {Energy levels and wave
  functions of bloch electrons in rational and irrational magnetic fields},\
  }\href {https://doi.org/10.1103/PhysRevB.14.2239} {\bibfield  {journal}
  {\bibinfo  {journal} {Phys. Rev. B}\ }\textbf {\bibinfo {volume} {14}},\
  \bibinfo {pages} {2239} (\bibinfo {year} {1976})}\BibitemShut {NoStop}%
\bibitem [{\citenamefont {Thouless}\ \emph {et~al.}(1982)\citenamefont
  {Thouless}, \citenamefont {Kohmoto}, \citenamefont {Nightingale},\ and\
  \citenamefont {den Nijs}}]{TKNN:82}%
  \BibitemOpen
  \bibfield  {author} {\bibinfo {author} {\bibfnamefont {D.~J.}\ \bibnamefont
  {Thouless}}, \bibinfo {author} {\bibfnamefont {M.}~\bibnamefont {Kohmoto}},
  \bibinfo {author} {\bibfnamefont {M.~P.}\ \bibnamefont {Nightingale}},\ and\
  \bibinfo {author} {\bibfnamefont {M.}~\bibnamefont {den Nijs}},\ }\bibfield
  {title} {\bibinfo {title} {{Quantized Hall Conductance in a Two-Dimensional
  Periodic Potential}},\ }\href {https://doi.org/10.1103/PhysRevLett.49.405}
  {\bibfield  {journal} {\bibinfo  {journal} {Phys. Rev. Lett.}\ }\textbf
  {\bibinfo {volume} {49}},\ \bibinfo {pages} {405} (\bibinfo {year}
  {1982})}\BibitemShut {NoStop}%
\bibitem [{\citenamefont {Niu}\ \emph {et~al.}(1985)\citenamefont {Niu},
  \citenamefont {Thouless},\ and\ \citenamefont {Wu}}]{Niu1985}%
  \BibitemOpen
  \bibfield  {author} {\bibinfo {author} {\bibfnamefont {Q.}~\bibnamefont
  {Niu}}, \bibinfo {author} {\bibfnamefont {D.~J.}\ \bibnamefont {Thouless}},\
  and\ \bibinfo {author} {\bibfnamefont {Y.-S.}\ \bibnamefont {Wu}},\
  }\bibfield  {title} {\bibinfo {title} {Quantized hall conductance as a
  topological invariant},\ }\href {https://doi.org/10.1103/PhysRevB.31.3372}
  {\bibfield  {journal} {\bibinfo  {journal} {Phys. Rev. B}\ }\textbf {\bibinfo
  {volume} {31}},\ \bibinfo {pages} {3372} (\bibinfo {year}
  {1985})}\BibitemShut {NoStop}%
\bibitem [{\citenamefont {Anderson}(1958)}]{Anderson1958}%
  \BibitemOpen
  \bibfield  {author} {\bibinfo {author} {\bibfnamefont {P.~W.}\ \bibnamefont
  {Anderson}},\ }\bibfield  {title} {\bibinfo {title} {Absence of diffusion in
  certain random lattices},\ }\href {https://doi.org/10.1103/PhysRev.109.1492}
  {\bibfield  {journal} {\bibinfo  {journal} {Phys. Rev.}\ }\textbf {\bibinfo
  {volume} {109}},\ \bibinfo {pages} {1492} (\bibinfo {year}
  {1958})}\BibitemShut {NoStop}%
\bibitem [{\citenamefont {Klitzing}\ \emph {et~al.}(1980)\citenamefont
  {Klitzing}, \citenamefont {Dorda},\ and\ \citenamefont
  {Pepper}}]{Klitzing1980}%
  \BibitemOpen
  \bibfield  {author} {\bibinfo {author} {\bibfnamefont {K.~v.}\ \bibnamefont
  {Klitzing}}, \bibinfo {author} {\bibfnamefont {G.}~\bibnamefont {Dorda}},\
  and\ \bibinfo {author} {\bibfnamefont {M.}~\bibnamefont {Pepper}},\
  }\bibfield  {title} {\bibinfo {title} {New method for high-accuracy
  determination of the fine-structure constant based on quantized hall
  resistance},\ }\href {https://doi.org/10.1103/PhysRevLett.45.494} {\bibfield
  {journal} {\bibinfo  {journal} {Phys. Rev. Lett.}\ }\textbf {\bibinfo
  {volume} {45}},\ \bibinfo {pages} {494} (\bibinfo {year} {1980})}\BibitemShut
  {NoStop}%
\bibitem [{\citenamefont {Zhu}\ \emph {et~al.}(2019)\citenamefont {Zhu},
  \citenamefont {Wu}, \citenamefont {Bhatt},\ and\ \citenamefont
  {Wan}}]{Zhu2019}%
  \BibitemOpen
  \bibfield  {author} {\bibinfo {author} {\bibfnamefont {Q.}~\bibnamefont
  {Zhu}}, \bibinfo {author} {\bibfnamefont {P.}~\bibnamefont {Wu}}, \bibinfo
  {author} {\bibfnamefont {R.~N.}\ \bibnamefont {Bhatt}},\ and\ \bibinfo
  {author} {\bibfnamefont {X.}~\bibnamefont {Wan}},\ }\bibfield  {title}
  {\bibinfo {title} {Localization-length exponent in two models of quantum hall
  plateau transitions},\ }\href {https://doi.org/10.1103/PhysRevB.99.024205}
  {\bibfield  {journal} {\bibinfo  {journal} {Phys. Rev. B}\ }\textbf {\bibinfo
  {volume} {99}},\ \bibinfo {pages} {024205} (\bibinfo {year}
  {2019})}\BibitemShut {NoStop}%
\bibitem [{\citenamefont {Sbierski}\ \emph {et~al.}(2021)\citenamefont
  {Sbierski}, \citenamefont {Dresselhaus}, \citenamefont {Moore},\ and\
  \citenamefont {Gruzberg}}]{Sbierski2021}%
  \BibitemOpen
  \bibfield  {author} {\bibinfo {author} {\bibfnamefont {B.}~\bibnamefont
  {Sbierski}}, \bibinfo {author} {\bibfnamefont {E.~J.}\ \bibnamefont
  {Dresselhaus}}, \bibinfo {author} {\bibfnamefont {J.~E.}\ \bibnamefont
  {Moore}},\ and\ \bibinfo {author} {\bibfnamefont {I.~A.}\ \bibnamefont
  {Gruzberg}},\ }\bibfield  {title} {\bibinfo {title} {Criticality of
  two-dimensional disordered dirac fermions in the unitary class and
  universality of the integer quantum hall transition},\ }\href
  {https://doi.org/10.1103/PhysRevLett.126.076801} {\bibfield  {journal}
  {\bibinfo  {journal} {Phys. Rev. Lett.}\ }\textbf {\bibinfo {volume} {126}},\
  \bibinfo {pages} {076801} (\bibinfo {year} {2021})}\BibitemShut {NoStop}%
\bibitem [{\citenamefont {Bernevig}(2013)}]{Bernevig_2013}%
  \BibitemOpen
  \bibfield  {author} {\bibinfo {author} {\bibfnamefont {B.~A.}\ \bibnamefont
  {Bernevig}},\ }\href {https://doi.org/doi:10.1515/9781400846733} {\emph
  {\bibinfo {title} {Topological Insulators and Topological Superconductors}}}\
  (\bibinfo  {publisher} {Princeton University Press},\ \bibinfo {year}
  {2013})\BibitemShut {NoStop}%
\end{thebibliography}%
\newpage

\onecolumngrid
\appendix

\section{Geometric derivation for general Cauchy surface}\label{sec:Cauchy}

We present a geometric derivation of a formula for the modular commutator for three contiguous intervals lying on a general Cauchy surface in the vacuum state of a 1+1D conformal field theory (CFT). In 1+1D, a Cauchy surface is a spacelike curve which generalizes the idea of a time-slice.\footnote{For an interested reader, here is a more precise definition. A Cauchy surface in a Lorentzian manifold $M$ is an achronal spacelike hypersurface $\Sigma$ with the property that if $p$ is a point in $M$ not in $\Sigma$, then every inextendible causal path through $p$ intersects $\Sigma$. See e.g. \cite{Witten2019} for details.}
See Fig.~\ref{fig:general-Cauchy} for an example. For any Cauchy surface, we can define a Hilbert space and a Schrodinger-picture representation of any global quantum state. The modular commutator can then be defined for subregions $A,$ $B,$ and $C$ of the Cauchy surface.

For any three contiguous segments $A,$ $B,$ $C$ on Cauchy surface $\Sigma$, our formula for the modular commutator reads:
\begin{equation}
\label{eq:1111}
\boxed{J(A,B,C)_{| \Omega_{\Sigma}\rangle} = \frac{\pi c_L}{6} (2\eta_v -1)-\frac{\pi c_R}{6} (2\eta_u -1),}
\end{equation}
where $\eta_u = \frac{u_{12}u_{34}}{u_{13}u_{24}},\eta_v = \frac{v_{12}v_{34}}{v_{13}v_{24}}$, and $|\Omega_{\Sigma}\rangle$ denotes the representation of the vacuum state on $\Sigma$. Here $u=t-x$ and $v=t+x$ are light-cone coordinates and $u_{ij} = u_j - u_i, v_{ij} = v_j- v_i$. The indices represent the end points of the segments $A, B,$ and $C$; see Fig.~\ref{fig:general-Cauchy}. Eq.~\eqref{eq:1111} is a generalization of the key equation in the main text; see Eq. (1) for comparison. Via a conformal mapping, we can also obtain the vacuum modular commutator for contiguous segments on an arbitrary Cauchy slice of the Lorentzian cylinder. If the Lorentzian cylinder has coordinates $X \in [-L/2, L/2), T \in (-\infty, \infty),$ with $U = T - X, V = T + X$, then our formula for the modular commutator reads:
\begin{equation}
    \label{eq:2222}
    \boxed{J(A,B,C)_{| \Omega_{\Sigma}\rangle} = \frac{\pi c_L}{6} (2\eta_V -1)-\frac{\pi c_R}{6} (2\eta_U -1),}
\end{equation}
where
\begin{equation}
    \eta_{V} = \frac{\sin(\pi V_{12}/L)\sin(\pi V_{34}/L)}{\sin (\pi V_{13}/L)\sin(\pi V_{24}/L)}, ~~ \eta_{U} = \frac{\sin(\pi U_{12}/L)\sin(\pi U_{34}/L)}{\sin (\pi U_{13}/L)\sin(\pi U_{24}/L)}.
\end{equation}
The two boxed equations are generalizations of the main result Eq.~\eqref{eq:main_result} and the first part of Eq.~\eqref{eq:eta-eff}. The second part of Eq.~\eqref{eq:eta-eff} is not particularly convenient to derive using the geometric formalism of this Section, since thermal states break local Lorentz invariance, which is one of the key tools of this Section. Thermal states are considered in the next Section using a different formalism.

Two ideas are used to derive the boxed equations. First, the modular flows for the subsystems we consider are local flows generated by vector fields within their respective domains of dependence. Thus, the modular flow can be treated as a transformation acting on Cauchy surfaces. To that end, we review known facts about modular flow in CFT in Section~\ref{appendix:mod_flow_cft}. Second, the modular commutator can be computed by calculating the response of the entanglement entropy against a modular flow. This calculation is explained for the Minkowski vacuum in Section~\ref{appendix:general_cauchy_surface}, and for the timelike cylinder vacuum in Section~\ref{appendix:finite-cylinder-flow}. For readers interested in the formal structure of the calculation, a brief review of some essential constructions in quantum field theory is provided in Section~\ref{appendix:qft-prelims}.

\subsection{States and surfaces in quantum field theory}
\label{appendix:qft-prelims}

In rigorous approaches to quantum field theory, a \textit{state} is defined abstractly as a collection of expectation values. More formally, a state $\omega$ is a linear map from operators to complex numbers that satisfies appropriate positivity and normalization conditions. (See e.g. \cite{witten2021does} for a review.) These operators need not be localized to a single time-slice --- for any local operator $\mathcal{O}$ at any spacetime position $(t, \vec{x}),$ the state $\omega$ has a corresponding expectation value $\omega(\mathcal{O}(t, \vec{x})).$ From the Hilbert space perspective, this is like the Heisenberg picture of quantum theory, where a state vector $| \Psi \rangle$ is defined globally for all times, and associates to a local operator $\mathcal{O}(t, \vec{x})$ the expectation value
\begin{equation}
    \omega_{\Psi}(\mathcal{O}(t, \vec{x})) \equiv \langle \Psi | \mathcal{O}(t, \vec{x}) | \Psi \rangle.
\end{equation}

This perspective is useful for field theory calculations, as it lets us talk about quantum states and their properties without ever having to pick a preferred time slice. However, it is sometimes useful to think about the restriction of a global state to a particular time slice --- or, more generally, to a particular Cauchy surface. So long as the field theory has a well posed initial value formulation, any operator in spacetime can be expressed in terms of the fundamental fields and their conjugate momenta on a single Cauchy surface. So restriction of a state to a single Cauchy surface contains all of the information of the global state. A Hilbert space can be constructed in terms of the fundamental fields and conjugate momenta on any Cauchy surface, allowing us to represent the global ``Heisenberg'' state $|\Psi\rangle$ as a slice-dependent ``Schrodinger'' state $|\Psi_{\Sigma} \rangle,$ for $\Sigma$ an arbitrary slice. This perspective is especially useful when approximating quantum field theories with lattice systems, where we \textit{do} have preferred notions of ``space'' and ``time,'' and spacetime emerges only in the continuum limit. While we are interested only in locally flat spacetimes, the perspective given here applies to general globally hyperbolic spacetimes.

For any set $S$ of spacetime points, its \textit{domain of dependence} $D(S)$ is the set of all spacetime points $p$ for which every inextendible causal path through $p$ intersects $S$. (In this language, a Cauchy surface is an achronal hypersurface whose domain of dependence is all of spacetime.) A generic set will be called a domain of dependence if it can be written as $D(\sigma)$ for some achronal hypersurface $\sigma$. There are, however, infinitely many hypersurface $\sigma$ which has the same domain of dependence $D(\sigma)$, see Figure~\ref{fig:general-Cauchy}(b) for an example. The restriction of a global quantum field theory state to any domain of dependence  defines a reduced density matrix for the observables restricted to that domain of dependence. For a well posed quantum field theory, we can equivalently think of this reduced density matrix as a state on the hypersurface $\sigma$; however, because the choice of $\sigma$ is non-unique, it is sometimes preferable in field theory calculations to think directly in terms of domains of dependence, rather than their spacelike slices. In the Schrodinger picture, one extends the hypersurface $\sigma$ to a full Cauchy surface $\Sigma,$ restricts the global state $|\Psi\rangle$ to its Schrodinger representative $|\Psi_{\Sigma}\rangle,$ and constructs $\rho_{\sigma}$ as the reduced density matrix of that vector restricted to the subsystem $\sigma.$

\subsection{Modular flow in CFT}
\label{appendix:mod_flow_cft}
Modular flow is the flow generated by the modular Hamiltonian. Let $A$ be a subregion of some Cauchy surface $\Sigma$ in 1+1D Minkowski spacetime. Once a state $|\Psi_{\Sigma}\rangle$ is specified on $\Sigma$, the modular Hamiltonian~\cite{Casini2011,cardy_entanglement_2016} associated with $A$ is $K_A= - \ln \rho_A$, where $\rho_A$ is the reduced density matrix of $|\Psi_{\Sigma}\rangle$ on $A$.\footnote{Strictly speaking, this operator is not defined in continuum quantum field theory; to be rigorous, one must work with the full modular theory reviewed e.g. in \cite{witten2018aps}. However, since entanglement entropies in CFT are only defined in regulated theories, all of our calculations are performed in regulated theories where the modular Hamiltonians are well defined. The continuum limit is only taken at the end of the calculation.} We shall focus on the vacuum state $|\Omega_{\Sigma}\rangle$ of the theory throughout this Section.\footnote{In the path integral approach to quantum field theory, one can think of $|\Omega_{\Sigma}\rangle$ as the state prepared by the path integral over the region in the past of $\Sigma.$}

The modular Hamiltonian generates the following one-parameter evolution of operators within the domain of dependence of $A$:
\begin{equation}
	\mathcal{O}_{A} \to U(-s) \mathcal{O}_{A} U(s), \quad \textrm{where} \quad U(s)= \rho_A^{is} = e^{- iK_A s}.
\end{equation}
For $s\in \mathbb{R}$, $U(s)$ is unitary. This transformation maps the set of operators in the domain of dependence to itself, thus preserving the algebra of operators on on $D(A)$. This map can generally be nonlocal, in the sense that local operators may be mapped to nonlocal operators.

However, for the cases considered below, the modular Hamiltonian is a local integral of the stress-energy tensor, so local operators do get mapped to local operators. Under the modular flow, we obtain $ U(s) \mathcal{O}_x U(-s)=\mathcal{O}'_{x(s)}$, where $\mathcal{O}'_{x(s)}$ is an operator at $x(s)$. In fact, if $\mathcal{O}$ is a primary, then $\mathcal{O}'$ is proportional to $\mathcal{O}$ up to a conformal factor \cite{Casini2011}. The spacetime point $x(s)$ can be obtained by integrating a vector field $2\pi V_K$ associated with the modular flow with the initial condition $x(0)=x$.

We discuss some examples below, accompanied by Fig.~\ref{fig:modular-flow-appendix} which depicts the vector fields $V_K$ of the first few examples.
\begin{exmp}[Quantum field theory and semi-infinite line]\label{ex:boost}

Let us begin with the case of a semi-infinite line $A=[0,\infty)$. The modular Hamiltonian of the vacuum restricted to this region is $K_A= 2\pi B$, 
where $B$ is the generator of the boost in the right Rindler wedge\footnote{The ``Rindler wedge'' is the domain of dependence $D(A)$ for $A$ a semi-infinite line.}. The corresponding vector field is
\begin{equation}
	V_K = x\partial_t + t\partial_x.
\end{equation}
In the light-cone coordinates $u=t-x$ and $v=t+x$, this becomes
\begin{equation}
	V_K = -u\partial_u + v \partial_v.
\end{equation}
This fact follows from Lorentz invariance alone, and as such, holds for any Lorentz invariant quantum field theory. This was derived axiomatically by Bisognano and Wichmann in \cite{BW1975, BW1976}, and using the path integral by Unruh and Weiss in \cite{unruh1984acceleration}. The modular flow is illustrated in Fig.~\ref{fig:modular-flow-appendix}(a).
\end{exmp}

\begin{figure}
	\centering
	\subfloat[Semi-infinite line]{\includegraphics[width=0.28\linewidth]{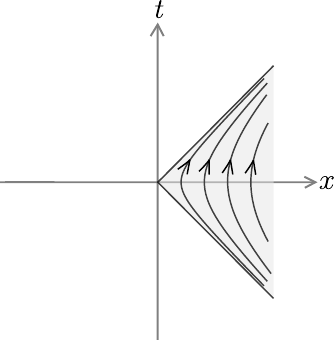}}
	\hspace{0.7cm}
	\subfloat[Finite spatial interval]{\includegraphics[width=0.28\linewidth]{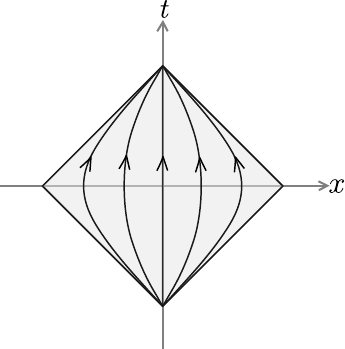}}
	\hspace{0.7cm}
	\subfloat[Finite spacetime interval]{\includegraphics[width=0.28\linewidth]{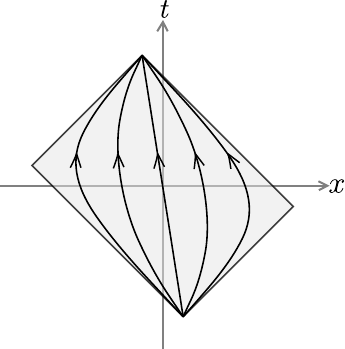}}
	\caption{Causal diamonds that are the domain of dependence of (a) a semi-infinite line, (b) a finite-line on the $x$-axis, and (c) a general spacelike interval. The modular flow for these cases can be represented as vector fields within the respective causal diamonds. }\label{fig:modular-flow-appendix}
\end{figure}

\begin{exmp}[1+1D CFT, a single finite interval in Minkowski spacetime]\label{ex:finite_interval}

Consider a 1+1D CFT in Minkowski spacetime and let $A$ be a finite spacetime interval, with endpoints $(u_1,v_1)$ and $(u_3,v_3)$. These two endpoints can be located on %a real line 
a single time-slice (Fig.~\ref{fig:modular-flow-appendix}(b)) or on a boosted interval (Fig.~\ref{fig:modular-flow-appendix}(c)). The conformal symmetry guarantees that the modular Hamiltonian is local and can be associated with a vector field $V_K$~\cite{Casini2011}. In light-cone coordinates, we obtain:
\begin{equation}
	\label{eq:VK_lorentz}
	V_K = -\frac{(u-u_1)(u_3-u)}{u_{13}}\partial_u+\frac{(v-v_1)(v_3-v)}{v_{13}}\partial_v,
\end{equation}  
where $u_{ij}\equiv u_j-u_i$.
From Eq.~\eqref{eq:VK_lorentz}, the expression  of $V_K$ in the $(x,t)$ coordinate can be worked out straightforwardly. For the particular case in which the two endpoints of the interval lie on the $x$ axis, with coordinates $(x_1,0)$ and $(x_3,0)$, we have
\begin{equation}
	V_K = \frac{(x-x_1)(x_3-x)-t^2}{x_{13}}\partial_t + \frac{t(x_1+x_3-2x)}{x_{13}}\partial_x.
\end{equation}
One way to derive Eq.~(\ref{eq:VK_lorentz}) is to use a conformal map from the Rindler wedge to the finite causal diamond shown in Fig.~\ref{fig:modular-flow-appendix}(c):
\begin{equation}\label{eq:global-conformal}
	f(u)= \frac{u_1-u_3 u}{1-u}, ~~ f(v) = \frac{v_1-v_3 v}{1-v}.
\end{equation}
Note that Eq.~\eqref{eq:global-conformal} is the analog of fractional linear transformations in the Lorentzian signature. 
\end{exmp}

Let us make a remark on a difference between these two examples, which will be useful in Section~\ref{appendix:general_cauchy_surface}. The vector field associated with the boost (Example~\ref{ex:boost}) preserves the spacetime volume element inside the Rindler wedge. That is to say, if we take an infinitesimal square-shaped diamond ($du dv$) inside the Rindler wedge, its volume is preserved, even thought the square becomes a rectangle. In contrast, the vector field in Example~\ref{ex:finite_interval} changes the spacetime volume element. This manifests in the fact that the upper half of the causal diamond is squashed under the flow.
\begin{exmp}[1+1D CFT, a single finite interval on Lorentzian cylinder] Consider a cylinder in Lorentzian signature, with spatial coordinate $X\in [-L/2,L/2)$ and time coordinate $T\in (-\infty,\infty)$. We also define light-cone coordinates $U=T-X$ and $V=T+X$. Minkowski spacetime can be conformally embedded into the cylinder via the map
\begin{equation}
\label{eq:Penrose}
    U = \frac{L}{\pi} \arctan u,~~ V = \frac{L}{\pi} \arctan v,
\end{equation}
where $u$ and $v$ are the Minkowski light-cone coordinates. This transformation maps Minkowski spacetime to the cylinder region with $U\in [-L/2,L/2)$ and $V\in [-L/2,L/2)$, see e.g. Fig.~\ref{fig:Wald-reproduced} for an illustration, and see Chapter 11 of \cite{Wald1984} for further details.\footnote{Since this map is conformal, it preserves causality; since its image has compact closure, some of its mathematical properties are simpler than those of Minkowski spacetime. This is a \textit{conformal compactification} of Minkowski spacetime, and was introduced by Penrose in \cite{penrose1965zero} to illustrate global causal properties of spacetime.} The modular flow Eq.~\eqref{eq:VK_lorentz} maps, under this transformation, to
\begin{equation} \label{eq:finite-cylinder-modular-flow}
    V_K = -\frac{L}{\pi}\frac{\sin(\pi(U-U_1)/L)\sin(\pi(U_3-U)/L)}{\sin(\pi U_{13}/L)}\partial_U+\frac{L}{\pi}\frac{\sin(\pi(V-V_1)/L)\sin(\pi(V_3-V)/L)}{\sin(\pi V_{13}/L)}\partial_V
\end{equation}
This gives the modular flow within the causal diamond on the Lorentzian cylinder bounded by $U_1 \geq U \geq U_3$ and $V_1\leq V \leq V_3$.
\end{exmp}

\begin{figure}[h]
	\centering
	\includegraphics[width=0.22\linewidth]{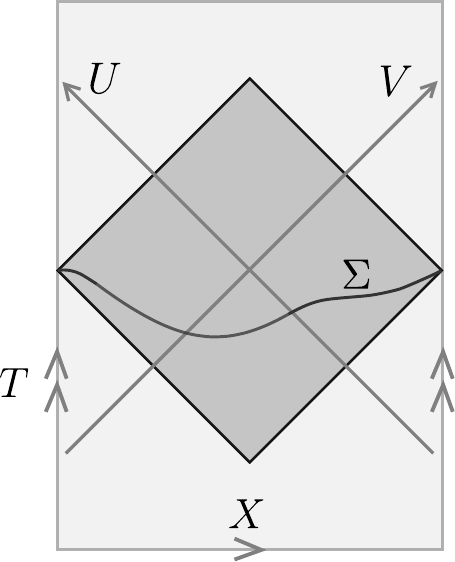}
	\caption{A Lorentzian cylinder in 1+1D,  where the left and right edges are identified. Two coordinate systems $(X,T)$ and $(U,V)$. $\Sigma$ represents a generic Cauchy surface. The dark gray diamond (which is not the domain of dependence of $\Sigma$) is obtained by a conformal embedding of the Minkowski spacetime to the cylinder.}\label{fig:Wald-reproduced}
\end{figure}

\subsection{Minkowski vacuum}
\label{appendix:general_cauchy_surface}

Let $\Sigma$ be a Cauchy surface in 1+1D Minkowski spacetime, and let $|\Omega_{\Sigma}\rangle$ be the Schrodinger-picture representation of the vacuum state on this surface. We can generalize the result in the main text to contiguous intervals $A,B,$ and $C$ on $\Sigma$. Let $(u_i,v_i)$, $i=1,2,3,4$ be the endpoints of the three intervals (see Fig.~\ref{fig:general-Cauchy}). (Here we assume that the space direction is infinite.) The key formula we derive in this Section is 
\begin{equation}\label{eq:J_general}
\boxed{	J(A,B,C)_{| \Omega_{\Sigma}\rangle } = \frac{\pi c_L}{6} (2\eta_v -1)-\frac{\pi c_R}{6} (2\eta_u -1),}
\end{equation}
where
\begin{equation}
	\eta_u = \frac{u_{12}u_{34}}{u_{13}u_{24}}, ~~\eta_v = \frac{v_{12}v_{34}}{v_{13}v_{24}}.
\end{equation}

\begin{figure}[h]
	\centering
	\includegraphics[width=0.68\linewidth]{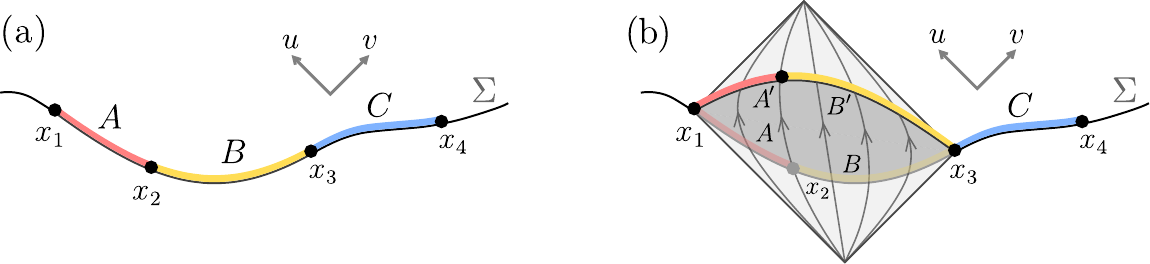}
	\caption{(a) Three contiguous intervals $A$, $B$ and $C$ on a general Cauchy surface $\Sigma$. (b) The modular flow within the domain of dependence (causal diamond) of $AB$. The coordinates of the endpoints are $x_i= (u_i, v_i)$.}\label{fig:general-Cauchy}
\end{figure}

\begin{remark}
	Before presenting the derivation, let us remark on the physical implications of Eq.~(\ref{eq:J_general}).
	\begin{itemize}
		\item The formula implies that one can determine both $c_L$ and $c_R$ by computing $J(A,B,C)$ for the state $|\Omega_{\Sigma}\rangle$ on a general Cauchy surface $\Sigma$. Note that the pair $\{c_L, c_R\}$ contains more information than the chiral central charge of the CFT, $c_-= c_L - c_R$. Since only the latter can be obtained from a computation of  $J(A,B,C)_{|\Omega_{\Sigma}\rangle }$ when $\Sigma$ is a constant-$t$ slice, a richer set of physical information may be obtained by considering more general Cauchy surfaces.
		\item		 Since the Cauchy surface is spacelike,  two points close in the $v$ coordinate must be also close in the $u$ coordinate. Therefore, if region $B$ becomes sufficiently small, i.e., $(u_2,v_2)\rightarrow (u_3,v_3)$, or equivalently $A$ and $C$ become sufficiently long, i.e., $(u_1,v_1)\rightarrow (-\infty,-\infty),(u_4,v_4)\rightarrow (+\infty,+\infty)$, we get $\eta_u = \eta_v =1$ and $J(A,B,C)= \pi c_{-}/6$. In the opposite limit where $A$ or $C$ becomes sufficiently small (or $B$ becomes sufficiently large), we have  $\eta_u = \eta_v =0$ and $J(A,B,C)= -\pi c_{-}/6$. These are independent of the shape of the Cauchy surface and so is the difference $J(\eta=1)-J(\eta=0)=\pi c_{-}/3$.
	\end{itemize}
\end{remark}

In the rest of this Section, we provide a geometric derivation of Eq.~\eqref{eq:J_general}, making use of the facts about modular flow discussed in Section~\ref{appendix:mod_flow_cft}. Specifically, we make use of the vector field associated with the modular flow; see Eq.~\eqref{eq:VK_lorentz}. Like in the main text, the modular commutator is given by the response of the entanglement entropy $S_{BC}$ under the modular flow \cite{Kim2021a}, where
\begin{equation} \label{eq:appendix-modular-flow-formula}
	J(A,B,C)_{\rho} = -\left.\frac{dS(\rho_{BC}(s))}{ds}\right|_{s=0},
\end{equation}
and
\begin{equation}
	\rho_{BC}(s) = \Tr_A (e^{-i  K_{AB}s} \rho_{ABC} e^{i  K_{AB}s}).
\end{equation}
The entanglement entropy of a spacelike interval in the vacuum of a chiral CFT~\cite{Iqbal2016} is 
\begin{equation} \label{eq:appendix-iqbal-wall}
	S_{BC} = \frac{c_L}{12}\ln\frac{(v_4-v_2)^2}{\epsilon_{v2}\epsilon_{v4}} + \frac{c_R}{12}\ln\frac{(u_4-u_2)^2}{\epsilon_{u2}\epsilon_{u4}}.
	\end{equation} 
Here $\epsilon_{u(v)2(4)}$ denotes the UV cutoffs in the $u$ and $v$ directions at the respective endpoints. 
The UV cutoffs satisfy $\epsilon_{u2}\epsilon_{v2}=\epsilon^2_2, \epsilon_{u4}\epsilon_{v4}=\epsilon^2_4$, where $\epsilon_{2(4)}$ is the proper length of the UV cutoff at the two points.

For concreteness, we will regulate our quantum field theory by excising small domains of dependence from our spacetime at the left and right endpoints of the segment $BC$. At the left endpoint, we specify a domain of dependence by requiring that (i) it is the domain of dependence of some segment of $BC$ containing the left endpoint, and (ii) its spacetime volume is $\epsilon_2^2$. At the right endpoint, we specify an analogous domain of dependence with spacetime volume $\epsilon_4^2.$ In null coordinates, the excised regions are
\begin{eqnarray}
    Q_{L} & = & [u_2, u_2 - \epsilon_{u 2}] \times [v_2, v_2 + \epsilon_{v 2}], \label{eq:cutoff-1} \\
    Q_{R} & = & [u_{4}, u_{4} + \epsilon_{u 4}] \times [v_4, v_4 - \epsilon_{v 4}]. \label{eq:cutoff-2}
\end{eqnarray}
Conformal transformations preserve the causal structure of a spacetime, so they map domains of dependence to domains of dependence. Since the modular flow of the $AB$ region corresponds to a local conformal transformation, it will deform the domains of dependence $Q_L$ and $Q_R$ to domains of dependence with different null extents $\epsilon'_{u(v)2(4)}.$ This change contributes nontrivially to Eq.~\eqref{eq:appendix-modular-flow-formula} via Eq.~\eqref{eq:appendix-iqbal-wall}.\footnote{Further discussion on anomalies of entanglement entropies in chiral CFTs due to cutoff rescaling can be found in Section~2.1 of \cite{Iqbal2016}.
}

So, the change of $S_{BC}$ under $AB$ modular flow comes from two contributions: 
(i) the change of the spacetime point $(u_2,v_2)$ under the flow, and (ii) the rescaling of UV cutoffs in light-cone coordinates at the same point $(u_2,v_2)$. The formula for this change is
\begin{equation}
	\label{eq:dS_general}
	\frac{dS_{BC}}{ds} = -\left(\frac{c_L}{6}\frac{dv_2/ds}{v_{24}}+\frac{c_L}{12}\frac{d\ln \epsilon_{v2}}{ds}\right)+\left(v\leftrightarrow u, c_L \leftrightarrow c_R \right).
\end{equation}
Since in the modular flow (Eq.~\eqref{eq:VK_lorentz}) $u$ and $v$ coordinates are transformed separately, it follows that $J(A,B,C)$ admits two contributions, one depending only on $v$ and the other depending only on $u$.

The quantities $dv_2/ds|_{s=0}$ and $du_2/ds|_{s=0}$ can be read off directly from equation Eq.~\eqref{eq:VK_lorentz} as $2\pi$ times the coefficients of $\partial_v$ and $\partial_u$ with the substitutions $u \mapsto u_2, v \mapsto v_2$. We obtain
\begin{eqnarray}
    \left. \frac{d v_2}{ds}\right|_{s=0}
        & = & 2 \pi \frac{v_{12} v_{23}}{v_{13}}, \label{eq:dv} \\
    \left. \frac{d u_2}{ds} \right|_{s=0}
        & = & - 2 \pi \frac{u_{12} u_{23}}{u_{13}}. \label{eq:du}
\end{eqnarray}
The change in the cutoffs is given by how the endpoints of the segments in Eq.~\eqref{eq:cutoff-1} transform under modular flow. The change in the endpoints $u_2$ and $v_2$ are already given by Eqs.~\eqref{eq:dv} and \eqref{eq:du}; the other endpoints change as
\begin{eqnarray}
    \left. \frac{d (v_2 + \epsilon_{v2})}{ds} \right|_{s=0}
        & = & \left. \frac{d v_2}{ds} \right|_{s=0}
            + 2 \pi \epsilon_{v2} \frac{v_{23} - v_{12}}{v_{13}} + O(\epsilon_{v2}^2), \\
    \left. \frac{d (u_2 - \epsilon_{u2})}{ds} \right|_{s=0}
        & = & \left. \frac{d u_2}{ds} \right|_{s=0}
            + 2 \pi \epsilon_{u2} \frac{u_{23} - u_{12}}{u_{13}} + O(\epsilon_{u2}^2).
\end{eqnarray}
The second term in the first expression gives the change in the cutoff $\epsilon_{v2}$ under $AB$ modular flow, at leading order in the cutoff. The second term in the second expression gives \textit{minus} the change in $\epsilon_{u2}$ under $AB$ modular flow. From these expressions, we can immediately compute
\begin{eqnarray}
    \left. \frac{d \ln{\epsilon_{v2}}}{ds} \right|_{s=0}
        & = & 2 \pi \frac{v_{23} - v_{12}}{v_{13}}, \label{eq:dv_cutoff} \\
    \left. \frac{d \ln{\epsilon_{u2}}}{ds} \right|_{s=0}
        & = & - 2 \pi \frac{u_{23} - u_{12}}{u_{13}}. \label{eq:du_cutoff}
\end{eqnarray}
Finally, substituting Eqs.~\eqref{eq:dv}, \eqref{eq:du}, \eqref{eq:dv_cutoff}, and \eqref{eq:du_cutoff} into Eq.~\eqref{eq:dS_general} we obtain the main result Eq.~\eqref{eq:J_general} in the continuum limit as all cutoffs vanish.

\begin{remark}
	If the modular flow does not induce a change of the spacetime volume of element $dudv$ on the local patch near $(u_2,v_2)$, then the change of cutoffs has the following interpretation:
   \begin{equation}\label{eq:diff-cutoffs-appendix}
	d \ln \epsilon_{v2} = -d\ln \epsilon_{u2} = d \chi, 
    \end{equation}
    where $d\chi$ is the infinitesimal boost angle that develops at $(u_2,v_2)$. This formula holds if the modular flow is a boost, as in Fig.~\ref{fig:modular-flow-appendix}(a). This is also the case for the modular flow in the finite causal diamond in Fig.~\ref{fig:modular-flow-appendix}(b), provided that $(u_2,v_2)$ lies on the $x$ axis. For a general point $(u_2,v_2)$ within the causal diamond depicted in Fig.~\ref{fig:modular-flow-appendix}(b) and (c), formula Eq.~\eqref{eq:diff-cutoffs-appendix} no longer holds. However, the change of cutoffs can still be computed with Eq.~\eqref{eq:dv_cutoff} and \eqref{eq:du_cutoff}.
\end{remark}

\subsection{Cylinder vacuum}
\label{appendix:finite-cylinder-flow}

The entanglement entropy of the vacuum on a timelike cylinder can be obtained from Eq.~\eqref{eq:appendix-iqbal-wall} by applying the conformal transformation Eq.~\eqref{eq:Penrose}. The formula for a spacelike interval is
\begin{equation} \label{eq:appendix-iqbal-wall2}
	S_{BC} = \frac{c_L}{12}\ln\frac{L^2 \sin^2(\pi V_{24}/L)}{\pi^2\epsilon_{V2}\epsilon_{V4}} + \frac{c_R}{12}\ln\frac{L^2\sin^2(\pi U_{24}/L)}{\pi^2 \epsilon_{U2}\epsilon_{U4}}.
\end{equation}
Consider three contiguous intervals $A$, $B$ and $C$ on a Cauchy surface $\Sigma$ of the cylinder; see Fig.~\ref{fig:Wald-reproduced} for an illustration of the Cauchy surface.
Under the geometric flow of the $AB$ region given in Eq.~\eqref{eq:finite-cylinder-modular-flow}, the formula for the change in entanglement entropy is
\begin{equation}
    \frac{dS_{BC}}{ds} = -\left(\frac{c_L}{6}\frac{(\pi/L)\times dV_2/ds}{\tan(\pi V_{24}/L)}+\frac{c_L}{12}\frac{d\ln \epsilon_{V2}}{ds}\right)+\left(V\leftrightarrow U, c_L \leftrightarrow c_R \right).
\end{equation}
Following the same logic as in the previous subsection, but with the modular flow given by $2\pi$ times the vector field in Eq.~\eqref{eq:finite-cylinder-modular-flow}, we obtain
\begin{eqnarray}
    \left. \frac{d V_2}{ds}\right|_{s=0}
        & = & 2L\frac{\sin(\pi V_{12}/L)\sin(\pi V_{23}/L)}{\sin(\pi V_{13}/L)}, \label{eq:dv_cyl} \\
    \left. \frac{d U_2}{ds} \right|_{s=0}
        & = & -2L\frac{\sin(\pi U_{12}/L)\sin(\pi U_{23}/L)}{\sin(\pi U_{13}/L)}. \label{eq:du_cyl}
\end{eqnarray}
and 
\begin{eqnarray}
    \left. \frac{d \ln{\epsilon_{V2}}}{ds} \right|_{s=0}
        & = & 2 \pi \frac{\sin(\pi (V_{23}-V_{12})/L)}{\sin(\pi V_{13}/L)}, \label{eq:dv_cutoff_cyl} \\
    \left. \frac{d \ln{\epsilon_{U2}}}{ds} \right|_{s=0}
        & = & - 2 \pi \frac{\sin(\pi (U_{23}-U_{12})/L)}{\sin(\pi U_{13}/L)}. \label{eq:du_cutoff_cyl}
\end{eqnarray}
Now we can substitute the derivatives into $J(A,B,C) = -(dS_{BC}/ds)|_{s=0}$. After a bit trigonometry we obtain
\begin{equation}
    J(A,B,C)_{|\Omega_{\Sigma}\rangle} = \frac{\pi c_L}{6} (2\eta_V -1)-\frac{\pi c_R}{6} (2\eta_U -1),
\end{equation}
where 
\begin{equation}
    \eta_{V} = \frac{\sin(\pi V_{12}/L)\sin(\pi V_{34}/L)}{\sin (\pi V_{13}/L)\sin(\pi V_{24}/L)}, ~~ \eta_{U} = \frac{\sin(\pi U_{12}/L)\sin(\pi U_{34}/L)}{\sin (\pi U_{13}/L)\sin(\pi U_{24}/L)}.
\end{equation}

\section{Operator-based derivation}
\label{appendix:operators}

We present an operator-based derivation of the general formulas Eq.~\eqref{eq:1111} and Eq.~\eqref{eq:2222}, as well as for the second part of Eq.~\eqref{eq:eta-eff}. The key point is that because the modular flow of regions $AB$ and $BC$ is local in all the cases we consider, the corresponding modular Hamiltonians $K_{AB}$ and $K_{BC}$ can be expressed as local integrals of the stress-energy tensor. The modular commutator can then be obtained via commutation relations for the left- and right-moving components of the stress-energy tensor. In Section~\ref{appendix:mod_ham_SE} we review how, in 1+1D CFT, the modular Hamiltonian of the vacuum state restricted to a spacelike segment can be written in terms of the stress-energy tensor. In Section~\ref{appendix:SE_commutators}, we review how self-commutators of the Minkowski stress-energy tensor can be obtained via the operator product expansion of the Euclidean theory. In Section~\ref{appendix:SE_modular_derivation}, we use these two ingredients to compute Eq.~\eqref{eq:1111} for the modular commutator. In Section~\ref{appendix:SE_finite_cylinder}, we sketch how the same kind of calculation can be used to reproduce Eq.~\eqref{eq:2222} for the modular commutator on a timelike cylinder. In Section~\ref{appendix:SE-thermal-state}, we sketch how the same kind of calculation can be used to reproduce the second part of Eq.~\eqref{eq:eta-eff} for a thermal state at $L=\infty.$

\subsection{The modular Hamiltonian and the stress-energy tensor}
\label{appendix:mod_ham_SE}

We define the stress-energy tensor using the conventional normalization for a 1+1D CFT. The classical stress-energy tensor for a field theory with action $S$ is defined by 
\begin{equation} \label{eq:SE-tensor-normalization}
    T_{\mu \nu}^{\text{classical}} = - \frac{4 \pi}{\sqrt{-g}} \frac{\delta S}{\delta g^{\mu \nu}},
\end{equation}
and the quantum stress-energy tensor $T_{\mu \nu}$ is defined via an appropriate renormalization of $T_{\mu \nu}^{\text{classical}}.$ For any smooth vector field $\xi^{\mu},$ the quantum stress-energy tensor obeys a Ward identity that controls how correlation functions of the theory respond to the local diffeomorphism generated by $\xi^{\mu}.$ With the normalization given in Eq.~\eqref{eq:SE-tensor-normalization}, this Ward identity is\footnote{For a pedagogical introduction to the origins of this identity, see \cite{sorce_2022}.}
\begin{equation}
    \delta_{\xi} \langle \mathcal{O} \rangle
        = \frac{i}{2 \pi} \left\langle \mathcal{T} \int d^{2} x T_{\mu \nu}(x) \partial^{\mu} \xi^{\nu}(x) \mathcal{O} \right\rangle,
\end{equation}
where $\mathcal{O}$ is an arbitrary operator, $\langle \mathcal{T} \cdot \rangle$ denotes a time-ordered vacuum expectation value, and $\delta_{\xi} \mathcal{O}$ denotes the linear change of that operator when it is pushed forward by along the vector field $\xi^{\mu}.$ If we take $\xi^{\mu}$ to be of the form $f(x) Y^{\mu},$ where $f(x)$ is any function on the spacetime that is well behaved at infinity, then integrating by parts gives 
\begin{equation}
    \delta_{f Y} \langle \mathcal{O} \rangle
        = - \frac{i}{2 \pi} \left\langle \mathcal{T} \int d^{2} x f(x) Y^{\nu}(x) \partial^{\mu} T_{\mu \nu}(x) \mathcal{O} \right\rangle.
\end{equation}
If $Y^{\nu}$ generates a local conformal transformation, then $(\partial^{\mu} Y^{\nu}) T_{\mu \nu}$ vanishes by tracelessness of the stress-energy tensor, and we can commute $Y^{\nu}(x)$ through the derivative $\partial^{\mu}$ to obtain
\begin{equation}
    \delta_{f Y} \langle \mathcal{O} \rangle
        = - \frac{i}{2 \pi} \left\langle \mathcal{T} \int d^{2} x f(x) \partial^{\mu} ( Y^{\nu}(x) T_{\mu \nu}(x)) \mathcal{O} \right\rangle.
\end{equation}
If we choose $\mathcal{O}$ to be of the form $\mathcal{A} \mathcal{B}$, where $\mathcal{A}$ is localized to a Cauchy surface $\Sigma$ and $\mathcal{B}$ is supported in the past of $\Sigma,$ then by choosing $f(x)$ to be a function that equals $1$ in a small neighborhood of $\Sigma$ and $0$ everywhere else in spacetime, we can apply the divergence theorem to obtain
\begin{equation}
    \langle (\delta_{Y} \mathcal{A}) \mathcal{B} \rangle
        = - \frac{i}{2 \pi} \left\langle \mathcal{T} \left( \int_{\Sigma + \epsilon} Y^{\nu} T_{\mu \nu} dS^{\mu} \right) \mathcal{A} \mathcal{B} \right\rangle
        + \frac{i}{2 \pi} \left\langle \mathcal{T} \left( \int_{\Sigma - \epsilon} Y^{\nu} T_{\mu \nu} dS^{\mu} \right) \mathcal{A} \mathcal{B} \right\rangle
\end{equation}
where $\Sigma + \epsilon$ and $\Sigma - \epsilon$ denote Cauchy surfaces slightly in the future or past of $\Sigma.$ The time-ordering of the correlation function ensures that in the limit $\epsilon \rightarrow 0,$ this identity becomes
\begin{equation}
    \langle (\delta_{Y} \mathcal{A}) \mathcal{B} \rangle
        = - \frac{i}{2 \pi} \left\langle \left[ \int_{\Sigma} Y^{\nu} T_{\mu \nu} dS^{\mu},  \mathcal{A} \right] \mathcal{B} \right\rangle.
\end{equation}
There is no time-ordering in this expression, as we have chosen $\mathcal{B}$ to be supported in the past of $\Sigma.$ Since this expression holds for arbitrary $\mathcal{B},$ it implies the operator identity
\begin{equation}
\label{eq:charge_transform}
    \delta_{Y} \mathcal{A} = - \frac{i}{2 \pi} \left[ \int_{\Sigma} Y^{\nu} T_{\mu \nu} dS^{\mu},  \mathcal{A} \right].
\end{equation}
This tells us that for \textit{any} vector field $Y^{\mu}$ that generates a local conformal transformation, its generator can be expressed as the operator
\begin{equation} \label{eq:local-flow-generator}
    - \frac{1}{2\pi} \int_{\Sigma} Y^{\nu} T_{\mu \nu} dS^{\mu}
\end{equation}
on any Cauchy surface $\Sigma.$ Note that the stress-energy tensor equations $\partial^{\mu} T_{\mu \nu} = T_{\mu}{}^{\mu} = 0$ imply that the current $J^{\mu} = Y^{\nu} T_{\mu \nu}$ is conserved for $Y^{\mu}$ the generator of a local conformal transformation. The conserved charge Eq.~\eqref{eq:local-flow-generator} is therefore a topological operator independent of $\Sigma$, that is, the current integrated over $\Sigma$ is the same as the current integrated over $\Sigma'$ except in a correlation function containing operators between $\Sigma$ and $\Sigma'.$ From this perspective, Eq.~\eqref{eq:charge_transform} is just the usual notion that conserved charges generate symmetry transformations, adapted to the stress-energy tensor.

In 1+1D Minkowski spacetime, the surface integral Eq.~\eqref{eq:local-flow-generator} can be written in the sense of differential forms as
\begin{equation} \label{eq:local-flow-generator-forms}
    - \frac{1}{2 \pi} \int_{\Sigma} Y^{\nu} T_{\mu \nu} dS^{\mu}
        = \frac{1}{2 \pi} \int_{\Sigma} \left( Y^{v} T_{v v} dv - Y^{u} T_{u u} du \right),
\end{equation}
where we have used the tracelessness condition $T_{uv}=0.$ As explained in Section~\ref{appendix:mod_flow_cft}, in the vacuum state of a 1+1D CFT, the modular flow of a spacelike segment $R$ with endpoints $(u_1, v_1), (u_2, v_2)$ is generated by the vector field $2 \pi V_R$ with
\begin{equation} \label{eq:VR-appendix-2}
    V_R = - \frac{(u - u_1)(u_2 - u)}{u_{12}} \partial_u + \frac{(v - v_1) (v_2 - v)}{v_{12}} \partial_{v},
\end{equation}
where $v = t+x$ and $u = t-x$ are light-cone coordinates. Since $2 \pi V_R$ generates a local conformal transformation, we can plug it into the expression in Eq.~\eqref{eq:local-flow-generator-forms} to obtain the modular Hamiltonian as
\begin{equation} \label{eq:KR-indicator}
    K_R = \int_{\Sigma} \left( V_R^{v} T_{v v} dv - V_R^{u} T_{u u} du \right) \Pi_{(u_1, v_1), (u_2, v_2)}
\end{equation}
where the indicator function $\Pi_{(u_1, v_1), (u_2, v_2)}$  restricts the modular flow to act only on the portion of $\Sigma$ lying between the given endpoints.

Ordinarily, to evaluate Eq.~\eqref{eq:KR-indicator}, we would need to write $u$ as $u = f(v)$ on $\Sigma,$ substitute $u \mapsto f(v), du \mapsto f'(v) dv,$ then integrate over $v$ from $(-\infty, \infty).$ But by inspection of Eq.~\eqref{eq:VR-appendix-2}, we see that $V_R^u$ depends only on $u,$ and $V_{R}^v$ depends only on $v.$ The stress-energy tensor conservation law $\partial^{\mu} T_{\mu \nu} = 0$ in null coordinates further implies that $T_{uu}$ depends only on $u$, while $T_{vv}$ depends only on $v.$ Finally, on the surface $\Sigma,$ the indicator function can be written in terms of either $u$ or $v$ alone:
\begin{equation}
    \Pi_{(u_1, v_1), (u_2, v_2)}
        = \Theta(u_1 - u) \Theta(u - u_2) = \Theta(v - v_1) \Theta(v_2 - v). 
\end{equation}
So the two integrals in Eq.~\eqref{eq:KR-indicator} each depend on only one of the variables $u$ and $v,$ and there is no need to make a restriction $u = f(v)$ to define the integral. (This can be viewed as a manifestation of the fact that $K_R$ is a topological operator, and does not depend on the specific shape of $\Sigma$.) We therefore have the general expression 
\begin{equation} \label{eq:mod-ham-null}
    K_R = \int_{-\infty}^{\infty} dv\, V_R^{v} \Theta(v - v_1) \Theta(v_2 - v) T_{v v} + \int_{-\infty}^{\infty} du\, V_R^{u} \Theta(u_1 - u) \Theta(u - u_2) T_{u u}.
\end{equation}
The change in the sign of the $u$ term as compared to Eq.~\eqref{eq:KR-indicator} comes from the fact that the the coordinate $v$ is co-oriented with the spatial coordinate $x,$ while the coordinate $u$ is counter-oriented against the spatial coordinate $x.$ I.e., as $x$ increases, $v$ increases but $u$ decreases.

Since the modular commutator is defined as
\begin{equation}
    J(A, B, C)_{|\Omega_{\Sigma}\rangle}
        = i \langle \Omega_{\Sigma} | [K_{AB}, K_{BC}] | \Omega_{\Sigma} \rangle,
\end{equation}
we can compute it directly using Eq.~\eqref{eq:mod-ham-null} when $A, B, C$ are contiguous, since in this case $AB$ and $BC$ are both connected segments. The only remaining ingredients we need are the commutators of $T_{uu}$ and $T_{vv},$ which we compute in the next subsection.

\subsection{Stress-energy commutators}
\label{appendix:SE_commutators}

The Euclidean theory is obtained from the Lorentzian theory via the substitution $t \mapsto - i \tau,$ i.e. $u \mapsto - i \tau - x \equiv - z, v \mapsto - i \tau + x \equiv \bar{z}.$ Under this substitution, we have 
\begin{equation}
    T_{uu} = T_{zz}, \qquad T_{vv} = T_{\bar{z} \bar{z}}.
\end{equation}
So we can obtain the self-commutators of $T_{uu}$ and $T_{vv}$ via the self-commutators of $T_{zz}$ and $T_{\bar{z} \bar{z}}.$ Strictly speaking, these identifications can only be made on the slice $t = \tau = 0.$ However, since $T_{uu}$ depends only on $u$ and $T_{vv}$ depends only on $v,$ they can always be represented on the $t=0$ slice, i.e., we have
\begin{align}
    T_{uu}(t, x)
        & = T_{uu}(t=0, x - t) = T_{zz}(\tau=0, x-t) \\
    T_{vv}(t, x) & =  T_{vv}(t=0, x + t) = T_{\bar{z} \bar{z}}(\tau=0, x + t).
\end{align}

The self-commutators of $T_{zz}$ and $T_{\bar{z} \bar{z}}$ on the $\tau=0$ slice can be computed in a standard way from the operator product expansion (OPE) for the stress tensor in the Euclidean version of the theory. We will present a self-contained version of this technique, but we refer the reader to Section 2 of \cite{besken2021local} for further details.\footnote{The self-commutator of the stress-energy tensor can also be computed directly in the Lorentzian field theory using microcausality and dimensional analysis. This is the subject of an influential unpublished manuscript by Luscher and Mack, a summary of which can be found in Section III of \cite{mack1988introduction}. Note that some constants differ between our Eqs. \eqref{eq:uu-commutator}, \eqref{eq:vv-commutator} and Mack's Eq.~(14). This is because our stress-energy tensor $T$ is related to Mack's stress-energy tensor $\Theta$ by $T = - \pi \Theta$; his stress-energy tensor satisfies $[\int Y^{\nu} \Theta_{\mu \nu} dS^{\mu}, \mathcal{A}] = - 2 i \delta_Y \mathcal{A}$, as compared to our Eq.~\eqref{eq:charge_transform}.} The OPE for $T_{zz}$ and $T_{\bar{z} \bar{z}}$ is
\begin{align}
    T_{zz}(z_1) T_{zz}(z_2)
        & = \frac{c_R/2}{(z_1 - z_2)^4} + \frac{2 T_{zz}(z_2)}{(z_1 - z_2)^2} + \frac{\partial_{z} T_{zz}(z_2)}{z_1-z_2} + \dots, \\
    T_{\bar{z}\bar{z}}(\bar{z_1}) T_{\bar{z}\bar{z}}(\bar{z_2}) 
        & = \frac{c_L/2}{(\bar{z_1} - \bar{z_2})^4} + \frac{2 T_{\bar{z}\bar{z}}(\bar{z_2})}{(\bar{z_1} - \bar{z_2})^2} + \frac{\partial_{\bar{z}} T_{\bar{z}\bar{z}}(\bar{z_2})}{\bar{z_1}-\bar{z_2}} + \dots
\end{align}
Note that the chiral central charges $c_L$ and $c_R$ are switched as compared to the OPEs written in many texts, because we have defined $z$ by $z = x + i \tau$ rather than $z = \tau + i x.$

The OPE represents an equality of operators within arbitrary Euclidean-time-ordered correlators. So the expectation value of the commutator $[T_{zz}(0, x_1), T_{zz}(0, x_2)]$ can be obtained from the OPE as
\begin{equation}
    \langle [T_{zz}(0, x_1), T_{zz}(0, x_2)] \rangle
        = \lim_{\epsilon \rightarrow 0^+}
            \langle T_{zz}(\tau=\epsilon, x_1) T_{zz}(\tau=-\epsilon, x_2) - T_{zz}(\tau=-\epsilon, x_1) T_{zz}(\tau=\epsilon, x_2) \rangle,
\end{equation}
where Euclidean time-ordering is implicit in expression on the right side of this equation. Plugging in the OPE, we obtain
\begin{align}
    \langle [T_{zz}(0, x_1), T_{zz}(0, x_2)] \rangle
        = \lim_{\epsilon \rightarrow 0^+}
            \left( \frac{c_R / 2}{(x_1 - x_2 + 2 i \epsilon)^4} + \frac{2 T_{zz}(0, x_2)}{(x_1 - x_2 + 2 i \epsilon)^2} + \frac{\partial_{z} T_{zz}(0, x_2)}{x_1 - x_2 + 2 i \epsilon} - (\epsilon \leftrightarrow -\epsilon) \label{eq:OPE-commutator} \right).
\end{align}
Using the identity
\begin{equation}
	\lim_{\epsilon \rightarrow 0^+} \left( \frac{1}{x-y-2 i \epsilon}- \frac{1}{x - y + 2 i\epsilon} \right) = 2\pi i \, \delta(x - y)
\end{equation}
and its derivatives, we can simplify Eq.~\eqref{eq:OPE-commutator} to obtain
\begin{equation} \label{eq:zz-commutator}
    \langle [T_{zz}(0, x_1), T_{zz}(0, x_2)] \rangle
        = \frac{i \pi c_R}{6} \partial_{x_1}^3 \delta(x_1 - x_2) + 4 \pi i \langle T_{zz}(0, x_2) \rangle \partial_{x_1} \delta(x_1 - x_2) - 2 \pi i \langle \partial_{z} T_{zz}(0, x_2) \rangle \delta(x_1 - x_2).
\end{equation}
An analogous calculation gives the $T_{\bar{z}\bar{z}}$ commutator as
\begin{equation} \label{eq:zbarzbar-commutator}
    \langle [T_{\bar{z}\bar{z}}(0, x_1), T_{\bar{z}\bar{z}}(0, x_2)] \rangle
        = - \frac{i \pi c_L}{6} \partial_{x_1}^3 \delta(x_1 - x_2) - 4 \pi i \langle T_{\bar{z}\bar{z}}(0, x_2) \rangle \partial_{x_1} \delta(x_1 - x_2) + 2 \pi i \langle \partial_{\bar{z}} T_{\bar{z}\bar{z}}(0, x_2) \rangle \delta(x_1 - x_2).
\end{equation}
On the $t=0$ slice, we may make the identifications $x = -u = v,$ which gives the self-commutators of the Minkowski stress-energy tensor as
\begin{align}
    \langle [T_{uu}(u_1), T_{uu}(u_2)] \rangle
        & = - \frac{i \pi c_R}{6} \partial_{u_1}^3 \delta(u_1 - u_2) - 4 \pi i \langle T_{uu}(u_2) \rangle \partial_{u_1} \delta(u_1 - u_2) + 2 \pi i \langle \partial_{u} T_{uu}(u_2) \rangle \delta(u_1 - u_2), \label{eq:uu-commutator} \\
    \langle [T_{vv}(v_1), T_{vv}(v_2)] \rangle
        & = - \frac{i \pi c_L}{6} \partial_{v_1}^3 \delta(v_1 - v_2) - 4 \pi i \langle T_{vv}(v_2) \rangle \partial_{v_1} \delta(v_1 - v_2) + 2 \pi i \langle \partial_{v} T_{vv}(v_2) \rangle \delta(v_1 - v_2). \label{eq:vv-commutator}
\end{align}

Note that cross-commutators between $T_{uu}$ and $T_{vv}$ vanish, since cross-commutators between $T_{zz}$ and $T_{\bar{z} \bar{z}}$ vanish, as the OPE of these operators has no singular terms.

\subsection{The Minkowski vacuum}
\label{appendix:SE_modular_derivation}

We are now prepared to compute the modular commutator directly in CFT. Let $A,$ $B,$ and $C$ be contiguous segments on an arbitrary Cauchy slice with endpoints $(u_1, v_1), (u_2, v_2), (u_3, v_3), (u_4, v_4).$ Keep in mind that we have $u_1 > u_2 > u_3 > u_4$ and $v_1 < v_2 < v_3 < v_4.$ From Eq.~\eqref{eq:mod-ham-null}, we can write the modular commutator as
\begin{align}
    J(A, B, C)_{|\Omega_{\Sigma}\rangle}
        = & i \int_{-\infty}^{\infty} dv\, \int_{-\infty}^{\infty} dv'\, f_{13}(v) f_{24}(v') \langle [T_{vv}(v), T_{vv}(v')] \rangle \nonumber \\
        & + i \int_{-\infty}^{\infty} du\, \int_{-\infty}^{\infty} du'\, g_{13}(u) g_{24}(u') \langle [T_{u u}(u), T_{uu}(u')] \rangle.
\end{align}
where we have introduced the notation
\begin{align}
    f_{jk}(v)
        & = \frac{(v - v_j)(v_k - v)}{v_{k} - v_j} \Theta(v - v_j) \Theta(v_k - v), \\
    g_{jk}(u)
        & = - \frac{(u - u_j)(u_k - u)}{u_{k} - u_j} \Theta(u_j - u) \Theta(u - u_k).
\end{align}
Plugging in Eqs.~\eqref{eq:uu-commutator} and \eqref{eq:vv-commutator} for the stress-tensor commutators, and using the fact that in the Minkowski vacuum the one-point functions of the stress-energy tensor and its derivative both vanish, we obtain the identity
\begin{align} \label{eq:penultimate-OPE-calc}
    J(A, B, C)_{|\Omega_{\Sigma}\rangle}
        = & \frac{\pi c_L}{6} \int_{-\infty}^{\infty} dv\, \int_{-\infty}^{\infty} dv'\, f_{13}(v) f_{24}(v') \partial_v^3 \delta(v - v') \nonumber \\
        & + \frac{\pi c_R}{6} \int_{-\infty}^{\infty} du\, \int_{-\infty}^{\infty} du'\, g_{13}(u) g_{24}(u') \partial_u^3 \delta(u - u').
\end{align}
These integrals can be evaluated by integration by parts, using the identities
\begin{align}
    f'_{jk}(v)
        & = \frac{v_j + v_k - 2 v}{v_k - v_j} \Theta(v - v_j) \Theta(v_k - v), \\
    f''_{jk}(v)
        & = \delta(v - v_j) + \delta(v - v_k) - \frac{2}{v_k - v_j} \Theta(v - v_j) \Theta(v_k - v), \\
    g'_{jk}(u)
        & = - \frac{u_j + u_k - 2 u}{u_k - u_j} \Theta(u_j - u) \Theta(u - u_k), \\
    g''_{jk}(u)
        & = \delta(u - u_j) + \delta(u - u_k) + \frac{2}{u_k - u_j} \Theta(u_j - u) \Theta(u - u_k).
\end{align}
The result is
\begin{equation}
    J(A, B, C)_{|\Omega_{\Sigma}\rangle}
        = \frac{ \pi c_L}{6} (2 \eta_v - 1) - \frac{ \pi c_R}{6} (2 \eta_u - 1).
\end{equation}
with $\eta_u = (u_{12} u_{34}) / (u_{13} u_{24})$ and $\eta_{v} = (v_{12} v_{34}) / (v_{13} v_{24}),$ in perfect agreement with Eq.~\eqref{eq:1111}. The relative sign difference between the two terms comes from the fact that the $v$ integral gets restricted by step functions to the interval $[v_2, v_3],$ while the $u$ integral gets restricted to the interval $[u_3, u_2].$

\subsection{The finite cylinder vacuum}
\label{appendix:SE_finite_cylinder}

When Minkowski spacetime is replaced by a timelike cylinder of circumference $L,$ Eqs.~\eqref{eq:uu-commutator} and \eqref{eq:vv-commutator} still hold under the substitution $u \mapsto U, v \mapsto V,$ with $U$ and $V$ the light-cone coordinates on the cylinder. The expression for the modular Hamiltonian given in Eq.~\eqref{eq:mod-ham-null} also holds with $u \mapsto U$ and $v \mapsto V,$ but with the vector field $V_R$ replaced by the appropriate one for modular flow on the cylinder: 
\begin{equation} \label{eq:cylinder-modular-flow-vector}
    V_R = -\frac{L}{\pi}\frac{\sin(\pi(U-U_1)/L)\sin(\pi(U_2-U)/L)}{\sin(\pi U_{12}/L)}\partial_U+\frac{L}{\pi}\frac{\sin(\pi(V-V_1)/L)\sin(\pi(V_2-V)/L)}{\sin(\pi V_{12}/L)}\partial_V.
\end{equation}
This agrees with the Cardy-Tonni expression for the modular Hamiltonian \cite{cardy_entanglement_2016}. The only other difference in the calculation is that on the cylinder, the expectation values $\langle T_{UU}(U) \rangle$ and $\langle T_{VV}(V) \rangle$ do not vanish, but instead have the Casimir values
\begin{align}
    \langle T_{UU}(U) \rangle
        & = - \left(\frac{2 \pi}{L} \right)^2 \frac{c_R}{24}, \\
    \langle T_{VV}(V) \rangle
        & = - \left(\frac{2 \pi}{L} \right)^2 \frac{c_L}{24}.
\end{align}
This leads to extra terms in the integrands of Eq.~\eqref{eq:penultimate-OPE-calc} proportional to $\partial_v \delta(v - v')$ and $\partial_u \delta(u - u').$ We omit the details, but one can verify that repeating the calculation of the previous subsection with these substitutions reproduces Eq.~\eqref{eq:2222} for the modular commutator on the cylinder.

\subsection{Thermal states on a time slice}
\label{appendix:SE-thermal-state}

In Minkowski spacetime, in a thermal state of temperature $\beta,$ the modular Hamiltonian of the interval $[x_1, x_2]$ at $T=0$ is~\cite{cardy_entanglement_2016}
\begin{equation}
    K_{[x_1, x_2]}
        = \frac{\beta}{\pi} \int_{x_1}^{x_2} dx\, \frac{\sinh(\pi(x_2-x)/\beta) \sinh(\pi(x-x_1)/\beta)}{\sinh(\pi x_{12}/\beta)} (T_{zz}(x) + T_{\bar{z} \bar{z}}(x)).
\end{equation}
This matches the expression for the cylinder modular Hamiltonian generated by Eq.~\eqref{eq:cylinder-modular-flow-vector} on the $t=0$ slice up to the substitutions $U \leftrightarrow u, V \leftrightarrow v, L \leftrightarrow \beta,$ and $\sin \leftrightarrow \sinh.$ Furthermore, the one-point functions of the stress tensor in a Minkowski thermal state match those of a finite-cylinder vacuum state up to the substitution $L \leftrightarrow \beta.$ So we see that the modular commutator on a $t=0$ slice of a Minkowski thermal state must exactly match the modular commutator of a $t=0$ slice on the finite cylinder under these substitutions, which gives us the second part of Eq.~\eqref{eq:eta-eff}.

\section{Topological invariance: bulk and edge of 2+1D topological orders}

We present the details on an information-theoretic argument used in the main text.  
For concreteness, we assume that the system is on a 2D disk, with an edge.  On the ground state $| \psi_{\rm 2D}\rangle$ we assume the two axioms of entanglement bootstrap~\cite{Shi2020}; see Fig.~\ref{fig:EB-setup-axioms}.

\begin{figure}[h]
	\centering
	\begin{tikzpicture}
		\node[] (A) at (0,0) {\includegraphics[scale=1.2]{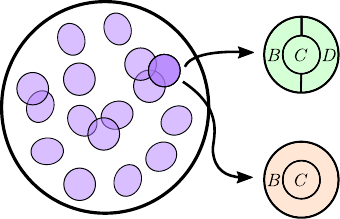}};
		\node[] (A) at (5.8,0.6) {{\bf A1:} $(S_{BC}+S_{CD}-S_B-S_D)_{| \psi_{\rm 2D}\rangle} =0.$};
		\node[] (A) at (5.2,-0.9) {{\bf A0:} $(S_{BC}+S_{C}-S_B)_{| \psi_{\rm 2D}\rangle} =0.$};
	\end{tikzpicture}
	\caption{The physical system on a disk and the two bulk-axioms ({\bf A0} and {\bf A1}) of entanglement bootstrap.}\label{fig:EB-setup-axioms}
\end{figure}

These axioms are applicable when the entanglement area law \cite{Kitaev2006,Levin2006} of 2+1D gapped system holds. 
For systems with a gapless edge --- which are of interest to us --- the area law is expected to hold when the regions are separated from the edge by a distance large compared to the bulk correlation length.\footnote{For realistic systems, especially those with ungappable edges, we expect the area law to have corrections which decay exponentially when the subsystems we pick are large compared to the correlation length. We drop these corrections in this Section, assuming these errors will not affect the analysis.} 

\begin{figure}[h]
	\centering
	\includegraphics[width=0.750\linewidth]{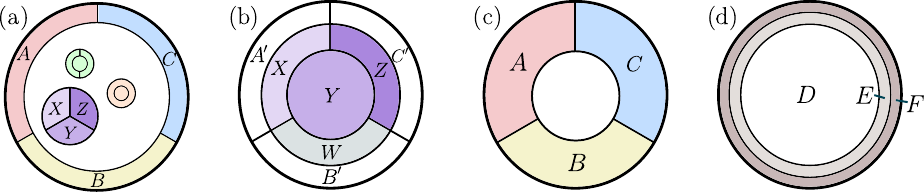}
	\caption{A 2+1D gapped chiral system on a disk and various choices of subsystems. (a) $ABC$ is an annulus covering the entire edge. The sizes for subsystems are large compared to the bulk correlation length. (b) A different partition of the disk, where $XYZW$ is a bulk disk. (c) Relabeling of regions in (b) as $A=A'X$, $B=B'W$, $C=C'Z$. (d) A finer partition of the annulus covering the edge; here $EF=A'B'C'$.}
	\label{fig:edge-appendix}
\end{figure}

\begin{Proposition} \label{prop:EB-1}
	The following statements about $J(A,B,C)$ are true. Statement 1, 2, and 3 follow from bulk-axiom {\bf A1}, whereas statement 4 needs both {\bf A0} and {\bf A1}.
	\begin{enumerate}
		\item $J(A,B,C)_{| \psi_{\rm 2D}\rangle}$ is invariant under any bulk deformation of the regions $A$, $B$, $C$. Here a bulk deformation is a ``smooth" deformation that preserves the topology of the union of any regions and does not add (or remove) any sites near the edge. 
		\item For the subsystems shown in Fig.~\ref{fig:edge-appendix}(a),
		\begin{equation}
			J(A,B,C)_{|\psi_{\rm 2D}\rangle}= - J(X,Y,Z)_{|\psi_{\rm 2D}\rangle}, \label{eq:same-appendix}
		\end{equation}
		
		\item Let $\vert \widetilde{\psi}_{\rm 2D} \rangle:= U_{\rm edge} | \psi_{\rm 2D}\rangle$, where $U_{\rm edge}$ is an arbitrary unitary operator supported on $A'B'C'$ of  Fig.~\ref{fig:edge-appendix}(b). Then
		\begin{equation}\label{eq:U-appendix}
			J(A,B,C)_{|\widetilde{\psi}_{\rm 2D}\rangle}=  J(A,B,C)_{|\psi_{\rm 2D}\rangle}.
		\end{equation}
		\item Let $\vert \widetilde{\psi}_{\rm 2D} \rangle:= \lambda \cdot O_{\rm edge} | \psi_{\rm 2D}\rangle$, where $O_{\rm edge}$ is an operator supported near the edge, on region $F$ (a finer subset of $A'B'C'$); see  Fig.~\ref{fig:edge-appendix}(d). ($O_{\rm edge}$ does not annihilate $| \psi_{\rm 2D}\rangle$, and $\lambda$ is a constant which normalizes the state.) Then $J(A,B,C)_{|\widetilde{\psi}_{\rm 2D}\rangle}=  J(A,B,C)_{|\psi_{\rm 2D}\rangle}$.
	\end{enumerate}
\end{Proposition}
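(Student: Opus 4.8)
The plan is to reduce every statement to the rigidly-fixed bulk value $J(X,Y,Z)=-\tfrac{\pi}{3}\mathfrak{c}_-$ of Ref.~\cite{Kim2021}, using two structural facts about $J$ that I would isolate first. \textbf{(L1) Extension.} If a region $R$ touches only the first argument and $I(R:B|A)_\rho=0$, then the Markov decomposition $\ln\rho_{RAB}=\ln\rho_{RA}+\ln\rho_{AB}-\ln\rho_A$ holds; since $\ln\rho_{RA}-\ln\rho_A$ is supported on $RA$ and hence commutes with $\ln\rho_{BC}$, one gets $[\ln\rho_{RAB},\ln\rho_{BC}]=[\ln\rho_{AB},\ln\rho_{BC}]$ and therefore $J(RA,B,C)=J(A,B,C)$ (identically for the third argument). \textbf{(L2) Reflection.} For a globally \emph{pure} state with $Y=\overline{ABC}$, the Tomita relation $(\ln\rho_{S}-\ln\rho_{\bar S})|\psi\rangle=0$ for each bipartition $S$ lets one rewrite $J(A,Y,C)=i\langle\psi|[\ln\rho_{AY},\ln\rho_{YC}]|\psi\rangle$ by substituting $\ln\rho_{YC}|\psi\rangle=\ln\rho_{AB}|\psi\rangle$ and $\ln\rho_{AY}|\psi\rangle=\ln\rho_{BC}|\psi\rangle$ (and their adjoints), yielding $J(A,Y,C)=-J(A,B,C)$ in two lines.

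For Statement~1, I would write an arbitrary bulk deformation as a finite sequence of elementary moves, each adding or deleting one disk-like bulk region to a single one of $A,B,C$; \textbf{A1} forces the relevant conditional mutual information to vanish, so (L1) makes $J$ invariant move-by-move. Statement~2 is exactly the chain recorded in the main text: \textbf{A1} on the bulk disk gives $I(A':Y|X)=I(C':Y|Z)=0$, (L1) promotes $J(X,Y,Z)$ to $J(A'X,Y,C'Z)=J(A,Y,C)$, and (L2) turns this into $-J(A,B'W,C)=-J(A,B,C)$. Statement~3 follows because a unitary $U_{\rm edge}$ on $A'B'C'$ leaves every reduced state of a region disjoint from its support unchanged, so $|\widetilde\psi_{\rm 2D}\rangle$ still obeys \textbf{A1} on the bulk disk and still gives $J(X,Y,Z)_{\widetilde\psi}=J(X,Y,Z)_\psi$; one then re-runs the Statement~2 chain for $\widetilde\psi$, after checking that the conditional-independence relations $I(A':Y|X)=I(C':Y|Z)=0$ survive the edge unitary (e.g.\ by monotonicity of conditional mutual information under the local edge channel).

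Statement~4 is the genuinely new case, and the reason \textbf{A0} is required. Because $F\subseteq ABC$, one has $\widetilde\rho_{ABC}=\lambda^{2}O_{\rm edge}\,\rho_{ABC}\,O_{\rm edge}^{\dagger}$; the difficulty is that a \emph{non}-unitary $O_{\rm edge}$ followed by a partial trace is \emph{not} trace-preserving on the complement, so tracing out the parts of $F$ lying in discarded regions acts like a post-selection that can, a priori, disturb $\rho_{XYZW}$ and destroy the vanishing of $I(A':Y|X)$ on which Statement~2 rests. The plan is to invoke \textbf{A0} on the buffer separating $F$ from the bulk: saturation of the Araki--Lieb inequality upgrades the block-diagonal (Markov) structure supplied by \textbf{A1} to a genuine \emph{tensor-product} factorization across the thin edge strip. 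Since $F$ is a strictly \emph{finer} subset of $A'B'C'$, such a buffer exists, and the product structure lets $O_{\rm edge}$ be commuted through the trace with no residual effect, so that $\rho_{XYZW}^{\widetilde\psi}=\rho_{XYZW}^{\psi}$ and the Markov conditions survive. The Statement~2 chain then applies verbatim to $\widetilde\psi$ and gives $J(A,B,C)_{\widetilde\psi}=-J(X,Y,Z)_{\widetilde\psi}=-J(X,Y,Z)_{\psi}=J(A,B,C)_{\psi}$.

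I expect the decoupling step to be the main obstacle: converting the entropic content of \textbf{A0} into the operational statement that a non-unitary operator on $F$ leaves $\rho_{XYZW}$ and the relevant conditional mutual informations untouched, \emph{even though} $O_{\rm edge}$ may act coherently across all of $A'$, $B'$, and $C'$ at once and therefore cannot be split region-by-region. One must also carry the normalization $\lambda$ and use the hypothesis $O_{\rm edge}|\psi_{\rm 2D}\rangle\neq0$ so that $\widetilde\rho$ is well defined. Everything downstream of the decoupling --- the extension lemma, the reflection identity, and the reduction to $-\tfrac{\pi}{3}\mathfrak{c}_-$ --- is then routine.
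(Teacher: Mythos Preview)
Your plan is essentially the paper's own proof. The two structural facts you isolate --- Markov extension (L1) and pure-state reflection (L2) --- are exactly the paper's Lemma~\ref{lemma:deformation} and its use of purity in the chain $J(X,Y,Z)=J(A'X,Y,C'Z)=-J(A,B,C)$. Statements~2 and~3 then go through just as you say.

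Two points to tighten. First, your (L1) as written only covers adding a piece to the \emph{outer} arguments $A$ or $C$; deforming the middle region $B$ requires a separate statement: if $I(A:b|B)=I(C:b|B)=0$ then $J(A,Bb,C)=J(A,B,C)$. This needs both Markov decompositions and several cancellations of the form $i\langle[K_{AB},K_{Bb}]\rangle=0$ (via $I(A:b|B)=0$) and $i\langle[K_B,K_{BC}]\rangle=0$ (nested subsystems). The paper treats five elementary bulk moves explicitly, repeatedly passing to the complement via purity to reduce ``remove from $B$'' to ``add to $A$'' etc.; your sketch of Statement~1 should flag this.

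Second, for Statement~4 your instinct is right that \textbf{A0} is what upgrades the Markov structure to a tensor factorization, but the clean mechanism is: \textbf{A0} on the bulk disk $DE$ gives $(S_{DE}+S_D-S_E)_\psi=0$, hence $I(D:F|E)_\psi=0$ by purity; for a \emph{pure} global state the Hayden--Jozsa--Petz--Winter structure theorem then yields $|\psi_{\rm 2D}\rangle=|\psi_{DE_L}\rangle\otimes|\psi_{E_RF}\rangle$. From this one reads off directly that $O_F|\psi_{\rm 2D}\rangle=\mu\,U_{EF}|\psi_{\rm 2D}\rangle$ for some unitary $U_{EF}$, which reduces Statement~4 to Statement~3 in one line. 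This is slightly sharper than arguing that $\rho_{XYZW}$ is unchanged and re-running the chain, and it makes transparent why the normalization $\lambda=1/\mu$ works and why $O_{\rm edge}|\psi\rangle\neq 0$ is needed.
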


\begin{remark}
	A few remarks are in order.
	\begin{enumerate}
	    \item The first statement in Proposition~\ref{prop:EB-1} does not imply any equality of the modular commutator of the following two configurations:
		\begin{equation}
			\includegraphics[scale=1.25]{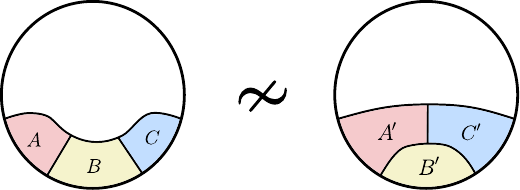}
		\end{equation}
		These two configures $ABC$ and $A'B'C'$ cannot be smoothly deformed into each other. Indeed, the modular commutators for these two choices of regions are not equal, in general.
		
		\item  The second statement essentially follows from the analysis in Section VI of~\cite{Kim2021a}.
		
	%	\item While axiom {\bf A1} is used in all the statements, axiom {\bf A0} is used only in the proof of the fourth statement.
	\end{enumerate}
\end{remark}

Here are the mathematical facts that will be relevant for our derivation. We define the conditional mutual information as $I(X:Z|Y)_{\rho}:= S(\rho_{XY}) + S(\rho_{YZ}) - S(\rho_Y) - S(\rho_{XYZ})$. Strong subadditivity (SSA) \cite{Lieb1973} refers to the statement that $I(X:Z|Y)_{\rho}\ge 0$ for any tripartite state $\rho_{ABC}$.
Petz showed that $I(X:Z|Y)_{\rho}=0$ if and only if $\ln \rho_{XYZ} = \ln \rho_{XY} + \ln \rho_{YZ} - \ln \rho_Y$~\cite{petz_monotonicity_2003}.  For later convenience, we introduce notation $K_X\equiv -\ln \rho_X$ for the modular Hamiltonian. Petz's result can then be written as
\begin{equation}
	I(X:Z|Y)_{\rho}=0 \quad \Leftrightarrow \quad K_{XYZ} = K_{XY} + K_{YZ} - K_Y.
\end{equation}
An immediate consequence is that 
\begin{equation} \label{eq:QMS-implication-J}
	I(X:Z|Y)_{\rho} = 0 \quad \Rightarrow \quad J(X,Y,Z)_{\rho}=0.
\end{equation}
For any bipartite pure state, $\ln \rho_X \vert \Psi_{XY}\rangle = \ln \rho_Y \vert \Psi_{XY}\rangle$, where $\rho_X$ and $\rho_Y$ are the reduced density matrices of $| \Psi_{XY}\rangle$. Another useful equality is $i\langle[K_{AB},K_A]\rangle=0$ for any mixed state $\rho_{AB}$, that is if one subsystem is part of the other subsystem then the modular commutator vanishes. This is because $\Tr(\rho_{AB}[K_{AB},K_A])=\Tr([\rho_{AB},K_{AB}]K_A)$ using the cyclic property of trace.
Moreover, the following two lemmas are useful.
\begin{lemma}\label{lemma:A1-extension}
    Given a state $\rho_{BCD}$ such that 
    \begin{equation}
    \label{eq:A1-in-lemma}
    S_{BC} + S_{CD} - S_B - S_D=0,
    \end{equation}
     any extension $\rho_{BB'CDD'}$ satisfies
    \begin{equation}
        S_{BB'C} + S_{CDD'} - S_{BB'} - S_{DD'}  =0.
    \end{equation}
    %If axiom \textbf{A1} applies to a configuration of regions $B : C : D$ in Fig.~\ref{fig:EB-setup-axioms}, then it also applies to any extension $B B' : C : D D'.$
\end{lemma}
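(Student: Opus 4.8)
The plan is to recognize the quantity in \eqref{eq:A1-in-lemma} as a conditional mutual information in disguise, and then to sandwich the extended quantity between $0$ and the original (vanishing) quantity using strong subadditivity (SSA). It is convenient to abbreviate the conditional entropy as $S(C|X) := S_{CX} - S_X$. With this notation the hypothesis \eqref{eq:A1-in-lemma} reads $S(C|B) + S(C|D) = 0$, and the desired conclusion reads $S(C|BB') + S(C|DD') = 0$. Crucially, $S(C|B)$ and $S(C|D)$ depend only on the marginal $\rho_{BCD}$, which is held fixed by the extension condition $\Tr_{B'D'}\rho_{BB'CDD'} = \rho_{BCD}$, so the hypothesis continues to apply in the extended setting.

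First I would establish that the left-hand side of \eqref{eq:A1-in-lemma} is in fact non-negative for \emph{any} state, and likewise for its extension. To see this, purify $\rho_{BCD}$ by an ancilla $R$ so that $|\psi\rangle_{BCDR}$ is pure; then complementary regions have equal entropy, giving $S_{BC} = S_{DR}$ and $S_B = S_{CDR}$. Substituting and regrouping yields
\[
S_{BC} + S_{CD} - S_B - S_D = S_{CD} + S_{DR} - S_D - S_{CDR} = I(C:R|D) \ge 0,
\]
where the last inequality is SSA. Applying the identical purification argument to $\rho_{BB'CDD'}$, with $BB'$ and $DD'$ each grouped as a single region, shows $S_{BB'C} + S_{CDD'} - S_{BB'} - S_{DD'} \ge 0$ as well.

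The remaining ingredient is the monotonicity statement that conditioning on a larger system cannot increase the conditional entropy. Concretely,
\[
S(C|B) - S(C|BB') = I(C:B'|B) \ge 0, \qquad S(C|D) - S(C|DD') = I(C:D'|D) \ge 0,
\]
again by SSA. Summing these two shows that the extended quantity is bounded above by the original one, $\bigl(S(C|BB') + S(C|DD')\bigr) \le \bigl(S(C|B) + S(C|D)\bigr) = 0$, using the hypothesis for the final equality. Combining this upper bound with the non-negativity of the extended quantity established in the previous step sandwiches it, $0 \le S_{BB'C} + S_{CDD'} - S_{BB'} - S_{DD'} \le 0$, forcing it to vanish, which is precisely the claim.

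I expect the only genuinely non-routine step to be the recognition of the left-hand side of \eqref{eq:A1-in-lemma} as a conditional mutual information via purification; once that identification is made, the non-negativity and the monotonicity bound are both immediate consequences of SSA, and the conclusion follows by the sandwich. (Alternatively, one could phrase the same argument through the Petz characterization $I(C:R|D)=0 \Leftrightarrow K_{CDR} = K_{CD} + K_{DR} - K_D$ introduced above, but the sandwich argument is more self-contained and avoids discussing recoverability.)
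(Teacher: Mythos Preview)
Your proof is correct and follows essentially the same route as the paper's: both obtain the upper bound $S_{BB'C}+S_{CDD'}-S_{BB'}-S_{DD'}\le 0$ by summing the two SSA inequalities $I(C:B'|B)\ge 0$ and $I(C:D'|D)\ge 0$ and invoking the hypothesis, and both obtain the lower bound by purifying the extended state and rewriting the quantity as a conditional mutual information (the paper uses a purifier $E$ and gets $I(C:E|BB')\ge 0$). The only differences are organizational---you introduce the conditional-entropy notation and establish non-negativity first, while the paper establishes the upper bound first---but the two arguments are the same in substance.
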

\begin{proof}
    Let $\rho_{B B' C D D'}$ be the ``extended'' state, and let $|\psi\rangle_{B B' C D D' E}$ be a purification. %Axiom \textbf{A1} applied to $B : C : D$ is the statement
    %\begin{equation} \label{eq:A1-in-lemma}
    %    S_{BC} + S_{CD} - S_{B} - S_{D} = 0.
    %\end{equation}
    Strong subadditivity implies the inequalities
    \begin{align}
        S_{B B' C} - S_{B B'} - S_{B C} + S_{B} & \leq 0, \\
        S_{C D D'} - S_{D D'} - S_{CD} + S_{D} & \leq 0.
    \end{align}
    Adding these inequalities, and using Eq.~\eqref{eq:A1-in-lemma}, gives
    \begin{equation}
        S_{B B' C} + S_{C D D'} - S_{B B'} - S_{D D'} \leq 0.
    \end{equation}
    Applying purity of the global state to the second and third terms, we can rewrite this as
    \begin{equation}
        0 \geq S_{B B' C} + S_{C D D'} - S_{B B'} - S_{D D'}
            = S_{B B' C} + S_{B B' E} - S_{B B'} - S_{B B' C E}.
    \end{equation}
    But the rightmost expression in this equality is nonnegative by strong subadditivity applied to the tripartition $C : B B' : E.$ So we conclude
    \begin{equation}
        S_{B B' C} + S_{C D D'} - S_{B B'} - S_{D D'} = 0,
    \end{equation}
    as desired.
\end{proof}

\begin{lemma}\label{lemma:deformation}
	The following statements about the modular commutator are true. 
	\begin{enumerate}
		\item If $\rho_{aABC}$ satisfies $I(a:B|A)=0$, then $J(aA,B,C)=J(A,B,C)$. 
		\item If $\rho_{ABbC}$ satisfies $I(A:b|B)=I(C:b|B)=0$, then $J(A,bB,C)=J(A,B,C)$.
	\end{enumerate}
\end{lemma}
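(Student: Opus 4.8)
The plan is to reduce both statements to the Petz recovery condition together with two facts already recorded in the excerpt: that $J$ vanishes whenever the conditional mutual information of the corresponding tripartition vanishes, i.e.\ Eq.~\eqref{eq:QMS-implication-J}, and that $i\langle[K_{XY},K_X]\rangle=0$ when one region is nested in the other. Throughout I write $J(X,Y,Z)=i\langle[K_{XY},K_{YZ}]\rangle$ with $K_W=-\ln\rho_W$ (so that $[\ln\rho_{XY},\ln\rho_{YZ}]=[K_{XY},K_{YZ}]$), and I use that $K_W$ is supported on $W$, hence modular Hamiltonians of disjoint regions commute. All expectation values are in the global state, but each commutator below is supported on a region whose reduced density matrix is a reduction of that global state, so the relevant identities transfer unchanged.

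For the first statement I would apply Petz's theorem to $I(a:B|A)=0$, giving $K_{aAB}=K_{aA}+K_{AB}-K_A$, so that $K_{aAB}-K_{AB}=K_{aA}-K_A$ is an operator supported on $aA$. Substituting into $J(aA,B,C)=i\langle[K_{aAB},K_{BC}]\rangle$ and using linearity of the commutator, the difference becomes
\[
J(aA,B,C)-J(A,B,C)=i\langle[K_{aA}-K_A,\,K_{BC}]\rangle .
\]
Since $aA$ and $BC$ are disjoint, $K_{aA}-K_A$ commutes with $K_{BC}$ and the difference vanishes. This is the easy case; the only thing to verify is the disjointness of supports.

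For the second statement I would apply Petz twice: $I(A:b|B)=0$ gives $K_{AbB}=K_{AB}+K_{bB}-K_B$, and $I(C:b|B)=0$ gives $K_{bBC}=K_{BC}+K_{bB}-K_B$. Crucially, both corrections equal the \emph{same} operator $\Delta:=K_{bB}-K_B$, supported on $bB$. Writing $K_{AbB}=K_{AB}+\Delta$ and $K_{bBC}=K_{BC}+\Delta$ and expanding (the $[\Delta,\Delta]$ term drops out) yields
\[
J(A,bB,C)-J(A,B,C)=i\langle[K_{AB},\Delta]\rangle+i\langle[\Delta,K_{BC}]\rangle .
\]
The main obstacle is that, unlike the first statement, $bB$ overlaps both $AB$ and $BC$ on $B$, so these cross terms do not vanish by disjointness.

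The resolution is to recognize each cross term as a modular commutator plus a nested-region commutator. Expanding $\Delta$, the first term is $i\langle[K_{AB},K_{bB}]\rangle-i\langle[K_{AB},K_B]\rangle=J(A,B,b)-0$, where the second piece vanishes because $B\subset AB$, and $J(A,B,b)=0$ by Eq.~\eqref{eq:QMS-implication-J} applied to $I(A:b|B)=0$. Symmetrically, the second cross term equals $i\langle[K_{bB},K_{BC}]\rangle-i\langle[K_B,K_{BC}]\rangle=J(b,B,C)-0$, which vanishes since $I(C:b|B)=I(b:C|B)=0$ forces $J(b,B,C)=0$. Hence both cross terms vanish and $J(A,bB,C)=J(A,B,C)$. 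The only care required is bookkeeping the arguments of $J$ and invoking the symmetry $I(C:b|B)=I(b:C|B)$ of the conditional mutual information.
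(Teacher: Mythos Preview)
Your proof is correct and follows essentially the same route as the paper: both use Petz's theorem to decompose the enlarged modular Hamiltonians, expand the commutator linearly, and then kill the cross terms via the nested-region identity $i\langle[K_{XY},K_X]\rangle=0$ and the implication $I=0\Rightarrow J=0$. The only difference is cosmetic---you package $K_{bB}-K_B$ as a single operator $\Delta$ before expanding, whereas the paper writes out all seven nonzero cross terms explicitly.
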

\begin{proof}
	
	First, we prove the first statement.
	$I(a: B \mid A)=0$ implies $K_{a A B}=K_{a A}+K_{A B}-K_{A}$
	Therefore,
	\begin{equation}
		\begin{aligned}
			J(aA,B,C) &=	i \langle \left[K_{a A B}, K_{B C}\right] \rangle\\
			&=i\left[K_{a A}+K_{A B}-K_{A}, K_{BC} \right]\rangle \\
			&=i\left\langle\left[K_{A B}, K_{B C} \right]\right\rangle \\
			%	\Rightarrow J(a A, B, C)=J(A, B, C) \\
			&= J(A,B,C).
		\end{aligned}
	\end{equation}
	
	Next, we prove the second statement. By $I(A:b \mid B)=I(C:b \mid B)=0$, we have
	%	Proof of $J(A, B b, C)=J(A, B, C)$, assuming
	\begin{equation}
		\label{eq:Petz_appendix}
		\begin{aligned}
			K_{A B b}&=K_{A B}+K_{B b}-K_{B}  \\
			K_{C B b}&=K_{C B}+K_{B b}-K_{B} .
		\end{aligned}
	\end{equation}
	Using these two equations to replace bigger chunks of modular Hamiltonians by small ones supported on the marginals, we get
	\begin{equation*}
		\begin{aligned}
			J(A, B b, C)=& i\left\langle\left[ K_{A B b}, K_{BC b}\right]\right\rangle\\		
			=& i\langle[K_{A B}+K_{B b}-K_{B},  K_{BC}+K_{B b}-K_{B}]\rangle \\
			=& i\langle[K_{A B}, K_{B C}]\rangle+i\langle[K_{A B}, K_{B b}]\rangle +i\left\langle\left[K_{B b}, K_{BC}\right]\right\rangle
			\\			
			& -i\langle [K_{A B}, K_{B}] \rangle -i\left\langle\left[K_{B b}, K_{B}\right]\right\rangle -i\left\langle\left[K_{B}, K_{BC}\right]\right\rangle-i\left\langle\left[K_{B}, K_{B b}\right]\right\rangle \\			
			=& i\left\langle\left[K_{A B}, K_{B C}\right]\right\rangle\\
			=& J(A,B,C)
		\end{aligned}
	\end{equation*}
	Two of the three terms in the third line, namely $i\left\langle\left[K_{A B}, K_{B b}\right]\right\rangle$ and $i\left\langle\left[K_{B b}, K_{BC}\right]\right\rangle$ vanish upon using Eq.~\eqref{eq:QMS-implication-J}.
	Terms in the fourth line, e.g., $i\left\langle\left[K_{B}, K_{BC}\right]\right\rangle$ and $i\left\langle\left[K_{B}, K_{B b}\right]\right\rangle$ vanish because $B$ is a subsystem of $BC$ and $Bb$. 
\end{proof}

\begin{proof}[Proof of Proposition~\ref{prop:EB-1}]

	First, we prove the invariance of the modular commutator under deformations of boundary regions $A,$ $B,$ and $C$. Because of the symmetry of the modular commutator $J(X,Y,Z)= -J(Z,Y,X)$, it suffices to consider the bulk deformations shown in Fig.~\ref{fig:bulk-deformations}. In total, there are 5 cases.
	\begin{itemize}
		\item For (a), we wish to show $J(aA,B,C)=J(A,B,C)$.
		Axiom \textbf{A1} applies to the ``thickening'' of $a$ depicted in Fig.~\ref{fig:bulk-deformations}(a), so Lemma~\ref{lemma:A1-extension} tells us that it also applies to the system $A:a:CD.$ The purity relations $S_{aCD}=S_{AB}$ and $S_{CD} = S_{a A B}$ then imply $I(a:B|A) = 0,$ at which point we can apply Lemma~\ref{lemma:deformation}. 
		\item For (b), we wish to show $J(aA,B\setminus a,C)=J(A,B,C)$.
		To derive this identity, we use the fact that $| \psi_{ABCD}\rangle $ is pure to convert the problem into showing $J(aA,D,C)=J(A,D,C)$, which we already solved in (a).
		\item For (c), we wish to show $J(A,Bb,C)=J(A,B,C)$.
		The bulk axiom {\bf A1} applied to a thickening of $b$ can be extended, via Lemma~\ref{lemma:A1-extension}, to the systems $B:b:CD$ and $B:b:AD$. This implies, via the purity relations $S_{bCD} = S_{AB},$ $S_{CD} = S_{A B b}$, $S_{b A D} = S_{B C},$ $S_{A D} = S_{b B C}$, the identity $I(b:A|B)=I(b:C|B)=0$. The modular commutator identity then follows immediately from Lemma~\ref{lemma:deformation}.
		\item For (d), we wish to show $J(aA,B,C)=J(A,B,C)$. 
		Using the purity of $| \psi_{ABCD}\rangle $, we can convert the problem into the identity $J(D,C,B)=J(D\setminus a,C, B)$. This follows from Lemma~\ref{lemma:deformation} and $I(a:C | D\setminus a)=0,$ which follows from the same kind of ``thickening'' trick as in the previous examples.
		\item For (e), we wish to show $J(A,Bb,C)=J(A,B,C)$. Using purity, we convert the problem into showing $J(D\setminus b , C, Bb)= J(D,C,B)$. This follows from $I(b:C | D\setminus b) = I(b:C|B)=0,$ which again follows from a ``thickening'' trick, and using Lemma~\ref{lemma:deformation} twice.
		 
	\end{itemize}

	\begin{figure}[h]
		\centering
	\includegraphics[scale=1.5]{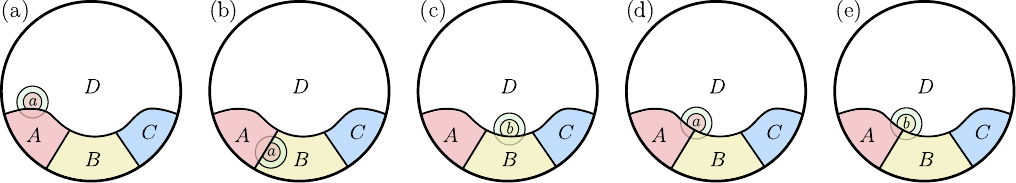}
		\caption{Bulk-deformations. Each case is verified using axiom {\bf A1} on a bulk disk.}\label{fig:bulk-deformations}
	\end{figure}

	Second, we prove Eq.~(\ref{eq:same-appendix}), where $ABC$ and $XYZ$ are shown in Fig.~\ref{fig:edge-appendix}(a). Importantly, $ABC$ is an annulus covering the entire edge and $XYZ$ is a bulk disk. For the bulk disk, the modular commutator $J(X,Y,Z)_{\vert \psi_{\rm 2D}\rangle}$ is invariant under deformations of the regions within the bulk~\cite{Kim2021,Kim2021a}. For the annulus $ABC$, we have proved above the analogous invariance under bulk-deformations of $ABC$. Therefore, for proving Eq.~\eqref{eq:same-appendix}, we can instead consider the subsystems $XYZ$ and $ABC$ shown in Fig.~\ref{fig:edge-appendix}(b) and (c). (Note that these subsystems are different from the ones shown in Fig.~\ref{fig:edge-appendix}(a).)
	
	Note that $I(A':Y|X)_{|\psi_{\rm 2D}\rangle}=I(C':Y|Z)_{|\psi_{\rm 2D}\rangle}=0$. This is because 
	\begin{equation}
		\begin{aligned}
			I(A':Y|X)_{|\psi_{\rm 2D}\rangle}  & =(S_{XY} + S_{YZW} -S_X - S_{ZW})_{|\psi_{\rm 2D}\rangle}-I(B'C':Y|ZW)_{|\psi_{\rm 2D}\rangle} \\
			&\leq (S_{XY} + S_{YZW} -S_X - S_{ZW})_{|\psi_{\rm 2D}\rangle}\\
			&= 0.
		\end{aligned}
		\label{eq:step1-v2}
	\end{equation}
	In the first line, we use the purity of the state; in the second line, the inequality follows from SSA. For deriving the third line, the important fact is that $XYZW$ is a disk away from the edge. Therefore, one can apply the bulk-axiom {\bf A1}. However, by SSA, the conditional mutual information is always non-negative, $I(A':Y|X)_{|\psi_{\rm 2D}\rangle}\geq 0$
	Thus, we conclude that $I(A':Y|X)_{|\psi_{\rm 2D}\rangle}=0$. A similar argument leads to $I(C':Y|Z)_{|\psi_{\rm 2D}\rangle}=0$.
	
	Using Lemma~\ref{lemma:deformation} and the purity of the global state, we find that:
	\begin{equation}\label{eq:JJJ-appendix}
		J(X,Y,Z)_{\vert \psi_{\rm 2D} \rangle }= J(A'X,Y,C'Z)_{\vert \psi_{\rm 2D} \rangle }= -J(A,B,C)_{\vert \psi_{\rm 2D} \rangle }.
	\end{equation}
	
	To prove the third statement of Proposition~\ref{prop:EB-1}, we observe that all the steps that lead to Eq.~\eqref{eq:JJJ-appendix} can still go through as long as the area law in the bulk disk $XYZW$ holds. Therefore, $J(X,Y,Z)_{\vert \widetilde{\psi}_{\rm 2D} \rangle } = -J(A,B,C)_{\vert \widetilde{\psi}_{\rm 2D} \rangle }$. Because $U_{\rm edge}$ is supported within $A'B'C'$, $| \widetilde{\psi}_{\rm 2D} \rangle$ and $| \psi_{\rm 2D} \rangle$ must have identical reduced density matrix on the disk $XYZW$.
	 Thus, $J(X,Y,Z)_{\vert \psi_{\rm 2D} \rangle } = J(X,Y,Z)_{\vert \widetilde{\psi}_{\rm 2D} \rangle } $. Therefore,  $J(A,B,C)_{\vert \psi_{\rm 2D} \rangle } = J(A,B,C)_{\vert \widetilde{\psi}_{\rm 2D} \rangle } $. This completes the proof of Eq.~(\ref{eq:U-appendix}). 
	
	Lastly, we prove the fourth statement of Proposition~\ref{prop:EB-1}. It is crucial to use axiom {\bf A0} here. Axioms {\bf A0} and {\bf A1} for disks of bounded radii (Fig.~\ref{fig:EB-setup-axioms}) imply that the same conditions hold on larger regions. Therefore, we can apply the enlarged version of {\bf A0} on the bulk disk $DE$ in Fig.~\ref{fig:edge-appendix}(d). The condition reads, 
	\begin{equation}
		(S_{DE} + S_D - S_E)_{| \psi_{\rm 2D} \rangle} =0 \quad \Longrightarrow \quad I(D:F|E)_{| \psi_{\rm 2D} \rangle} =0.
	\end{equation}
    A pure state with zero conditional mutual information has the following structure decomposition~\cite{Hayden2004}:
    \begin{equation}
    	| \psi_{\rm 2D} \rangle = | \psi_{DE_L} \rangle \otimes | \psi_{E_R F}\rangle.
    \end{equation}
    Here the labels $E_L$ and $E_R$ are associated with a decomposition of Hilbert space $\mathcal{H}_E = (\mathcal{H}_{E_L} \otimes \mathcal{H}_{E_R}) \oplus \cdots$. (Note, $E_L$ and $E_R$ are not labels for subsystems of the underlying lattice of the quantum system, unless the state is a product state.) Thus, any operator acting on $F$ will not change the reduced density matrix on $D$. In fact, $O_{F} | \psi_{\rm 2D} \rangle = \mu \, U_{EF} | \psi_{\rm 2D} \rangle$ for some real number $\mu$ and unitary operator $U_{EF}$. When $\mu \ne 0$, letting $\lambda=1/\mu$, we find $\lambda O_{\rm edge} \vert \psi_{\rm 2D}\rangle =U_{\rm edge} \vert \psi_{\rm 2D}\rangle$. Then statement 4 follows from statement 3. 
\end{proof}

\section{Angle conjecture in vacuum AdS$_{3}$ and the BTZ black hole}	

In order to properly formulate the ``angle conjecture'' of the main text, we need a theory of AdS$_3$/CFT$_2$ with a nonzero chiral central charge. For concreteness, we will consider the chiral gravity theory introduced in \cite{Li2008}. Every solution to Einstein's equations is also a solution of this theory. In particular, as explained in that reference, the vacuum state of a 1+1D, chiral, holographic CFT is dual to ordinary AdS$_3$ spacetime, but with modified boundary conditions that create a distinction between asymptotically right-moving and left-moving modes. Further, thermal states above the Hawking-Page temperature \cite{Hawking1982,Witten1998,Headrick2019} are still dual to the BTZ black holes of the non-chiral theory, with a similar modification in boundary conditions. It was shown in \cite{castro2014holographic} that entanglement entropies of intervals on the $t=0$ slice of these spacetimes are given by local quantities on the corresponding RT surfaces. So while the actual local quantity changes as compared to the non-chiral theory --- picking up a term that depends on a canonical normal frame associated to each geodesic, in addition to the non-chiral area term --- the ordinary geodesics are still the relevant bulk curves to consider in computing entanglement entropies.

In this Section, we demonstrate that for the AdS$_3$ vacuum, the formula Eq.~\eqref{eq:general_J_AdS} matches our main result \eqref{eq:main_result} for the infinite system $L=\infty,$ and the result \eqref{eq:eta-eff} for finite $L$ in the limit $\beta = \infty.$ In the static BTZ case, we demonstrate that Eq.~\eqref{eq:general_J_AdS} matches Eq.~\eqref{eq:eta-eff} in the limit $\beta \ll L.$ This is the right limit of Eq.~\eqref{eq:eta-eff} to apply, as the BTZ black hole is the bulk dual of a boundary thermal state only in the regime $\beta < L,$ since the Hawking-Page temperature is $T = 1/L.$

	\subsection{Global $\mathrm{AdS}_3$}
	\label{appendix:global_ads3_crossing}
	
			\begin{figure}[h]
			\centering
			\includegraphics[scale=1.05]{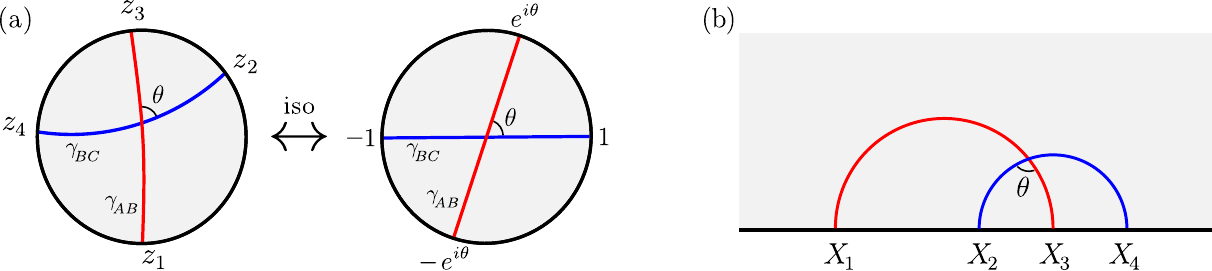}
			\caption{(a) The Poincar\'e disk model of hyperbolic plane $S$ and an isometry. (b) The upper half-plane model of the hyperbolic plane.}\label{fig:global-AdS}
			\end{figure}
			
	In the AdS$_3$/CFT$_2$ correspondence, the CFT vacuum state is dual to a bulk spacetime with the geometry of \textit{global AdS$_3$}. This is a spacetime with metric
	\begin{equation}\label{eq:global-AdS-coordinate}
		ds^2 = R^2(-\cosh^2 \rho \, dt^2 +d\rho^2 +\sinh^2 \rho \, d\phi^2),
	\end{equation}
	with $t\in (-\infty,\infty), \rho\in [0,\infty), \phi\in [0,2\pi).$ $R$ is the radius of curvature, and is related to the central charge $c := c_L + c_R$ by $c = 3 R / 2 G_N$~\cite{Brown1986}. Any constant-$t$ surface in AdS$_3$ is a \textit{moment of time symmetry}, meaning its extrinsic curvature vanishes. For any boundary region that lies entirely in a constant-$t$ slice of the boundary, the minimal geodesics homologous to that region lie entirely in the corresponding constant-$t$ slice of the bulk~\cite{engelhardt2014extremal}. Within such a slice, the induced metric is given by 
	\begin{equation}\label{eq:H2-metric-1}
		ds^2 = R^2(d\rho^2 +\sinh^2 \rho \, d\phi^2).
	\end{equation}
	
	The metric in Eq.~\eqref{eq:H2-metric-1} describes the hyperbolic plane $\mathbb{H}^2$. The hyperbolic plane is non-compact and therefore does not have a boundary. However, it has a ``conformal boundary,'' meaning that it is conformally equivalent to a metric that is extendible to a boundary curve at $\rho = \infty.$ This equivalence can be realized via the coordinate transformation $\rho \mapsto 2 \tanh^{-1}{r},$ which gives the ``Poincar\'{e} disk'' metric
	\begin{equation} \label{eq:metric_hyperbolic}
	    ds^2 = \frac{4 R^2}{1 - r^2} (dr^2 + r^2 d\phi^2).
	\end{equation}
	If we divide out by the global conformal factor $4 / (1-r^2),$ then the resulting metric can be extended smoothly to the boundary $r = 1,$ i.e., $\rho = \infty.$ This boundary has the geometry of a circle of radius $R.$ In the AdS$_3$/CFT$_2$ correspondence, the CFT vacuum state can be thought of as living on this circle.

   Geodesics of the Poincar\'{e} disk are circular arcs (or straight lines) perpendicular to the boundary $r=1$. Because the metric Eq.~\eqref{eq:metric_hyperbolic} is conformally related to the flat metric $dr^2 + r^2 d\phi^2,$ the apparent crossing angles in the plane are identical to the geometric crossing angles. In complex coordinates $z = r e^{i \phi},$ every isometry of the Poincar\'{e} disk is represented by a fractional linear transformation that maps the circle $|z| = 1$ to itself; these are expressed as
    \begin{equation}
    	z\rightarrow e^{i\alpha} \frac{z-\xi}{1-\bar{\xi}z},
    \end{equation}
    where $\alpha \in [0,2\pi)$ and $|\xi|<1$ is a complex number. These transformations form the isometry group $SL(2,R)$ of the Poincar\'{e} disk. As is well-known, such transformations leave complex cross-ratios $\eta_{\rm complex}\equiv {(z_{12}z_{34})}/{(z_{13}z_{24})}$ invariant, where $\{z_j\}$ are points on the complex plane. If $z_1$, $z_2$, $z_3$ and $z_4$ are points on the circle $|z|=1$, then the cross-ratio $\eta_{\rm complex}$ becomes real, and equals the cross-ratio defined in terms of the chord distance:
    \begin{equation}\label{eq:x-ratio-L}
    	\eta = \frac{\sin(\phi_{12}/2)\sin(\phi_{34}/2)}{\sin(\phi_{13}/2)\sin(\phi_{24}/2)}.
    \end{equation}
   
   Given any two intersecting geodesics in $\mathbb{H}^2,$ one can apply an isometry to send the intersection point to the center of the disk. The two geodesics become straight lines and the four boundary points are mapped to two pairs of antipodal points; see Fig.~\ref{fig:global-AdS}(a). Then it follows from a simple trigonometry calculation that
   \begin{eqnarray}\label{eq:angle_AdS}
   	\cos\theta = 2\eta -1, \quad  \textrm{where $\eta$ is given by Eq.~\eqref{eq:x-ratio-L}.}
   \end{eqnarray}
   With the identification $\phi = 2 \pi x / L,$ this matches the finite-$L$ vacuum formula given in the main text in Eq.~\eqref{eq:eta-eff}.\footnote{In fact, this identity can be verified explicitly using the coordinates \eqref{eq:H2-metric-1} as well, making use of the explicit geodesic solution
   	\begin{equation}
   		\gamma_{AB}:~\tanh \rho \cos \left(\phi-\frac{\phi_1+\phi_3}{2}\right) = \cos \left(\frac{\phi_{13}}{2}\right).
   	\end{equation}
   	}
   
   To study the $L=\infty$ realization of the AdS$_3$/CFT$_2$ duality, we make the coordinate transformation
   \begin{equation}
   	z = \frac{w-i}{w+i},
   \end{equation}
   where $w=X+iY$ with $Y>0$. This maps the hyperbolic disk to the upper half-plane (see Fig.~\ref{fig:AdS_appendix}(c)). In planar coordinates $(X,Y)$, the metric becomes
   \begin{equation}
   	ds^2 = \frac{R^2}{Y^2}(dX^2+dY^2).
   \end{equation}
   In these coordinates, geodesics are semicircles or straight lines perpendicular to the boundary at $\{Y=0\} \cup \{\infty\}.$ One can easily verify
   \begin{equation}\label{eq:x-ratio-upper}
   \cos\theta =	2\eta -1,   \quad  \textrm{where}\quad   \eta = \frac{X_{12} X_{34}}{X_{13} X_{24}}.
   \end{equation}
Here $X_j$ are the $X$ coordinates of four points on the boundary, with $X_{ij} \equiv X_j-X_i$. This matches the main result Eq.~\eqref{eq:main_result}.
	
	\subsection{BTZ black hole}
	\label{appendix:crossing_btz}
	
		BTZ black holes are a class of solutions in 2+1D Einstein gravity with a negative cosmological constant~\cite{BTZ1992,BHTZ1993}. These solutions are labeled by the black hole mass $M$ ($M>0$) and the angular momentum $J$.\footnote{As explained in \cite{Li2008}, the asymptotically measured mass and angular momenta differ in a chirally modified theory from the ``bare'' quantities appearing in the metric.} Importantly, the solutions can be realized as quotients of global $\mathrm{AdS}_3$~\cite{Carlip1995}.  We shall be interested in the non-rotating solutions with $J=0.$ These solutions are expressed in ``Schwarzschild coordinates'' as
			\begin{equation}\label{eq:Schwarzschild-coordinate}
			ds^2 = -\frac{(r^2-r^2_{+})}{R^2} dt^2+\frac{R^2}{r^2-r^2_{+}} dr^2 +r^2 d\phi^2,\qquad t \in (-\infty, +\infty),\,\, r \in (r_+ , +\infty), \,\, \phi \in [0, 2\pi),
		\end{equation}
	    where $r_+ = \sqrt{M} R$ is the  black hole horizon. The singularity at $r=r_+$ is a coordinate singularity, and the metric can be analytically extended past this point to the ``black hole interior.'' The inverse temperature of the black hole is given by $\beta = 2 \pi R^2 / r_+ = L/\sqrt{M}.$

We shall be interested in the $t=0$ slice of the BTZ exterior.  The induced metric on this slice is $ds^2= \frac{R^2}{r^2-r^2_{+}} dr^2 +r^2 d\phi^2$. The coordinate transformation
\begin{equation}
	    		X= \frac{\sqrt{r^2-r_+^2}}{r}   \exp\left( \sqrt{M} \phi \right) \quad \textrm{and} \quad
	    		Y=  \frac{r_+}{r} \exp\left(\sqrt{M} \phi\right)
\end{equation}
identifies this single-time slice with the subset $1 \leq X^2 + Y^2 \leq \exp(4 \pi \sqrt{M}), X\geq 0$ of the upper half-plane with metric
\begin{equation}
	ds^2 = \frac{R^2}{Y^2} ( dX^2 +dY^2).
\end{equation}
In this mapping, the curves $X^2 + Y^2 = 1$ and $X^2 + Y^2 = e^{4 \pi \sqrt{M}}$ are identified. In fact, by taking $\phi$ outside of the range $[0, 2 \pi),$ we see that the $X \geq 0$ region of the upper half-plane, minus the point $X=Y=0,$ ``wraps around'' the BTZ exterior infinitely many times. Each strip $\exp(4 \pi n \sqrt{M}) \leq X^2 + Y^2 \leq \exp(4 \pi (n+1) \sqrt{M})$ corresponds to a single copy of the BTZ exterior. This is the statement that the BTZ black hole can be realized as a quotient of AdS$_3.$

We may specify three contiguous boundary intervals in these coordinates by choosing consecutive points $X_1, X_2, X_3, X_4$ in the interval $[1, \exp(2 \pi \sqrt{M})].$ Geodesics anchored on these boundary intervals are represented in these coordinates as circular arcs perpendicular to the boundary. However, there are two geodesics anchored to any two points $X_{j}$ and $X_{k}$; one that is a continuous segment lying entirely in the fundamental domain $1 \leq X^2 + Y^2 \leq \exp(2 \pi \sqrt{M}),$ and there is one that ``wraps around the quotient'' by going through the identified bulk curves; see Fig.~\ref{fig:AdS_appendix}. While these ``wrapping'' curves are not homologous to the corresponding boundary intervals, the union of one of these ``wrapping'' curves with the BTZ horizon $X = 0, Y \in [1, \exp(2 \pi \sqrt{M})]$ \textit{is} homologous. Whether the non-wrapping geodesic, or the wrapping geodesic plus the horizon, has minimal area and is thus the RT surface of the corresponding interval, depends on the size of that interval. Above a certain critical angular size $\Delta \phi_c$, the minimal geodesic of an interval is the wrapping one plus the horizon; below this angular size, the minimal geodesic is the non-wrapping one. Explicit calculation of the various geodesic lengths shows that the critical angular size satisfies
\begin{equation}
    \frac{\sinh(\sqrt{M} \Delta \phi_c/2)}{\sinh(\sqrt{M} (2 \pi - \Delta \phi_c)/2)}
        = \exp(\sqrt{M} \pi).
\end{equation}
In the large-$M$ limit, this asymptotes to
\begin{equation}
    \Delta \phi_c \sim 2 \pi - \frac{\log{2}}{\sqrt{M}}.
\end{equation}

\begin{figure}[h]
	\centering
	\subfloat[BTZ black hole]{\includegraphics[width=0.18\linewidth]{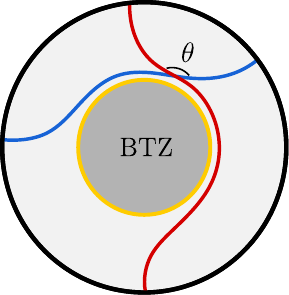}}
	\hspace{0.8cm}
	\subfloat[BTZ black hole as quotient of AdS]{\includegraphics[width=0.35\linewidth]{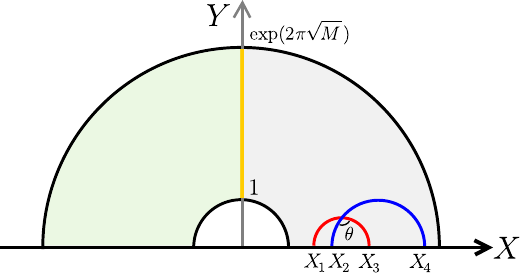}}
	\hspace{0.8cm}
	\subfloat[a ``wrapping" geodesic]{\includegraphics[width=0.35\linewidth]{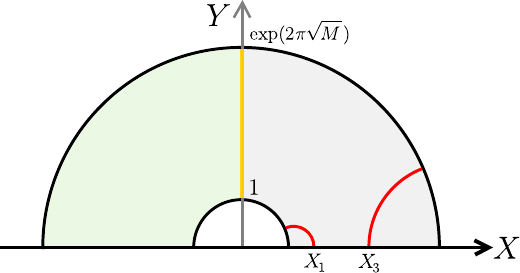}}
	\caption{
		A single time-slice of the BTZ black hole geometry outside the horizon. (a) The red and blue lines represent geodesics. The yellow circle represents the horizon. The boundary at infinity is the black circle. Unlike the Poincar\'e disk, the apparent crossing angle here does not faithfully  represent the geometric crossing angle. (b) Embedding into the upper half-plane. The inner semicircle and the outer semicircle are identified. The green region on the left of the horizon (yellow) is the other side of the two-sided Kruskal-like extension of the BTZ black hole (outside the horizon). The red and blue lines are the ``non-wrapping'' geodesics for the intervals $[X_1, X_3]$ and $[X_2, X_4].$ (c) An example of the ``wrapping'' geodesic for the interval $[X_1, X_3],$ which is the minimal one when its length, plus the horizon length, is smaller than the length of the non-wrapping geodesic.}
	\label{fig:AdS_appendix}
\end{figure}	

To completely test the conjecture Eq.~\eqref{eq:general_J_AdS}, we would need to test the cases: (i) neither geodesic wraps the quotient, (ii) one geodesic wraps the quotient, (iii) both geodesics wrap the quotient. However, the limit under which we have analytical control over the modular commutator in a thermal state via Eq.~\eqref{eq:eta-eff} is $\sqrt{M} = L/\beta \rightarrow \infty.$ In that limit, we see that the critical angular distance becomes $\Delta \phi_c = 2 \pi.$ So to check that the angle conjecture reproduces the analytic formula in part two of Eq.~\eqref{eq:eta-eff}, we need only check case (i).

In this case, which is shown in Fig.~\ref{fig:AdS_appendix}(a),(b), the crossing angle can be computed using the parameters in the upper half-plane model in the same manner as Eq.~\eqref{eq:x-ratio-upper}.
We write
\begin{equation}\label{eq:idea-effective-1}
	\cos \theta = 2\eta_{\rm eff} -1, \qquad \textrm{where}\quad \eta_{\rm eff} = \frac{X_{12}X_{34}}{X_{13} X_{24}}.
\end{equation}
To write $\eta_{\rm eff}$ in terms of the parameters $x_i \in [0, L)$ of the boundary circle, we rewrite
\begin{equation}\label{eq:idea-effective-2}
	X_j = \exp \left(\frac{2\pi \sqrt{M} x_j}{L}\right) \quad \Rightarrow \quad \eta_{\rm eff} = \frac{\sinh\left(\pi \sqrt{M} x_{12}/L\right)\sinh\left(\pi \sqrt{M} x_{34}/L\right)}{\sinh\left(\pi \sqrt{M} x_{13}/L\right)\sinh\left(\pi \sqrt{M} x_{24}/L\right)}.
\end{equation}
This matches part two of Eq.~\eqref{eq:eta-eff} once we identify $\sqrt{M} = L / \beta$.

\section{Beyond contiguous intervals}

	\begin{figure}[h]
	\centering
	\includegraphics[scale=1.05]{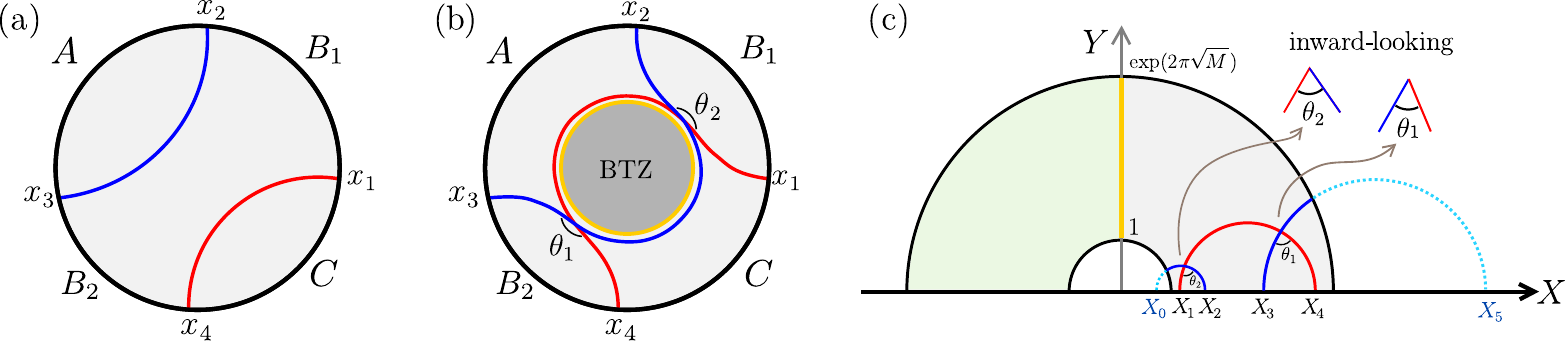}
	\caption{The boundary circle is partitioned into $ABC$ where $B$ is the union of disjoint intervals $B_1$ and $B_2$. (a) In pure AdS, the two geodesics ($\gamma_{AB}$ and $\gamma_{BC}$) have no intersection. (b) For the BTZ black hole geometry in the limit $\beta \ll L$, the two geodesics intersect at two places.  (c) Embedding of the region in the upper half-plane. The angles $\theta_1$ and $\theta_2$ are represented accurately. They can be calculated from $\{X_i\}_{i=0}^5$. }
	\label{fig:AdS_disjoint}
\end{figure}

	We conjecture that the contributions from each crossing angle are additive. Specifically, suppose the boundary CFT is dual to a semiclassical bulk geometry with a moment of time symmetry and the RT surfaces intersect at a set of crossing angles $\{\theta_i\}$. We conjecture that
	\begin{equation}\label{eq:general_J_AdS-appendix}
		\boxed{
			J(A,B,C) = \frac{ \pi c_-}{6} \sum_i s_i \cos\theta_i.
		}
	\end{equation}
    Here $s_i= \pm 1$ are introduced to fix the sign convention. Let us explain the rule for assigning $s_i$ to angle $\theta_i$. If the angle $\theta_i$ has geodesic $\gamma_{AB}$ on its right from an inward-looking point of view, we assign $s_i=1$; otherwise, $s_i=-1$.  
    
    Let us make a few side remarks. First, we only need to pick one crossing angle (out of the four) at each intersection point. It does not matter which angle we pick, because different choices are related by the sign convention $s_i$. This prescription automatically ensures $J(A,B,C) = -J(C,B,A)$, which is a property that must hold generally for the modular commutator.  Secondly, one may ask what happens if the bulk spacetime does not possess a moment of time symmetry, and so the geodesics do not intersect. This can happen for more interesting choices of intervals of the boundary theory, such as those on a general Cauchy surface discussed in Section~\ref{sec:Cauchy}. We believe that the conjecture can be generalized to that setup, and we leave this to future investigation. 

As a warmup, let us first consider the pure $\mathrm{AdS}_3$ case with the subsystems $A$, $B$, and $C$ covering the entire boundary. We choose $B$ to be a union of two disjoint intervals, denoted as $B_1$ and $B_2$; see Fig.~\ref{fig:AdS_disjoint}(a). In the Poincar\'e disk model, it is evident that the two geodesics ending on disjoint boundary intervals ($A$ and $C$) cannot intersect. Furthermore, geodesics $\gamma_{AB}$ and $\gamma_{C}$ are identical since, on a (global) hyperbolic plane, two points determine a unique geodesic. Similarly, $\gamma_A$ is identical to $\gamma_{BC}$ for the same reason. Therefore, $\gamma_{AB}$ and $\gamma_{BC}$ do not intersect. Thus, according to our holographic conjecture 
\begin{equation}\label{eq:ABBC-fact1}
	\boxed{
	J(A,B,C)_{|\Omega\rangle}=0
	}
\end{equation}
for the CFT vacuum $|\Omega\rangle$. Note that this holds for any CFT, even the chiral ones. This fact can be independently verified by the operator-based method of Section~\ref{appendix:operators} as follows. Suppose the boundary circle has a circumference of $L$ and $0<x_1<x_2<x_3<x_4< L$. We can derive Eq.~\eqref{eq:ABBC-fact1} using the Cardy-Tonni expression of the modular commutator~\cite{cardy_entanglement_2016}. Let $B_1=[x_1,x_2]$ and $B_2=[x_3,x_4]$ with $0\leq x_1<x_2<x_3<x_4< L$, and $A=[x_2,x_3]$ and $C=[x_4,L+x_1]$, then 
\begin{equation}
	K_{AB} = \frac{L}{\pi}\int_{x_1}^{x_4} dx\, \frac{\sin(\pi (x-x_1)/L)\sin(\pi(x_4-x)/L)}{\sin(\pi x_{14}/L)}(T(x)+\bar{T}(x))
\end{equation}
and
\begin{equation}
	K_{BC} = \frac{L}{\pi}\int_{x_3}^{L+x_2} dx\, \frac{\sin(\pi (x-x_3)/L)\sin(\pi(L+x_2-x)/L)}{\sin(\pi (L-x_{23})/L)}(T(x)+\bar{T}(x))
\end{equation}
Computing the commutator using Eqs.~\eqref{eq:uu-commutator} and \eqref{eq:vv-commutator} and identifying $T(x)$ with $T(x+L)$ for $x\in [x_1,x_2]$ we find $J(A,B,C)$ identically vanishes.

Now we discuss a case with a pair of crossing angles $\{\theta_1,\theta_2\}$. This is similar to the setup discussed above, except now we have a BTZ black hole; see Fig.~\ref{fig:AdS_disjoint}(b) and (c). As in the previous Section, we will restrict our attention to the limit $\beta \ll L$ so we may always consider the ``non-wrapping'' RT surfaces instead of the ``wrapping'' ones. We shall derive an expression for the modular commutator using our proposed correspondence; this result shall be compared against a numerical experiment, discussed in Section~\ref{appendix:numerical_test_holo}.

Let us first derive the expressions for the crossing angles on the AdS side. By mapping the BTZ exterior to a portion of the upper half-plane as in Section~\ref{appendix:crossing_btz}, for black holes in the limit $\beta \ll L$, one can verify 
\begin{equation}
	\label{eq:BTZ_angle_eff}
		\cos \theta_1 = 2\frac{X_{13}X_{45}}{X_{14}X_{35}}-1 \quad \textrm{and} \quad 
		\cos \theta_2 = 2\frac{X_{01}X_{24}}{X_{02}X_{14}}-1.
\end{equation}
Here $X_0$ and $X_5$ are the images of $X_4$ and $X_2$ in the neighboring preimages of the quotient; see Fig.~\ref{fig:AdS_disjoint}(c). The fractions $({X_{13}X_{45}})/({X_{14}X_{35}})$ and $({X_{01}X_{24}})/({X_{02}X_{14}})$ can be interpreted as effective cross ratios; see Eq.~\eqref{eq:idea-effective-1} for a comparison. The coordinates $\{X_i\}_{i=0}^5$ are the endpoints on the $X$ axis, and they are related to the coordinates on the boundary circle $\{x_j\}_{j=1}^4$ according to the relation
$	X_j = \exp( {2\pi \sqrt{M}x_j}/{L})$ for $j=1,2,3,4 $, $X_0= \exp(-2\pi \sqrt{M})X_3$ and $X_5= \exp(2\pi \sqrt{M})X_2$. In other words, $\cos\theta_1$ and $\cos\theta_2$ are functions of the boundary coordinates $\{x_j\}_{j=1}^4$ and $\sqrt{M}/L$.

Now we relate the crossing angles to the CFT data. In the high-temperature regime $\beta \ll L,$ i.e. $M \gg 1$, our proposal for the thermal state $\rho^{\beta}$ reads:
\begin{equation}
	\label{eq:J2_BTZ}
	\boxed{
	J(A,B,C)_{\rho^{\beta}} = \frac{\pi c_{-}}{6} (\cos \theta_1 - \cos \theta_2).
	}
\end{equation}
A minus sign appears in the second term due to the rule we specified above. The two angles are related to the effective cross-ratio
\begin{equation}
\label{eq:eta_eff_appendix}
	\eta^{\beta}_{\text{eff}}(a, b, c) \equiv \frac{\sinh(\pi a/\beta)\sinh(\pi c/\beta)}{\sinh(\pi (a+b)/\beta)\sinh(\pi (b+c)/\beta)}
\end{equation}
according to
\begin{equation}
	\label{eq:BTZ_angle_2}
	\begin{aligned}
		\cos \theta_1 &= 2\eta^{\beta}_{\text{eff}}(|AB_1|, |B_2|, |B_1C|)-1, \\
		\cos \theta_2 &= 2\eta^{\beta}_{\text{eff}}(|B_2C|, |B_1|, |AB_2|)-1,
	\end{aligned}
\end{equation}
where $|A|$ is the length of the boundary interval $A$. As a simple consequence, Eq.~\eqref{eq:eta_eff_appendix} gives nonzero result only if $x_{12} \ne x_{34}$, that is $|B_1| \ne |B_2|$.

	\subsection{Numerical test of Eq.~\eqref{eq:J2_BTZ}}
	\label{appendix:numerical_test_holo}

	Eq.~\eqref{eq:J2_BTZ} can be tested numerically for chiral thermal states of a free fermion CFT, which is a nonchiral 1+1D CFT with central charge $c=1/2$ ($c=c_L=c_R$). Numerical techniques are discussed in Section~\ref{appendix:ffnumerics}.
	 Consider a chiral thermal state $\rho^{(\beta_L,\beta_R;L)}$, i.e., a chiral thermal state with inverse temperature $\beta_L$ ($\beta_R$) for the left (right) moving modes on a circle with length $L$.  Consider the chiral thermal state with $\beta_R \to + \infty$ and consider the $ABC$ partition in Fig.~\ref{fig:AdS_disjoint}. It follows directly from the discussion in the previous Section that
	\begin{equation}\label{eq:chiral-thermal-2thetas}
	J(A,B,C)_{\rho^{(\beta_L,+\infty;L)}} = \frac{\pi }{6} \left( 
	\frac{\sinh(\pi |AB_1|/\beta)\sinh(\pi |B_1C|/\beta)-\sinh(\pi |AB_2|/\beta)\sinh(\pi |B_2C|/\beta)}{\sinh(\pi |AB|/\beta)\sinh(\pi |BC|/\beta)}
	\right).
	\end{equation}
   
   	\begin{figure}[h]
		\centering
		\subfloat[]{\includegraphics[width=0.40\linewidth]{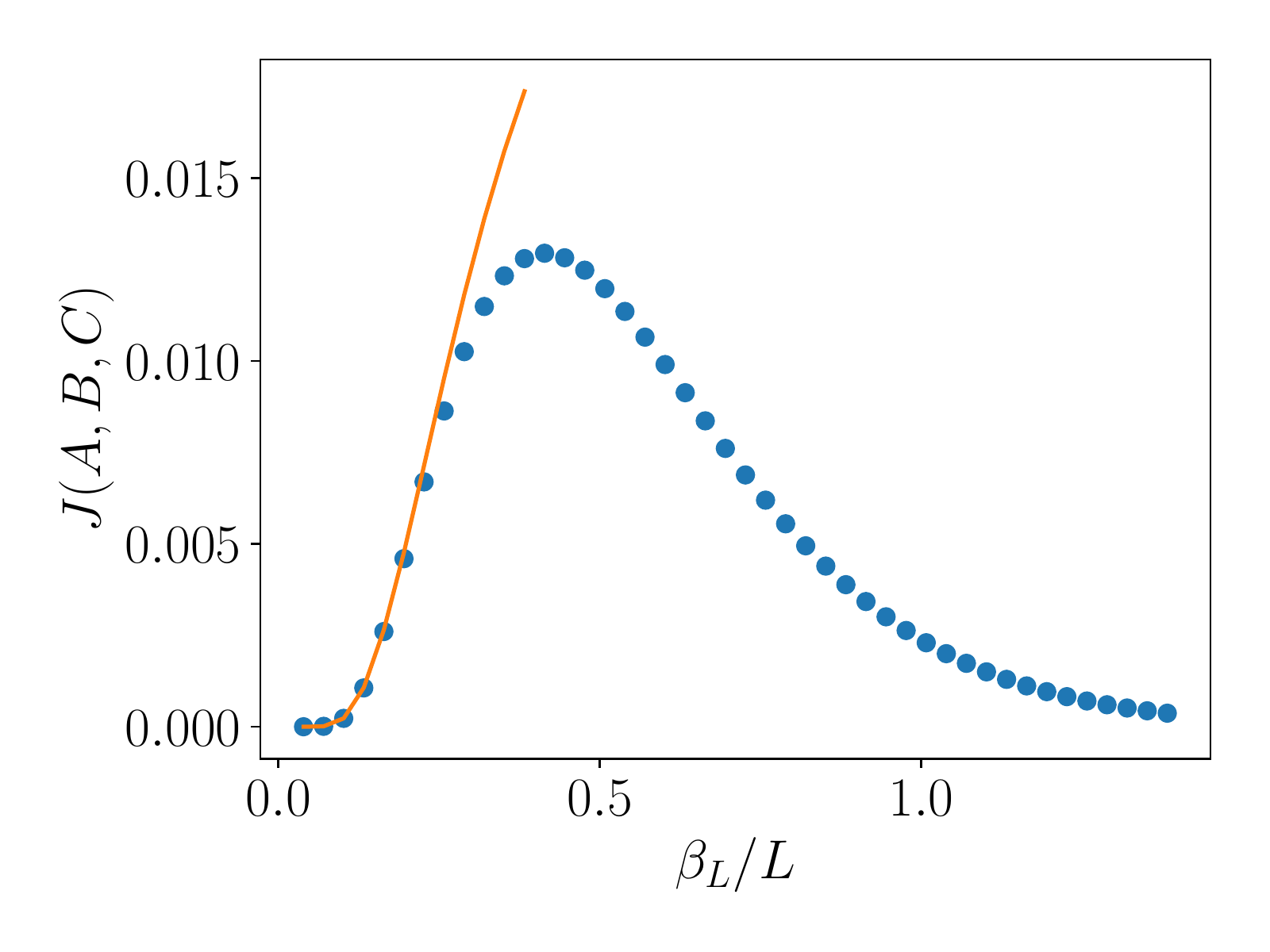}}
		\hspace{0.2cm}
		\subfloat[]{\includegraphics[width=0.40\linewidth]{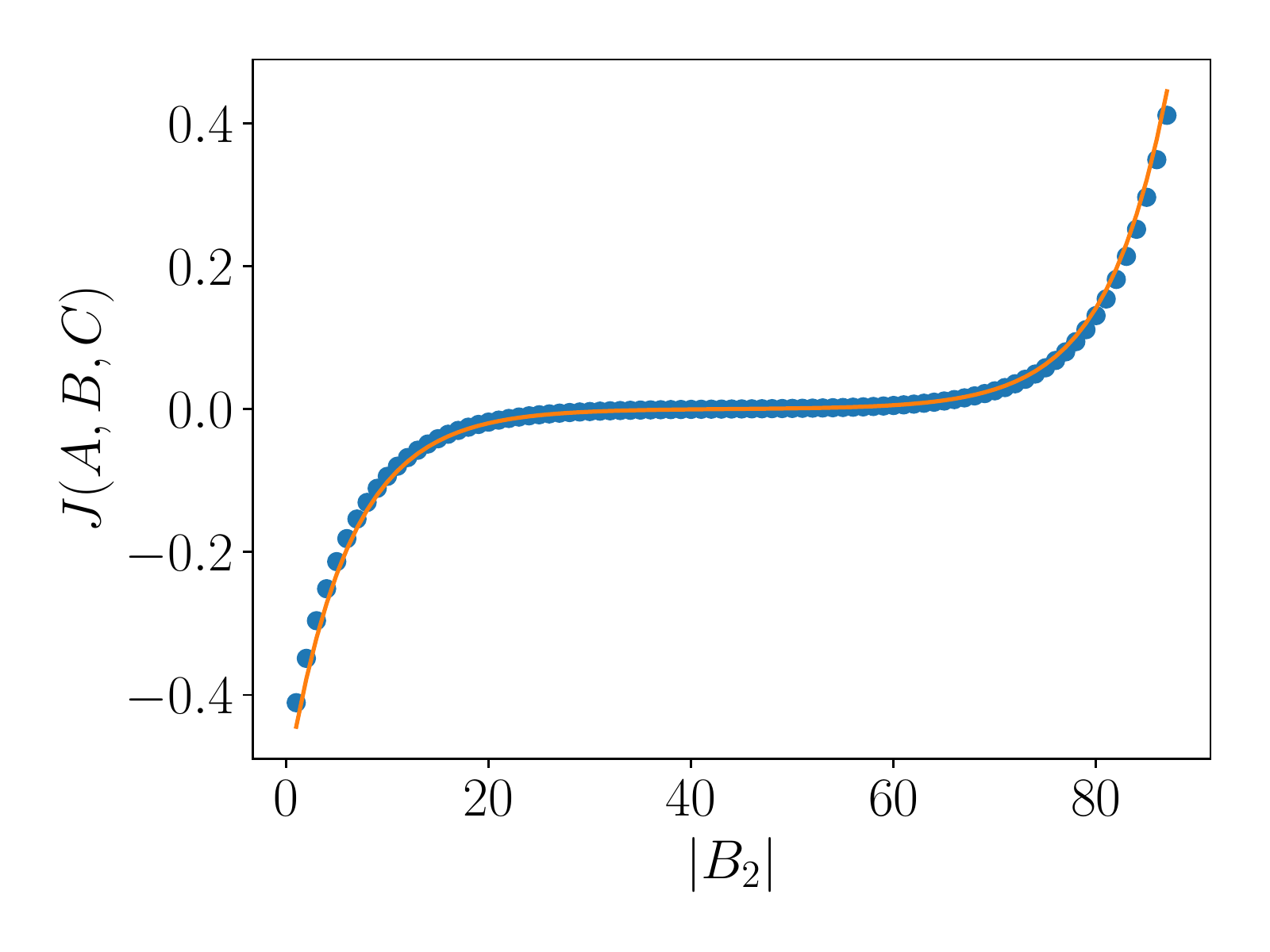}}
		\caption{Modular commutator $J(A,B,C)$ of a chiral thermal state $\rho^{(\beta_L,\beta_R,L)}$ of the free Majorana fermion CFT, where $B$ is composed of disjoint intervals $B_1$ and $B_2$. We fix $\beta_R=\infty$ and $L=128$ throughout the simulation. Blue dots represent numerical data and the orange line represents fitting from Eq.~\eqref{eq:chiral-thermal-2thetas}. (a) We fix the lengths of intervals $|A|=|C|=L/4, |B_1| = L/4-2, |B_2|=L/4+2$, and varying $\beta_L$. (b) We fix $\beta_L=0.15L$ and $|A|=|C|=20$ and vary $|B_2|$. }\label{fig:AdS_disjoint-text}
	\end{figure}
   
   First, a consequence of Eq.~\eqref{eq:chiral-thermal-2thetas} is that $J(A,B,C)_{\rho^{(\beta_L,+\infty;L)}}=0$ when $x_{12}=x_{34}$ (i.e., $|B_1|=|B_2|$).
	We can therefore make slight perturbation around this ``symmetric" configuration and see how $J$ changes. 

	Eq.~\eqref{eq:chiral-thermal-2thetas} is confirmed numerically for $\beta_L \ll L$ in the chiral thermal state of the free fermion CFT; see Fig.~\ref{fig:AdS_disjoint-text}(a). We see from this figure that when $\beta_L/L$ is small ($\beta_L/L<0.2$) the prediction fits the data well. 
	
	Second, if we keep $\beta_L = 0.15 L$ and vary the sizes of the intervals, our conjecture agrees with the data in good precision. This is illustrated in Fig.~\ref{fig:AdS_disjoint-text}(b).
	We note however that if the size of $|A|$ or $|C|$ are small compared with $\beta_L$, the data can deviate from our formula. The correction may come from three sources. (1) The free Majorana fermion is not a holographic CFT,
	(2) Even for holographic CFTs, for many boundary intervals there is a critical value of $\beta/L$ at which that interval enters the ``wrapping'' phase of its RT surface, changing the intersection angles appearing in our conjecture for the modular commutator, (3) Holographic CFTs are geometric only in the limit of large central charge, and finite-central-charge corrections may be important. We leave the study of these corrections for future work.

	\section{Free fermion numerical methods and simulation results}
	\label{appendix:ffnumerics}
	We discuss the numerical method we use for the simulation of free fermion systems. The main simplification arising in this setup is that both the ground states and thermal states are Gaussian. Therefore, the state is determined by the correlation matrix~\cite{Peschel2009}. 
	
	\subsection{Free fermion simulation of the Chern insulator}
	Consider first the $U(1)$ symmetric case with a chain of fermionic degrees of freedom. The creation and annihilation operators are denoted $c^{\dagger}_{i},c_i$, where $i=1,2\cdots N$ and $\{c^{\dagger}_{i},c_j\} = \delta_{ij}$. A Gaussian state $\rho$ is completely determined by the $N\times N$ correlation matrix
	\begin{equation}
		C_{ij} = \Tr(\rho c^{\dagger}_{i} c_j)
	\end{equation}
	The modular Hamiltonian $K=-\ln \rho$ is of the fermionic bilinear form
	\begin{equation}
		K = \sum_{ij} K_{ij} c^{\dagger}_i c_j.
	\end{equation}
	The matrix $K$ is related to the correlation matrix by
	\begin{equation}
		K = \ln \frac{I-C}{C}.
	\end{equation}
	The formula is applicable to any subsystem. Let $C_{AB}$ be the submatrix restricted to the fermions in region $AB$, then
	\begin{equation}
		K_{AB} = \ln \frac{I-C_{AB}}{C_{AB}}.
	\end{equation}
	The modular commutator can then by computed by
	\begin{equation}
		J(A,B,C) = i\Tr(C_{ABC} [K_{AB},K_{BC}])
	\end{equation}
	The Chern insulator can be modeled by the Hofstadter model \cite{Hofstadter1976,TKNN:82,Niu1985} in a two-dimensional square lattice,
	\begin{equation}
		H= -t\sum_{\vec{x},\vec{a}} (c^{\dagger}_{\vec{x}} e^{-i \vec{a} \cdot \vec{A}(\vec{x})} c_{\vec{x}+\vec{a}} + h.c.)+\mu \sum_{\vec{x}} c^{\dagger}_{\vec{x}} c_{\vec{x}}
	\end{equation}
	where $\vec{a}$ runs over lattice vectors and $\vec{A}$ is the vector potential which equals $\vec{A} = (0,B x_1)$ for the square lattice in the Landau gauge and $B$ is the flux per unit cell. We choose $\mu/t =-2.0$ and $B=\pi/2$ such that the lowest band is filled. The band has Chern number $1$, which gives the chiral central charge $c_-=1$. We put the system on an open cylinder with horizontal direction compactified on a circle with circumference $N$ and height $M$. We compute the modular commutator $J(A,B,C)$, where $A,B,C$ are rectangular strips on the boundary with the horizontal length $L_A,L_B,L_C$ and the same width $w$, where all of $L_A,L_B,L_C,w,M-w$ are sufficiently large compared to the correlation length. 
	
	The chiral edge mode is robust against small disorder. One may add an Anderson term~\cite{Anderson1958},
	\begin{equation}
		H= -t\sum_{\vec{x},\vec{a}} (c^{\dagger}_{\vec{x}} e^{-i \vec{a} \cdot \vec{A}(\vec{x})} c_{\vec{x}+\vec{a}} + h.c.)- \mu  \sum_{\vec{x}} c^{\dagger}_{\vec{x}} c_{\vec{x}} + \sum_{\vec{x}} V_{\vec{x}}c^{\dagger}_{\vec{x}} c_{\vec{x}},
	\end{equation}
	where $V_{\vec{x}}$'s are independent random variables drawn from the uniform distribution in $[-W/2,W/2]$, and $W$ is the disorder strength. The original Landau band gets broadened into a disordered band with width $O(W)$ that consists of localized eigenstates on the edge of the band and extended states in the middle of the band. As the chemical potential gets varied, the system goes through an Anderson localization transition and the Hall conductance jumps from $0$ to $1$ (in the unit of $e^2/h$). It is believed that the Anderson localization transition is responsible for the plateau of Hall conductance observed in experiments~\cite{Klitzing1980}, although the nature of the transition is still under debate \cite{Zhu2019,Sbierski2021}. 
	
	Since the modular commutator detects the chiral central charge, we expect there to be a similar transition in the modular commutator. Let $A,B,C$ be equal-size strips that cover the whole boundary, which gives $J(A,B,C) = \pi c/3$, it is expected then that the disordered averaged $J$ has a plateau transition as one varies the chemical potential. This is indeed observed, see Fig.~\ref{fig:J_2D_appendix}.  This also provides numerical evidence of the claim that $J$ is robust against disorder, as argued using entanglement bootstrap in the main text.
	\begin{figure}
		\centering
		\includegraphics[width=0.44\linewidth]{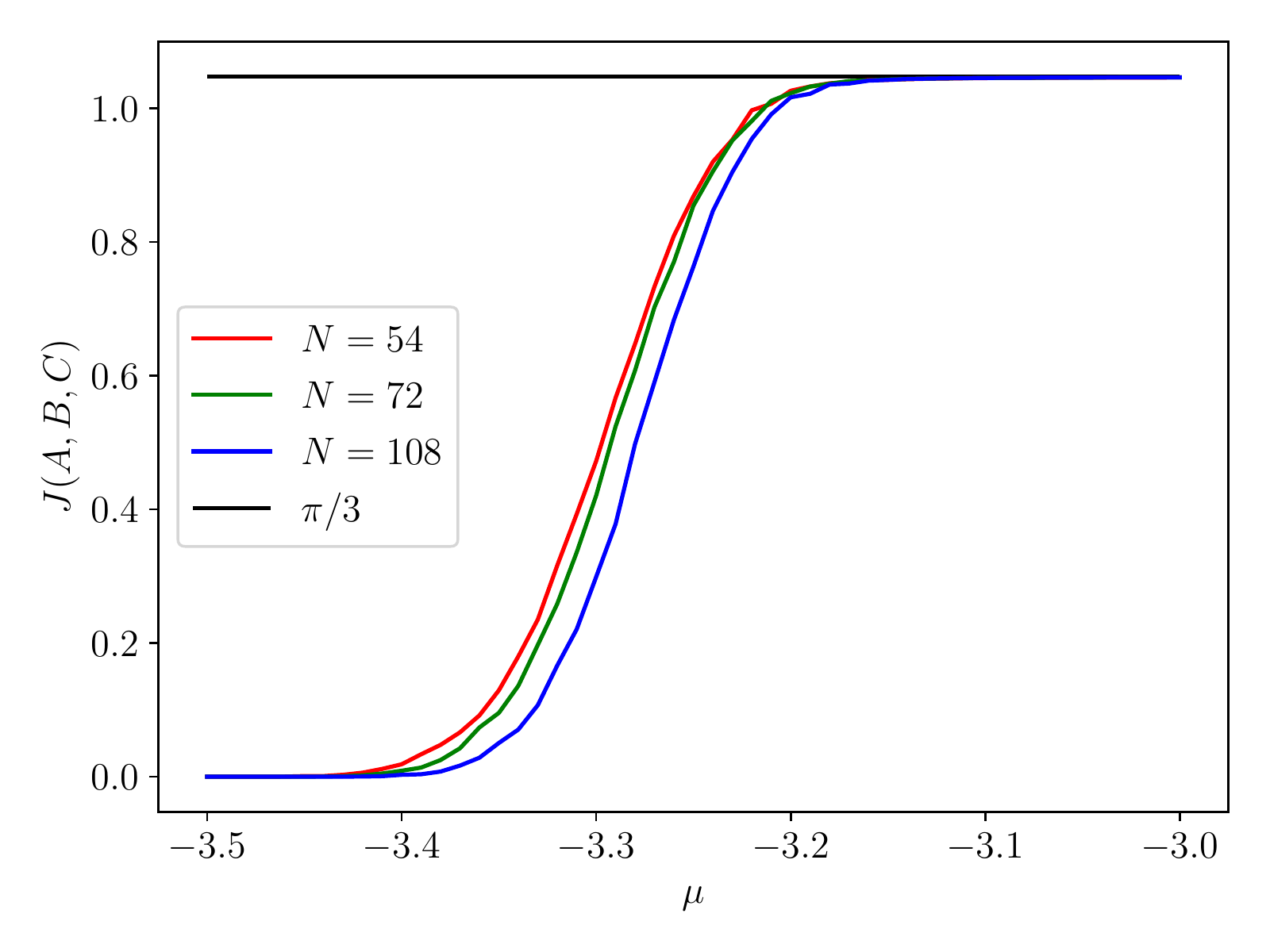}
		\includegraphics[width=0.44\linewidth]{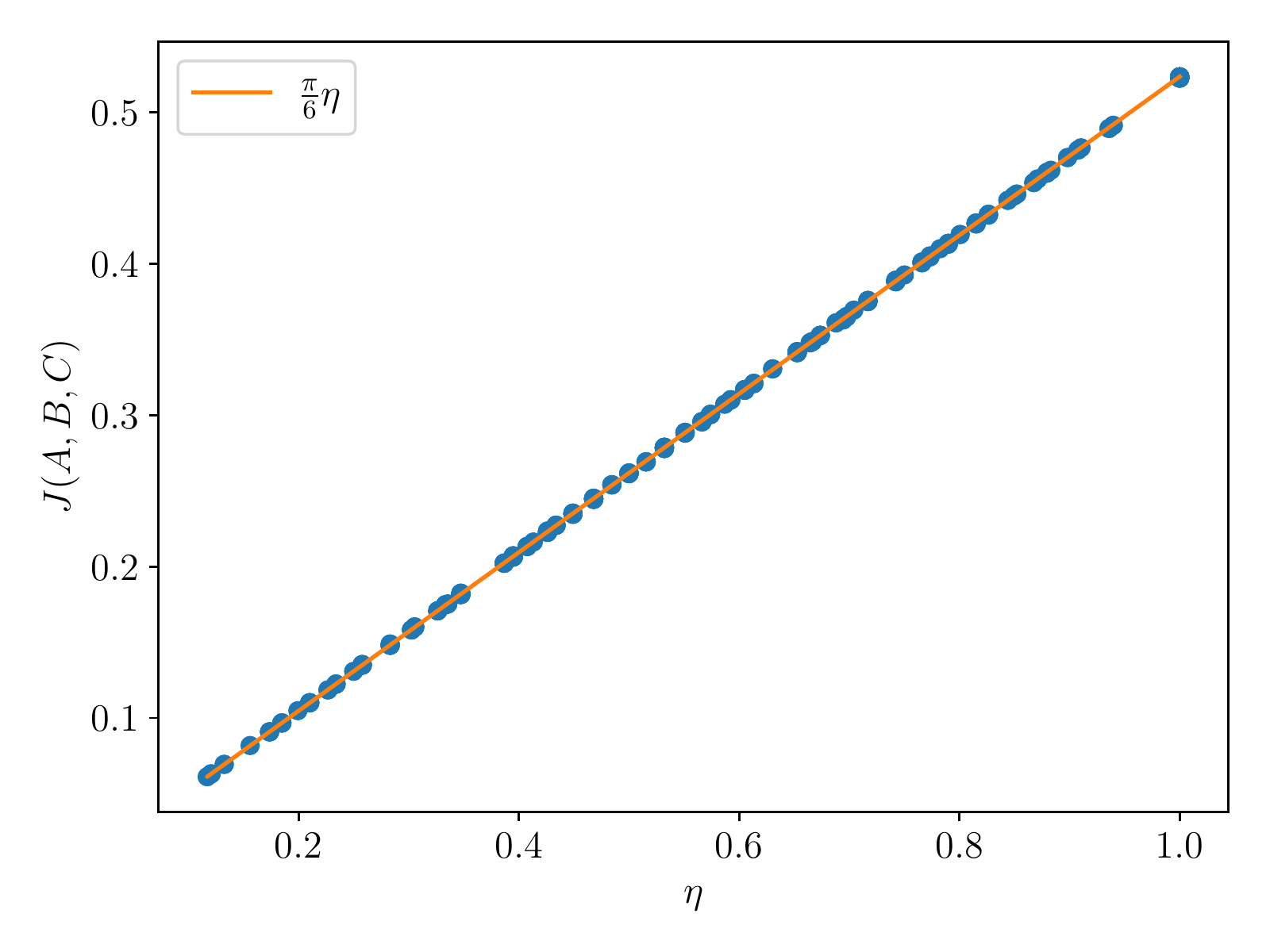}
		\caption{Left: Modular commutator $J(A,B,C)$ versus chemical potential $\mu$ in the disordered Chern insulator. We choose $B=2\pi/9$ and the width of the strips to be $w=6$, and $A,B,C$ cover the entire circle. Disordered average is performed on 300 samples. Right: Modular commutator $J(A,B,C)$ versus $\eta$ in $p+ip$ topological superconductor.}
		\label{fig:J_2D_appendix}
	\end{figure}
	\subsection{Free fermion simulation of the $p+ip$ topological superconductor}
	In this Section we consider the real fermions (Majorana fermions) with $\mathbb{Z}_2$ fermionic parity symmetry. Let $\psi_{i}, ~i=1,2\cdots 2N$ be the Majorana operators with anticommutation relations $\{\psi_i,\psi_j\} = 2\delta_{ij}$, a Gaussian state $\rho$ is completely specified by the correlation matrix, which is a real skew-symmetric $2N\times 2N$ matrix
	\begin{equation}
		M_{ij} = \Tr(-i(\psi_{i}\psi_{j}-\delta_{ij})\rho).
	\end{equation}
	It can be block diagoalized by an orthogonal matrix $O\in O(2N)$,
	\begin{equation}
		M = O (\text{diag}(n_i) \otimes (i\sigma^y)) O^{T},
	\end{equation}
	where $\sigma^y$ is the Pauli operator, and $n_i\in [-1,1]$. Similar to the complex fermion case one can write the modular Hamiltonian as
	\begin{equation}
		K = \frac{i}{2}\sum_{ij}K_{ij} \psi_i \psi_j,
	\end{equation}
	where $K_{ij}$ is a real skew-symmetric matrix
	\begin{equation}
		K = O \left(\text{diag}\left(\frac{1}{2}\ln \frac{1-n_i}{1+n_i}\right) \otimes (i\sigma^y) \right) O^{T},
	\end{equation}
	One may obtain the modular Hamiltonian of any subsystem by block diagonalizing submatrices of $M$. The modular commutator can be computed by
	\begin{equation}
		J(A,B,C) = \Tr([K_{AB},K_{BC}] M_{ABC} ).
	\end{equation}
	As a concrete application, we consider the BdG $p$-wave superconductor Hamiltonian \cite{Read2000,Bernevig_2013}
	\begin{align}
		H = \sum_{m,n} &-t(c^{\dagger}_{m+1,n} c_{m,n} + c^{\dagger}_{m,n+1} c_{m,n} + h.c.)-(\mu-4t)c^{\dagger}_{m,n}c_{m,n} \\
		&+ (\Delta c^{\dagger}_{m+1,n}c^{\dagger}_{m,n} +\Delta^{*} c_{m,n}c_{m+1,n}) + (i\Delta c^{\dagger}_{m,n+1}c^{\dagger}_{m,n} -i\Delta^{*} c_{m,n}c_{m,n+1})
	\end{align}
	of $L_x L_y$ complex fermions. 
	This can be recast into a Majorana fermion Hamiltonian with $2L_x L_y$ Majorana fermions using
	\begin{equation}
		\psi_{m,n,1} = c^{\dagger}_{m,n} + c_{m,n},~~ \psi_{m,n,2} = i(c^{\dagger}_{m,n} - c_{m,n}).
	\end{equation}
	We take the boundary condition in the $x$ direction to be Neveu-Schwarz (NS), and $y$ direction to be open. In this setting there is no Majorana zero mode and the ground state is unique. We choose $t=1, \Delta = 0.5$ and $\mu=1$ such that the system is in the topologically nontrivial phase. On the boundary there is a chiral Majorana fermion CFT in the NS sector, with $c=1/2$. In the actual finite-size simulations we choose $L_x=36$ and $L_y=20$ and the width of strips $A,B,C$ to be $w=8$; these strips are placed near the lower boundary and we varied the horizontal lengths of the strips. We find perfect agreement with the analytical result $J(A,B,C) = \pi c \eta/3$, see Fig.~\ref{fig:J_2D_appendix}.

	\subsection{Chiral thermal state on the lattice}
	We can also construct a chiral thermal states for free fermion systems directly on a $1+1$D lattice without going to $2+1$D. Consider a one-dimensional complex fermion chain in infinite space with Hamiltonian
	\begin{equation}
		H = -t \sum_{i} (c^{\dagger}_i c_{i+1} + h.c.)
	\end{equation}
	The Hamiltonian can be diagonalized using Fourier modes $c(k) = \sum_j c_j e^{-ikj} $,
	\begin{equation}
		H = \int_{-\pi}^{\pi} dk\, \epsilon(k) c^{\dagger}(k) c(k),
	\end{equation}
	where 
	\begin{equation}
		\epsilon(k) = -2t \cos k.
	\end{equation}
	The momenta near the Fermi points $k=\pm \pi/2$ determine the low-energy CFT, which is a $1+1$D free Dirac fermion. We will choose $t=1/2$ to normalize the Hamiltonian such that the speed of light is one. A quantum state is completely specified by the expectation value of $n(k) = c^{\dagger}(k) c(k)$, where for the ground state we have $n(k) = \Theta(-\epsilon(k))$ and $\Theta(x)$ is the step function. For a chiral thermal state with inverse temperatures $(\beta_L,\beta_R)$, we expect two Fermi-Dirac distributions depending on the chirality, 
	\begin{equation}
		n(k) = \left\{ \begin{array}{ll}
			1/({1+e^{\beta_L \epsilon(k)}}) &\mbox{ if $k<0$}, \\
			1/({1+e^{\beta_R \epsilon(k)}}) &\mbox{ if $k>0$}
		\end{array} \right.
	\end{equation}
	This completely determines the state. In particular, we can write down the correlation matrix
	\begin{equation}
		C_{mn} = \frac{1}{2\pi}\int_{-\pi}^{\pi} dk\, n(k) e^{ik(n-m)},
	\end{equation}
	from which we compute the modular commutators. We note that this construction only works well if $\beta_L,\beta_R \gg t^{-1}$. This is because the system is no longer described by a CFT at high energies. At finite sizes the same construction also works but one needs to substitute the integrals in $k$ by a finite sum over discrete momenta. The CFT computation in the main text implies the modular commutator 
	\begin{equation}
		J(A,B,C) = \frac{\pi}{3} (\eta^{\beta_L}_{\mathrm{eff}}(L_A,L_B,L_C)-\eta^{\beta_R}_{\mathrm{eff}}(L_A,L_B,L_C)),
	\end{equation}
	where $\eta^{\beta}_{\mathrm{eff}}$ is defined in Eq.~\eqref{eq:eta_eff_appendix}.
	As $\beta\rightarrow\infty$, $\eta^{\beta}_{\mathrm{eff}}\rightarrow (x_{12}x_{34})/(x_{13}x_{24}) =\eta$. In the limit in which all the intervals are significantly larger than $\beta$, $\eta^{\beta}_{\mathrm{eff}}\rightarrow 0$. Below we study two examples, both with $t=1/2$. (1) $(\beta_L,\beta_R) = (\infty,80)$, and (2) $(\beta_L,\beta_R)=(82,78)$. The first example is relevant to the edge of chiral topological order, and the second example is close to the time reversal invariant case so we can expect $J$ to be small. In the first case we also study two ranges of the subsystem sizes which are much larger or comparable to $\beta_R$. If all sizes are large compared to $\beta_R$ then we get $J= \pi c \eta/3$, like on the edge of chiral topological order. Our numerical result perfectly agrees with the predictions.
	\begin{figure}
		\centering
		\subfloat[$(\beta_L,\beta_R)=(\infty,80),L_A=L_C=320,120\leq L_B\leq 800$]{\includegraphics[width=0.32\linewidth]{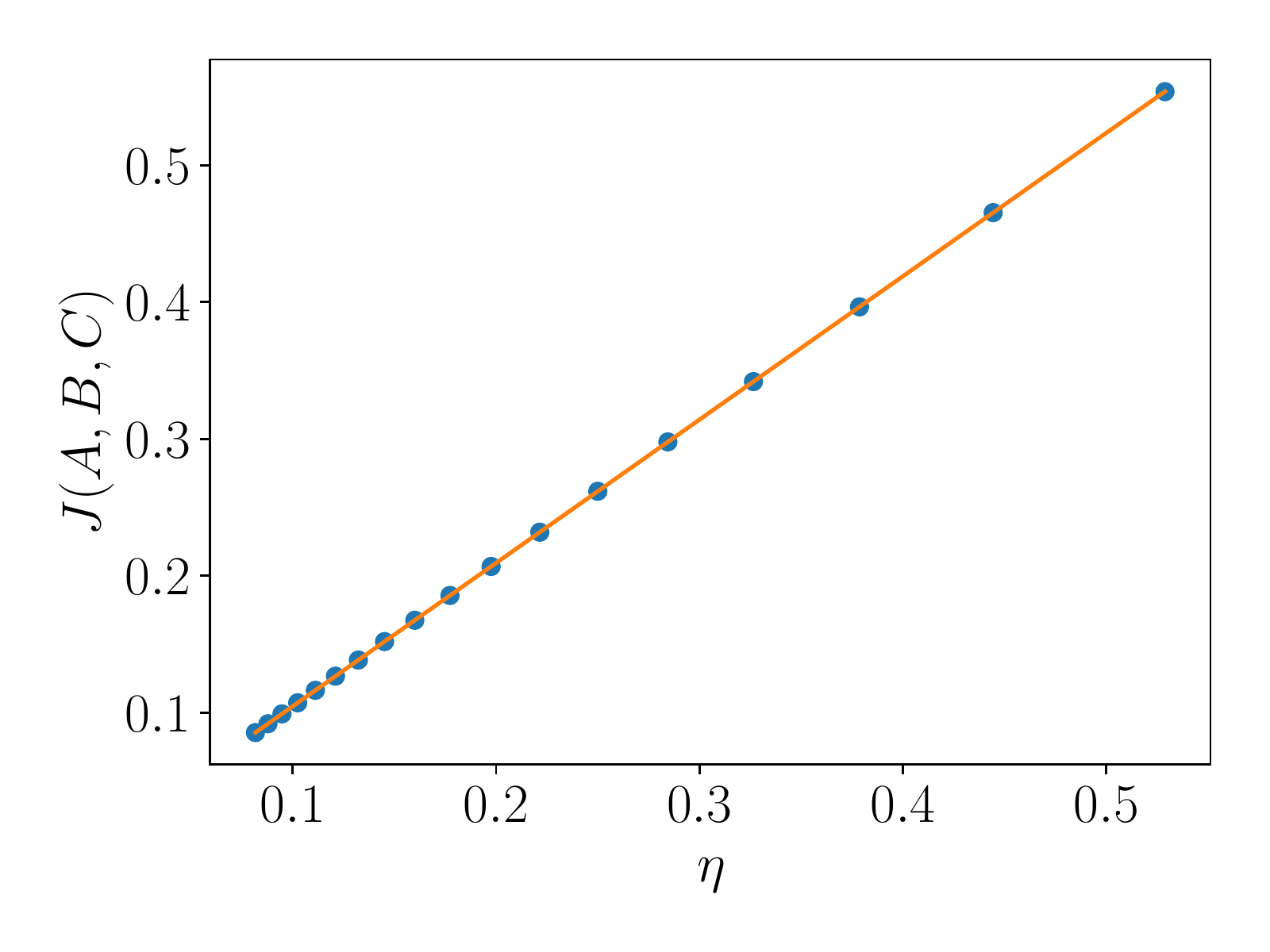}}
		\subfloat[$(\beta_L,\beta_R)=(\infty,80),L_A=L_C=40,10\leq L_B\leq 80$]{\includegraphics[width=0.32\linewidth]{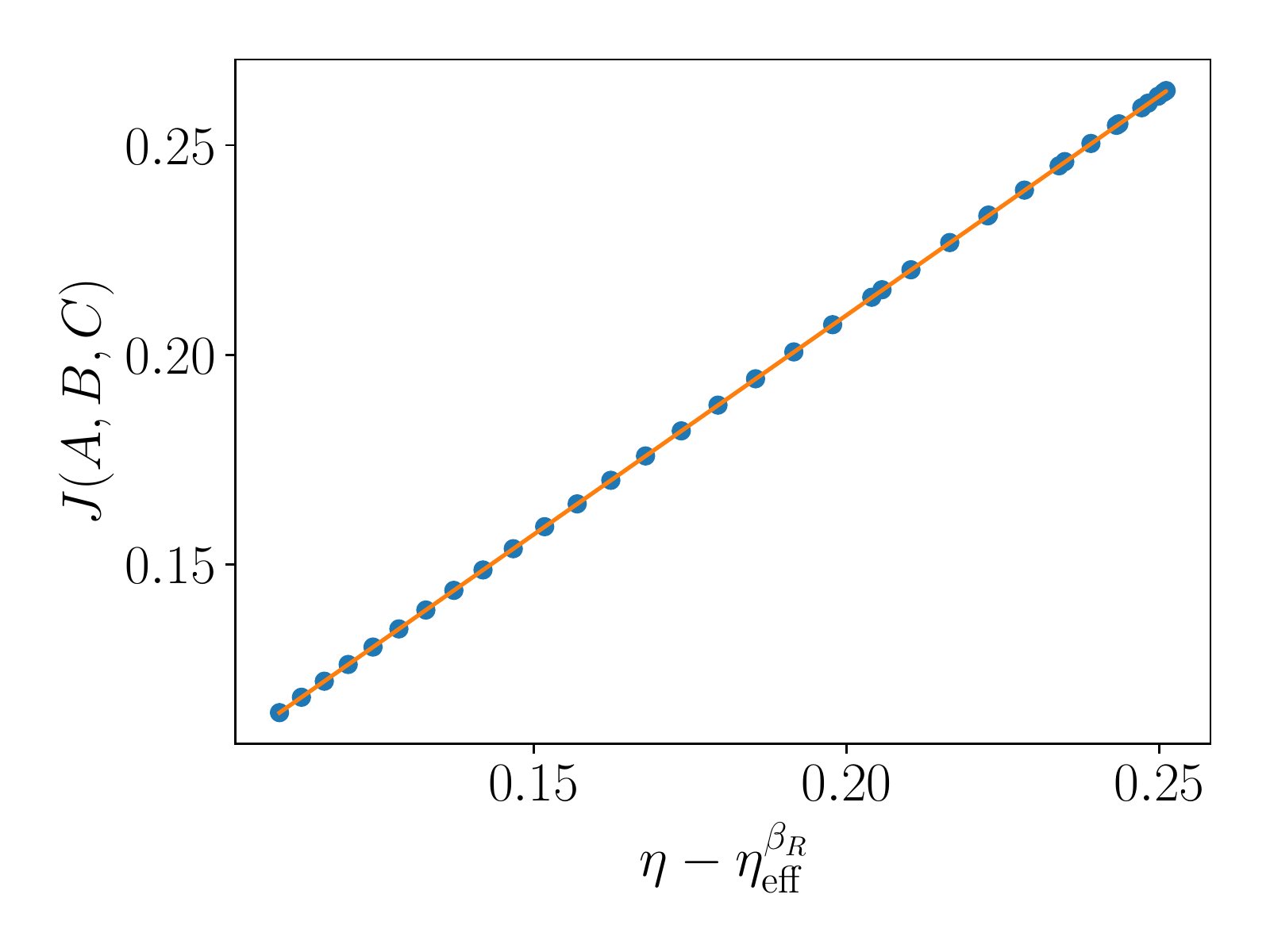}}
		\subfloat[$(\beta_L,\beta_R)=(82,78),L_A=L_C=40,10\leq L_B\leq 80$]{\includegraphics[width=0.32\linewidth]{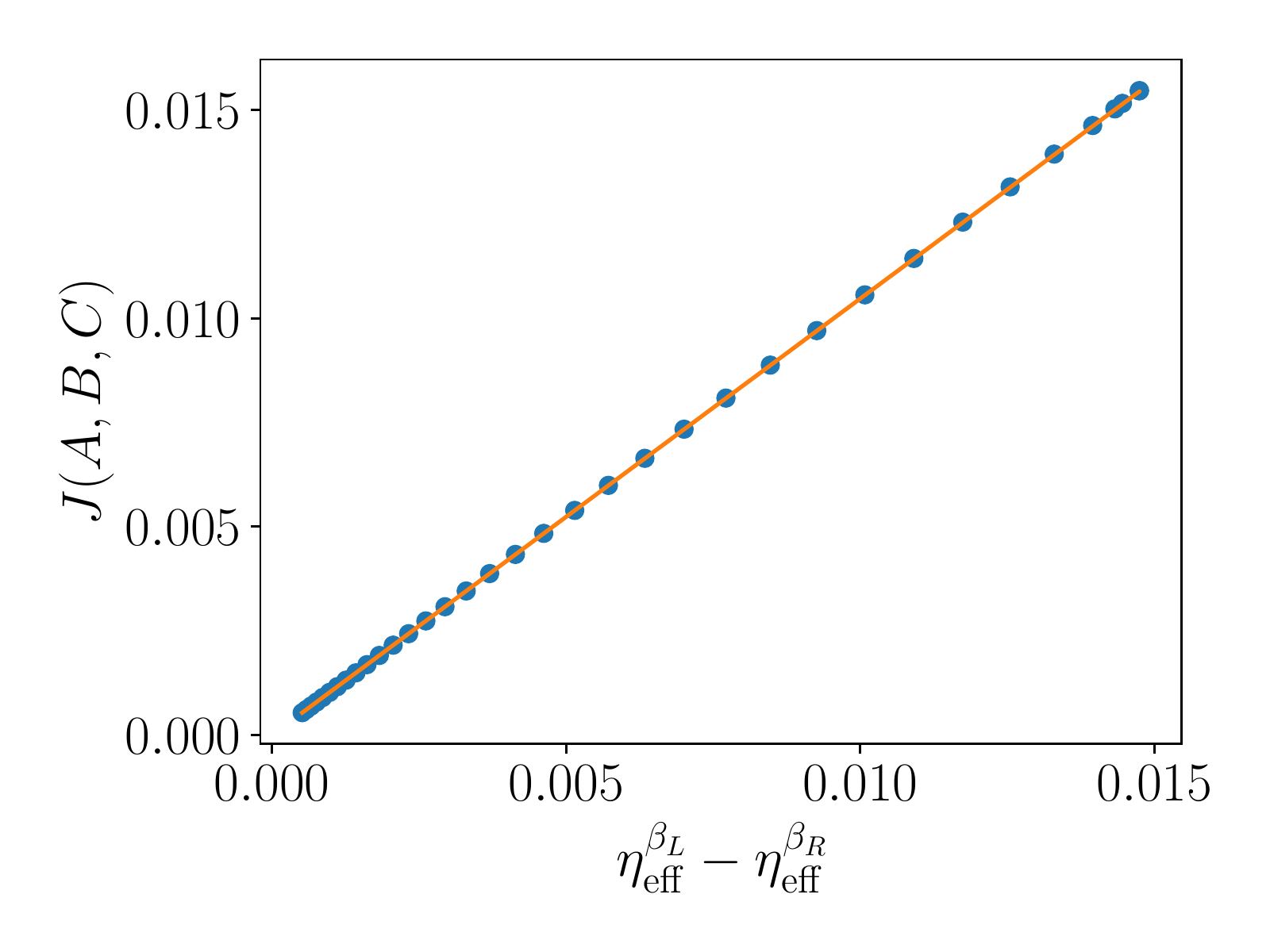}}
		\caption{Modular commutators in chiral thermal state of free fermion CFT on an infinite line.}
		\label{fig:chiral_thermal}
	\end{figure}
	
\end{document}